\documentclass[preprint,12pt,authoryear]{elsarticle}

  \usepackage[margin=1in]{geometry}
\usepackage{amssymb,amsmath}
\usepackage{txfonts}
 \usepackage{url}
\usepackage[table]{xcolor}
 \usepackage{setspace} 

 \usepackage{float}
\usepackage[hidelinks]{hyperref}

 \usepackage{multicol}
 \usepackage{multirow}
\usepackage{graphicx}
\usepackage{subcaption}
\usepackage{booktabs,longtable,pdflscape,array,tabularx}

\usepackage[plain,noend]{algorithm2e}
\usepackage{algpseudocode} 
\usepackage{bm}

\graphicspath{{./art/}{./figures/}}
\def\l{\langle}

\journal{Environmetrics}
\newtheorem{prop}{Proposition}

\newtheorem{cor}{Corollary}
 \newtheorem{thm}{Theorem}
 \newtheorem{definition}{Definition}
 \newtheorem{lem}[thm]{Lemma}
 \newdefinition{rem}{Remark}
 \newproof{proof}{Proof}

\begin{document}

\begin{frontmatter}





\title{Multidimensional Integral Fractional Ornstein--Uhlenbeck Process with an Application to Animal Movement}
\author[label1]{J.H. Ram\'{\i}rez-Gonz\'{a}lez}
\ead{josehermenegildo.ramirezgonzalez@kaust.edu.sa}
\author[label1]{Erick A. Chac\'{o}n-Montalv\'{a}n}
\ead{erick.chaconmontalvan@kaust.edu.sa}
\author[label1]{Paula Moraga}
\ead{paula.moraga@kaust.edu.sa}
\author[label1]{Ying Sun}
\ead{ying.sun@kaust.edu.sa}

\affiliation[label1]{organization={CEMSE Division, Statistics Program, King Abdullah University of Science and Technology},
            city={Thuwal},
            postcode={23955-6900},
            state={Makkah},
            country={Saudi Arabia}}





\begin{abstract}
\setlength{\emergencystretch}{1em}
Fractional Ornstein--Uhlenbeck (fOU) processes model temporal dependence and memory, including long-range dependence, while retaining the classical Ornstein--Uhlenbeck process as a special case. We extend the integral fractional Ornstein--Uhlenbeck (ifOU) process to a multidimensional setting for animal telemetry. Longitude, Latitude, and Altitude velocities are represented by coordinate-specific fOU processes driven by a multivariate fractional Brownian motion, allowing each coordinate to retain its own damping, scale, and Hurst parameters. We establish covariance validity, characterize the admissible cross-correlation region, and derive the asymptotic behavior of cross-covariances and separated increments. We develop procedures for finite-dimensional simulation, Gaussian likelihood inference, and conditional velocity reconstruction. Replicated simulations examine estimation of cross-coordinate correlations, while joint estimation of the complete parameter vector is illustrated for one three-dimensional trajectory. The proposed model is applied to telemetry records from five common noctule bats migrating in Germany, including three trajectories with Altitude measurements.
\end{abstract}

\begin{keyword}
Animal tracking \sep Gaussian processes \sep Long-range dependence \sep Telemetry data


\MSC[2020] 60G15 \sep 92Fxx 
\end{keyword}

\end{frontmatter}
\enlargethispage{-0.73in}

\doublespacing

\section{Introduction}

Animal telemetry provides information on how individuals use space and how movement changes over time. Its statistical analysis is complicated by the fact that locations are recorded at discrete and often irregular times, whereas movement evolves continuously. Continuous-time velocity models address this mismatch by representing position as the temporal integral of velocity. This formulation separates movement dynamics from the observation schedule and permits likelihood-based inference directly from recorded locations \citep{John,gurarie,Hooten,Sarkka}.

A foundational model in this class is the correlated velocity model of \citet{John}, in which velocity follows an Ornstein--Uhlenbeck (OU) process and position is obtained by integration. The model is Gaussian and computationally tractable, but the covariance of the OU velocity decays exponentially, resulting in short-range temporal dependence. This restriction may be limiting when movement decisions depend on previously visited locations, as represented explicitly in memory-based movement models \citep{Smouse2010}.

Fractional Ornstein--Uhlenbeck processes provide a natural extension because the fractional Brownian driving noise introduces a Hurst parameter that allows a broader range of temporal dependence \citep{fBM1}. Building on the integrated OU formulation, \citet{JY} introduced the integral fractional Ornstein--Uhlenbeck (ifOU) process for animal telemetry and developed inference from the finite-dimensional distribution of the position process. Their application to fin whale trajectories in the Gulf of California selected an ifOU Latitude model with $H>1/2$, corresponding to the long-memory regime under the covariance-summability criterion, rather than the classical case $H=1/2$. This result indicated that the additional temporal flexibility of the ifOU model can be supported by telemetry data. The analysis, however, treated Longitude and Latitude independently, leaving dependence between spatial coordinates unmodeled.

Continuing the study of models with long-range dependence for animal movement, \citet{RamirezEtAl2026} introduced a family of Gaussian processes that are neither stationary nor intrinsically stationary and whose covariance with remote future positions grows logarithmically. The models were applied separately to the Longitude and Latitude records of five common noctule bats migrating in Germany. Across the ten coordinate series, the empirical diagnostics did not uniformly support either weak stationarity of the recorded positions or stationarity of their first differences, while detrended fluctuation analysis indicated dependence over large temporal scales. Scaled fractional Brownian motion attained the smallest AIC in five of the ten coordinatewise comparisons, the nonstationary logarithmic-covariance model in three, and the ifOU and stationary Confluent models in one comparison each. Although Altitude was available for three individuals, the analysis was restricted to Longitude and Latitude and assumed independence between coordinates.

These results motivate a joint model that represents temporal dependence and dependence between coordinates simultaneously. A multidimensional extension cannot, in general, be obtained by assigning arbitrary pairwise correlations to separate ifOU processes. When the coordinates have different Hurst parameters, the correlations must satisfy joint restrictions to ensure that the covariance remains valid for every collection of observation times. A statistically usable extension therefore requires both a valid joint covariance construction and a parametrization that enforces these restrictions during likelihood optimization.

In this article, we introduce a multidimensional integral fractional Ornstein--Uhlenbeck (mifOU) process driven by a well-balanced multivariate fractional Brownian motion. Here, well-balanced means that each cross-covariance is unchanged when its two time arguments are interchanged. Each coordinate has its own damping, scale, and Hurst parameters, while dependence between coordinates is introduced through the multivariate driving process. We prove that the resulting joint covariance is valid, characterize the admissible correlation region, and show that the mifOU process is neither stationary nor intrinsically stationary. For correlated coordinate pairs, the process exhibits long-range dependence whenever the sum of the corresponding Hurst parameters differs from one. We use an LKJ-type partial-correlation parametrization that maps unconstrained optimization variables directly into the admissible correlation region, so every likelihood evaluation uses a valid covariance without a separate positive-definiteness check \citep{LJK}. The finite-dimensional Gaussian distribution then provides the basis for maximum likelihood inference, finite-dimensional simulation, and conditional velocity reconstruction from position observations.

The methodology is applied to the five bat trajectories studied by \citet{RamirezEtAl2026}. We retain the coordinatewise analyses of Longitude and Latitude, incorporate Altitude for the three individuals for which it is observed, and include empirical summaries of movement direction and association between coordinates. These summaries are not interpreted as estimates of the model correlation parameters, but provide a separate description of the observed movement. The application combines complete-model, horizontal-only, and matched-family comparisons so that differences in marginal covariance fit can be distinguished from evidence for cross-coordinate dependence.

The remainder of the article is organized as follows. Section~\ref{SSec2} defines the mifOU process, establishes its covariance and dependence properties, develops the admissible-correlation parametrization, and presents the procedures for likelihood inference, finite-dimensional simulation, and conditional velocity reconstruction. Section~\ref{SSec3} describes the bat data, the empirical diagnostics, the marginal and complete-trajectory model comparisons, and the principal findings. Section~\ref{SSec4} concludes the article. Appendix~\ref{ApenA} contains the proofs of Propositions~1--4, Appendix~\ref{ApenB} proves Theorem~2, Sections~\ref{sub:joint-estimation-supp}--\ref{sub:computational-scaling-supp} of Appendix~\ref{sec2.4} report the numerical experiments and computational considerations, and Sections~\ref{sub:altitude-diagnostics}--\ref{sub:complete-trajectory-supp} of Appendix~\ref{ApenD} provide the supporting empirical analyses.

\section{The multidimensional integral fractional Ornstein--Uhlenbeck process}\label{SSec2}


\subsection{Model definition and covariance construction}\label{sub:mifou_definition}

Let
\[
\mu_H(t)=\big[\mu_{H_1}(t),\ldots,\mu_{H_p}(t)\big]^\intercal
\quad\text{and}\quad
v_H(t)=\big[v_{H_1}(t),\ldots,v_{H_p}(t)\big]^\intercal
\]
denote the position and velocity of an animal at time \(t\), respectively, where \(p\) is the number of observed coordinates. For each coordinate, position is obtained by integrating velocity:
\begin{equation}\label{eq:position}
\mu_{H_i}(t)
=
\mu_{H_i}(0)+\int_0^t v_{H_i}(s)\,ds,
\qquad i=1,\ldots,p.
\end{equation}

Following \citet{JY}, we model the velocity in coordinate \(i\) by a fractional Ornstein--Uhlenbeck (fOU) process:
\begin{equation}\label{eq:velocity}
dv_{H_i}(t)
=
-\beta_i v_{H_i}(t)\,dt+\sigma_i\,dW_{H_i}(t),
\end{equation}
where \(\beta_i>0\), \(\sigma_i>0\), and \(W_{H_i}\) is a fractional Brownian motion with Hurst parameter \(H_i\). Previous coordinatewise ifOU analyses treated the driving processes independently. Here, the components of
\[
W_H=\big(W_{H_1},\ldots,W_{H_p}\big)^\intercal
\]
may be correlated, subject to the validity of their joint covariance.

By \citet[Proposition~2.1]{JY},
\begin{equation}\label{muH2}
\mu_{H_i}(t)
=
\mu_{H_i}(0)
+
v_{H_i}(0)\frac{1-e^{-\beta_i t}}{\beta_i}
+
\sigma_i\int_0^t e^{\beta_i(u-t)}W_{H_i}(u)\,du,
\qquad i=1,\ldots,p.
\end{equation}
Under deterministic initial conditions, \eqref{muH2} gives
\[
\operatorname{cov}\big\{\mu_{H_i}(s),\mu_{H_j}(t)\big\}
=
\sigma_i\sigma_j
\int_0^s\!\int_0^t
e^{\beta_i(u-s)}
\mathbb{E}\big[W_{H_i}(u)W_{H_j}(v)\big]
e^{\beta_j(v-t)}
\,dv\,du,
\qquad i,j=1,\ldots,p.
\]
The following proposition establishes that this construction yields a valid matrix-valued covariance kernel. Its proof is given under ``Proof of Proposition~1'' in Appendix~\ref{ApenA}.

\begin{prop}\label{Propos_mucov}
Under the assumptions of \citet[Theorem~2.1]{Lavancier}, let
\begin{equation}\label{K_kernel}
K_{i,j}(u,v)
:=
\mathbb{E}\big[W_{H_i}(u)W_{H_j}(v)\big],
\qquad
u,v\geq0,\quad i,j=1,\ldots,p,
\end{equation}
so that \(K=(K_{i,j})_{i,j=1}^p\) is a matrix-valued covariance kernel on
\(\mathbb{R}_+\times\mathbb{R}_+\). For \(s,t\geq0\) and \(i,j=1,\ldots,p\), define
\begin{equation}\label{cfR}
R_{i,j}(s,t)
:=
\sigma_i\sigma_j
\int_0^s\!\int_0^t
e^{\beta_i(u-s)}
K_{i,j}(u,v)
e^{\beta_j(v-t)}
\,dv\,du,
\end{equation}
where \(\sigma_i,\beta_i>0\) for \(i=1,\ldots,p\). Then
\(R=(R_{i,j})_{i,j=1}^p\) is a matrix-valued covariance kernel on
\(\mathbb{R}_+\times\mathbb{R}_+\).
\end{prop}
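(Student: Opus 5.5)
We need to show that for any $n$, times $t_1,\dots,t_n\ge0$, indices $i_1,\dots,i_n\in\{1,\dots,p\}$ and real coefficients $c_1,\dots,c_n$, we have $\sum_{k,l} c_k c_l R_{i_k,i_l}(t_k,t_l)\ge0$. We also need the symmetry $R_{i,j}(s,t)=R_{j,i}(t,s)$. Equivalently, we need positive semidefiniteness of every block Gram matrix formed from $R$. The plan is to exhibit $R$ as the covariance of an explicitly constructed Gaussian vector process. Validity then follows without any direct computation.

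First, under the assumptions of \citet[Theorem~2.1]{Lavancier} on the correlation and cross-parameters, $K$ is the covariance of a genuine $p$-variate fractional Brownian motion $W_H$. In particular $K$ is symmetric, $K_{i,j}(u,v)=K_{j,i}(v,u)$, and positive semidefinite as a matrix kernel. Each component $W_{H_i}$ is a (modification with) continuous-path process on $\mathbb{R}_+$ with $\mathbb{E}[W_{H_i}(u)^2]=u^{2H_i}$. Hence the pathwise Riemann integral
\[
Y_i(s):=\sigma_i\int_0^s e^{\beta_i(u-s)}W_{H_i}(u)\,du
\]
is well defined, and it is a centered Gaussian random variable. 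It arises as an $L^2$ limit of Riemann sums, which are finite linear combinations of the jointly Gaussian $W_{H_i}(u)$.

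Second, we compute $\mathbb{E}[Y_i(s)Y_j(t)]$ by Fubini's theorem. The integrand is dominated by
\[
\sigma_i\sigma_j\,\mathbb{E}|W_{H_i}(u)W_{H_j}(v)|\le \sigma_i\sigma_j\,u^{H_i}v^{H_j},
\]
using Cauchy--Schwarz, and this bound is integrable on $[0,s]\times[0,t]$. So interchanging expectation and the double integral is justified, and the result is exactly $R_{i,j}(s,t)$ in \eqref{cfR}. Thus $R$ is the matrix covariance of $(Y_1,\dots,Y_p)$, and so
\[
\sum_{k,l}c_kc_lR_{i_k,i_l}(t_k,t_l)=\operatorname{var}\Big(\sum_k c_kY_{i_k}(t_k)\Big)\ge0 .
\]
Symmetry is inherited from that of $K$ after swapping the integration variables.

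As a check that avoids path regularity, we could instead argue analytically. For fixed $s_k$, approximate each integral by Riemann sums. The quadratic form $\sum c_kc_l R$ is then a limit of quadratic forms
\[
\sum c_k c_l\, w_{k,a} w_{l,b}\, K_{i_k,i_l}(u_a,v_b),
\]
with nonnegative weights $w_{k,a}$ coming from the exponential factors. Each of these approximating forms is $\ge0$ by positive semidefiniteness of $K$. The convergence of the Riemann sums holds because $K$ is continuous on compacts. The main (mild) obstacle is this justification of the limit interchange, i.e.\ Fubini or Riemann-sum convergence. It is handled by the continuity of $K_{i,j}$ in $(u,v)$, which follows from its explicit power-law form in \citet{Lavancier}, together with the bound $|K_{i,j}(u,v)|\le u^{H_i}v^{H_j}$.
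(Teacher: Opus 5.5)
Your proposal is correct and follows essentially the same route as the paper. Both realize $R$ as the covariance kernel of the integrated process $Y_i(s)=\sigma_i\int_0^s e^{\beta_i(u-s)}W_{H_i}(u)\,du$, justify the interchange of expectation and integration by Fubini's theorem using a bound on $K$ over compact rectangles, and obtain symmetry and positive semidefiniteness from $\operatorname{var}\bigl(\sum_k c_k Y_{i_k}(t_k)\bigr)\ge 0$. The only difference is technical: the paper takes a mean-square integral of a second-order process, which needs only local boundedness of $K$, whereas you take a pathwise Riemann integral of a continuous Gaussian modification with the Cauchy--Schwarz bound $|K_{i,j}(u,v)|\le K_{i,i}(u,u)^{1/2}K_{j,j}(v,v)^{1/2}$, which is equally valid here.
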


\begin{definition}\label{defmu3D}
For deterministic initial positions and velocities, the multidimensional integral fractional Ornstein--Uhlenbeck (mifOU) position process is the \(p\)-variate Gaussian process
\[
\mu_H=(\mu_{H_1},\ldots,\mu_{H_p})
\]
with mean functions
\[
m_i(t)
=
\mu_{H_i}(0)
+
v_{H_i}(0)\frac{1-e^{-\beta_i t}}{\beta_i},
\qquad i=1,\ldots,p,
\]
and cross-covariances
\[
\operatorname{cov}\big\{\mu_{H_i}(s),\mu_{H_j}(t)\big\}
=
R_{i,j}(s,t),
\qquad
s,t\geq0,\quad i,j=1,\ldots,p,
\]
where \(R_{i,j}\) is defined in \eqref{cfR}.
\end{definition}

The dependence between the position coordinates is determined by the cross-covariances of the driving multivariate fractional Brownian motion through the integral transformation in \eqref{cfR}. In particular, when the components \((W_{H_i})_{i=1}^p\) are independent, the position coordinates \((\mu_{H_i})_{i=1}^p\) are independent.

Although Proposition~\ref{Propos_mucov} applies to a general multivariate fractional Brownian motion, verifying the conditions of \citet[Theorem~2.1]{Lavancier} can be difficult during likelihood evaluation. We therefore adopt the \emph{well-balanced} specification
\[
\mathbb{E}\big[W_{H_i}(s)W_{H_j}(t)\big]
=
\mathbb{E}\big[W_{H_i}(t)W_{H_j}(s)\big],
\qquad s,t\in\mathbb{R},
\]
following \citet{Lavancier,Pierre}. Thus, ``well-balanced'' refers here to symmetry of each cross-covariance in its two time arguments. Under this specification, the cross-covariance has the explicit form
\begin{equation}\label{k}
K_{i,j}(s,t)
=
\frac{\rho_{i,j}}{2}
\left(
|s|^{H_i+H_j}
+
|t|^{H_i+H_j}
-
|t-s|^{H_i+H_j}
\right),
\end{equation}
where \(\rho_{i,i}=1\) and \(\rho_{i,j}=\rho_{j,i}\). In particular, \(K_{i,i}\) is the usual fractional Brownian motion covariance and
\(K_{i,j}(1,1)=\rho_{i,j}\). The validity of the joint covariance is then characterized by a finite-dimensional condition.

\begin{lem}\label{lem2.2}
Let
\[
\Gamma_{i,j}
:=
\Gamma(H_i+H_j+1)
\sin\!\left(\frac{\pi}{2}(H_i+H_j)\right).
\]
Then the matrix-valued kernel \(K=(K_{i,j})_{i,j=1}^p\) defined in \eqref{k} is a possibly degenerate covariance kernel if and only if the \(p\times p\) matrix
\[
Q_{H_1,\ldots,H_p}(\rho)_{ij}
=
\begin{cases}
\Gamma_{i,i}, & i=j,\\[2pt]
\rho_{i,j}\Gamma_{i,j}, & i\neq j,
\end{cases}
\]
is positive semidefinite. In the likelihood analysis, we restrict attention to the nondegenerate interior, where
\[
Q_{H_1,\ldots,H_p}(\rho)\succ0.
\]
\end{lem}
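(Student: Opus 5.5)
The plan is to pass to the spectral representation of the increments and reduce positive semidefiniteness of the kernel to positive semidefiniteness of a constant matrix. Each component of \(W_H\) has stationary increments, and the cross-covariance \eqref{k} depends on \((s,t)\) only through \(|s|^{\alpha}+|t|^{\alpha}-|t-s|^{\alpha}\) with \(\alpha=H_i+H_j\in(0,2)\). The basic identity, valid for \(\alpha\in(0,2)\), \(\alpha\neq1\), is
\[
|s|^{\alpha}+|t|^{\alpha}-|t-s|^{\alpha}
=
c_\alpha\int_{\mathbb{R}}\frac{(e^{isx}-1)\overline{(e^{itx}-1)}}{|x|^{\alpha+1}}\,dx,
\qquad
c_\alpha^{-1}=\frac{\pi}{\Gamma(\alpha+1)\sin(\pi\alpha/2)}.
\]
The case \(\alpha=1\) follows by continuity, since \(\Gamma(2)\sin(\pi/2)=1\). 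This identity comes from \(\int_{\mathbb{R}}(1-\cos(ux))|x|^{-\alpha-1}dx=|u|^{\alpha}\pi/\{\Gamma(\alpha+1)\sin(\pi\alpha/2)\}\), which is a standard Gamma-function computation. Consequently
\[
K_{i,j}(s,t)=\frac{1}{2\pi}\int_{\mathbb{R}}(e^{isx}-1)\overline{(e^{itx}-1)}\,\frac{\rho_{i,j}\Gamma_{i,j}}{|x|^{H_i+H_j+1}}\,dx,
\]
with the convention \(\rho_{i,i}=1\). In matrix form the spectral density is \(f(x)=(2\pi)^{-1}D(x)\,Q_{H_1,\ldots,H_p}(\rho)\,D(x)\), where \(D(x)=\operatorname{diag}(|x|^{-H_i-1/2})\).

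\textbf{Sufficiency.} Suppose \(Q\succeq0\). Then \(f(x)\succeq0\) for every \(x\), being a congruence of \(Q\) by a real diagonal matrix. For any times \(t_1,\ldots,t_n\), indices \(i_1,\ldots,i_n\) and real coefficients \(a_1,\ldots,a_n\), set \(g(x)=\sum_k a_k(e^{it_kx}-1)e_{i_k}\). Then
\[
\sum_{k,l}a_ka_lK_{i_k,i_l}(t_k,t_l)=\int_{\mathbb{R}} g(x)^{*}f(x)g(x)\,dx\ge0 .
\]
The integral converges because \(|e^{itx}-1|^2\le\min(t^2x^2,4)\) and the exponents satisfy \(H_i+H_j+1\in(1,3)\). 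Hence \(K\) is a covariance kernel. Negative times are handled identically, since the identity holds on all of \(\mathbb{R}\).

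\textbf{Necessity.} This is the main obstacle, because positive semidefiniteness of the kernel has to be turned back into a pointwise spectral statement. Suppose \(K\) is positive semidefinite. Consider the stationary increment vectors \(Y_\delta(t)=W_H(t+\delta)-W_H(t)\). Their matrix spectral density is \(|e^{i\delta x}-1|^2f(x)\). The uniqueness of Bochner's representation for matrix-valued positive definite functions then forces \(|e^{i\delta x}-1|^2f(x)\succeq0\) for almost every \(x\), hence \(f(x)\succeq0\) for almost every \(x\). For this step I will need the integrability of the density to identify the spectral measure, which holds for \(\delta\) fixed. Evaluating at \(x=1\), where \(D(1)=I\), gives \(Q\succeq0\). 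By continuity of \(f\) on \(x\neq0\), this holds for every \(x\), and the value at \(x=1\) is all that is needed.

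An alternative route for necessity would be to test the quadratic form against functions \(g\) concentrated near \(x=1\). This can be done by choosing coefficients that approximate a smooth bump through Fourier sums of \(e^{itx}-1\). The Bochner argument is cleaner, so I would use it, and cite \citet{Lavancier} for the identification of the spectral density where convenient. The final sentence of the statement, restricting to \(Q\succ0\), is a modelling convention and needs no proof.
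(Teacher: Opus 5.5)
Your proof is correct. The paper does not prove this lemma itself: it imports the condition from \citet{Lavancier} and \citet{Pierre}, where it is obtained through the spectral density of the increment process. Your argument is therefore a self-contained reconstruction of the cited route, specialized to the well-balanced kernel \eqref{k}.

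The ingredients check out:
\begin{itemize}
\item The identity $\int_{\mathbb{R}}(1-\cos(ux))|x|^{-\alpha-1}\,dx=\pi|u|^{\alpha}/\{\Gamma(\alpha+1)\sin(\pi\alpha/2)\}$ holds for every $\alpha\in(0,2)$, including $\alpha=1$, so your separate continuity argument at $\alpha=1$ is unnecessary.
\item The increments have cross-covariance $\int_{\mathbb{R}}e^{i(t-s)x}|e^{i\delta x}-1|^{2}f(x)\,dx$ with an integrable density. Cram\'er's matrix form of Bochner's theorem, together with uniqueness of Fourier--Stieltjes transforms, then gives $f\succeq0$ almost everywhere.
\item Stationarity involves only time differences, so this step works whether the kernel is read on $\mathbb{R}$ or only on $\mathbb{R}_+$.
\end{itemize}

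The cited results are more general. Without time reversibility an antisymmetric term enters the cross-covariance and the relevant matrix becomes complex Hermitian. The lemma, however, needs only the real symmetric case you treat. What your version adds is the explicit congruence $f(x)=(2\pi)^{-1}D(x)QD(x)$. This makes transparent that the interior $Q\succ0$ used for likelihood is exactly the set where the spectral density is positive definite at every $x\neq0$.

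One simplification: the necessity step does not need continuity of $f$. Since $D(x)$ is invertible for every $x\neq0$, positive semidefiniteness of $f$ at any single point of the full-measure set already gives $Q\succeq0$.
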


Lemma~\ref{lem2.2} provides the finite-dimensional restriction required for a nondegenerate well-balanced multivariate fractional Brownian motion covariance \citep{Lavancier,Pierre}. In the three-dimensional case, \(p=3\), the condition
\(Q_{H_1,H_2,H_3}(\rho)\succ0\) is equivalent to the pairwise inequalities
\begin{equation}\label{assumption}
\rho_{i,j}^2
<
\frac{\Gamma_{i,i}\Gamma_{j,j}}{\Gamma_{i,j}^2},
\qquad
1\leq i<j\leq3,
\end{equation}
together with
\[
\det Q_{H_1,H_2,H_3}
(\rho_{1,2},\rho_{1,3},\rho_{2,3})>0.
\]
The determinant condition is equivalent to
\begin{equation}\label{ellipse}
\begin{aligned}
&
\Gamma_{1,1}\Gamma_{2,3}^2\rho_{2,3}^2
+
\Gamma_{2,2}\Gamma_{1,3}^2\rho_{1,3}^2
+
\Gamma_{3,3}\Gamma_{1,2}^2\rho_{1,2}^2
\\
&\qquad
-
2\Gamma_{1,2}\Gamma_{1,3}\Gamma_{2,3}
\rho_{1,2}\rho_{1,3}\rho_{2,3}
<
\Gamma_{1,1}\Gamma_{2,2}\Gamma_{3,3}.
\end{aligned}
\end{equation}

\begin{rem}\label{re:conic}
Fix \(H_1,H_2,H_3\) and one of the three correlation parameters. Under \eqref{assumption}, the boundary defined by equality in \eqref{ellipse} is an ellipse in the two remaining correlation parameters. The admissible set is
\[
\mathcal{R}_{H_1,H_2,H_3}
=
\left\{
\boldsymbol{\rho}\in(-1,1)^3:
Q_{H_1,H_2,H_3}(\boldsymbol{\rho})\succ0
\right\}.
\]
\end{rem}

Consequently, likelihood calculations do not require direct verification of the operator-valued conditions in \citet{Lavancier}. It is sufficient to impose
\[
Q_{H_1,\ldots,H_p}(\rho)\succ0.
\]
All covariance calculations, likelihood fits, simulations, and empirical analyses below use the well-balanced kernel in equation~\eqref{k}.

\subsection{Covariance and dependence properties}\label{sub:mifou_properties}

This subsection studies long-range dependence and short- and long-memory behavior for separated increments. The proofs are given in Appendix~\ref{ApenA}.

We use the dependence terminology adopted in \citet{RamirezEtAl2026}. For fixed \(0\leq r<\nu\) and \(0\leq s<t\), the coordinate pair \((i,j)\) is said to exhibit \emph{long-range dependence} (LRD) with index \(\kappa\in\mathbb{R}\) if
\begin{equation}\label{eq:LRD_definition}
\lim_{T\to\infty}T^\kappa\operatorname{cov}\!\left(\mu_{H_i}(\nu)-\mu_{H_i}(r),\mu_{H_j}(T+t)-\mu_{H_j}(T+s)\right)=C_{i,j}^{\,r,\nu;s,t},
\qquad 0<\left|C_{i,j}^{\,r,\nu;s,t}\right|.
\end{equation}
For the integer-lag covariance sequence
\begin{equation}\label{eq:gamma_definition}
\gamma_{i,j}^{\,r,\nu;s,t}(n):=\operatorname{cov}\!\left(\mu_{H_i}(\nu)-\mu_{H_i}(r),\mu_{H_j}(n+t)-\mu_{H_j}(n+s)\right),\qquad n\in\mathbb{N},
\end{equation}
the pair is said to have \emph{short memory} if \(\sum_{n\geq1}|\gamma_{i,j}^{\,r,\nu;s,t}(n)|<\infty\), and \emph{long memory} if \(\sum_{n\geq1}|\gamma_{i,j}^{\,r,\nu;s,t}(n)|=\infty\). These conventions are used throughout the article. Thus, LRD and long memory refer to distinct criteria here: the former is defined through a nonzero polynomial scaling limit, whereas the latter is defined through covariance summability.

\begin{prop}\label{Propos1}
Let \(i,j\in\{1,\ldots,p\}\) and \(s,t\geq0\). If the cross-covariance of the associated mfBm is well-balanced, that is,
\[
K_{i,j}(s,t)=K_{i,j}(t,s),\qquad s,t\geq0,
\]
and \(\beta_i=\beta_j\), then
\[
R_{i,j}(s,t)=R_{i,j}(t,s),\qquad s,t\geq0.
\]
\end{prop}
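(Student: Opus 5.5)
The plan is to verify the symmetry directly from the integral representation \eqref{cfR}, using only a relabelling of the integration variables together with the well-balanced hypothesis. Write $\beta:=\beta_i=\beta_j$. By definition,
\[
R_{i,j}(s,t)=\sigma_i\sigma_j\int_0^s\!\int_0^t e^{\beta(u-s)}K_{i,j}(u,v)\,e^{\beta(v-t)}\,dv\,du ,
\]
and
\[
R_{i,j}(t,s)=\sigma_i\sigma_j\int_0^t\!\int_0^s e^{\beta(u-t)}K_{i,j}(u,v)\,e^{\beta(v-s)}\,dv\,du .
\]
In the second expression I will rename the dummy variables, $u\leftrightarrow v$. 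This turns it into $\sigma_i\sigma_j\int_0^t\!\int_0^s e^{\beta(v-t)}K_{i,j}(v,u)e^{\beta(u-s)}\,du\,dv$, where now $u\in[0,s]$ and $v\in[0,t]$.

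Next I apply the well-balanced hypothesis $K_{i,j}(v,u)=K_{i,j}(u,v)$ to replace the kernel. After this replacement the integrand is exactly $e^{\beta(u-s)}K_{i,j}(u,v)e^{\beta(v-t)}$ over the rectangle $[0,s]\times[0,t]$, which is the integrand of $R_{i,j}(s,t)$. The equality $\beta_i=\beta_j$ is what makes the exponential weights match after the swap. Without it one would get $e^{\beta_j(u-s)}e^{\beta_i(v-t)}$ instead, and the two expressions would in general differ.

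The only point that needs justification, and which I regard as the (mild) main obstacle, is interchanging the order of integration via Fubini--Tonelli. It suffices to check that the integrand is absolutely integrable on the compact rectangle. The kernel $K_{i,j}$ is continuous on $[0,s]\times[0,t]$: in the explicit form \eqref{k} it is a combination of powers $|\cdot|^{H_i+H_j}$ with $H_i+H_j>0$, and in general Cauchy--Schwarz gives $|K_{i,j}(u,v)|\le u^{H_i}v^{H_j}$. Hence the integrand is bounded and the swap is legitimate. Combining the three steps yields $R_{i,j}(s,t)=R_{i,j}(t,s)$ for all $s,t\ge0$.
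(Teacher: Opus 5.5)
Your proof is correct and matches the paper's argument. The paper likewise interchanges $u$ and $v$ in the defining integral, applies $K_{i,j}(u,v)=K_{i,j}(v,u)$, and uses $\beta_i=\beta_j$ so that the exponential weights coincide; your explicit Fubini justification via the bound $|K_{i,j}(u,v)|\le u^{H_i}v^{H_j}$ is a step the paper leaves implicit.
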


The application does not impose \(\beta_i=\beta_j\). Consequently, \(R_{i,j}(s,t)\) need not be symmetric in its two time arguments, although the matrix-valued covariance satisfies \(R_{i,j}(s,t)=R_{j,i}(t,s)\).

\begin{prop}\label{Propos2}
Let \(s,t\geq0\) and write \(h_{ij}=H_i+H_j\). If \(h_{ij}\leq1\), then
\begin{equation}
R_{i,j}(s,t+T)\underset{T\to\infty}{\longrightarrow}
\begin{cases}
\displaystyle \frac{\sigma_i\sigma_j\rho_{i,j}}{2\beta_j}\int_0^s e^{-\beta_i u}(s-u)^{h_{ij}}\,du, & h_{ij}<1,\\[10pt]
\displaystyle \frac{\sigma_i\sigma_j\rho_{i,j}}{\beta_j}\int_0^s e^{-\beta_i u}(s-u)\,du, & h_{ij}=1.
\end{cases}
\end{equation}
If \(h_{ij}>1\), then
\begin{equation}
\frac{R_{i,j}(s,t+T)}{T^{h_{ij}-1}}\underset{T\to\infty}{\longrightarrow}
\frac{\sigma_i\sigma_j\rho_{i,j}}{2}\frac{h_{ij}}{\beta_j}\int_0^s e^{-\beta_i u}(s-u)\,du.
\end{equation}
\end{prop}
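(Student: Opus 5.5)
The approach is to change variables so that the large parameter \(T\) appears only inside the fractional Brownian motion kernel, and then take the limit under the integral by dominated convergence. Throughout, fix \(s,t\ge0\), write \(h=h_{ij}\) and \(\rho=\rho_{i,j}\), and use the well-balanced kernel \eqref{k}. In \eqref{cfR}, with \(t\) replaced by \(t+T\), substitute \(w=t+T-v\in[0,t+T]\). This gives
\[
R_{i,j}(s,t+T)=\frac{\sigma_i\sigma_j\rho}{2}\int_0^s e^{\beta_i(u-s)}\int_0^{t+T} e^{-\beta_j w}\Big(u^{h}+(t+T-w)^{h}-|t+T-w-u|^{h}\Big)\,dw\,du .
\]
I will then split the bracket into two pieces and handle each separately: the term \(u^h\), and the increment \(D_T(u,w):=(t+T-w)^{h}-|t+T-w-u|^{h}\).

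\emph{The \(u^h\) term.} The inner integral of \(e^{-\beta_j w}\) equals \((1-e^{-\beta_j(t+T)})/\beta_j\), which tends to \(1/\beta_j\). After the substitution \(u\mapsto s-u\), this term therefore converges to \(\frac{\sigma_i\sigma_j\rho}{2\beta_j}\int_0^s e^{-\beta_i u}(s-u)^h\,du\), and it stays bounded in \(T\).

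\emph{The increment \(D_T\).} Fix \(u\le s\) and \(w\). When \(T\) is large, \(v=t+T-w\) exceeds \(u\), so \(D_T=v^h-(v-u)^h\). By the mean value theorem, \(D_T=h\,u\,\xi^{h-1}\) for some \(\xi\in(v-u,v)\).
\begin{itemize}
\item If \(h<1\), then \(D_T\to0\) pointwise. For \(h\le1\) the map \(x\mapsto x^h\) is subadditive, so \(|D_T|\le u^h\le s^h\) uniformly. This bound is integrable against \(e^{-\beta_j w}\) on \([0,\infty)\). Dominated convergence then kills \(D_T\), and only the \(u^h\) term survives, which is the first claimed limit.
\item If \(h=1\), then \(D_T=v-|v-u|\), which equals \(u\) for \(v\ge u\) and is bounded by \(u\) everywhere. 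Dominated convergence gives an extra contribution \(u/\beta_j\), which doubles the \(u^h\) term and produces the factor \(\sigma_i\sigma_j\rho/\beta_j\).
\end{itemize}

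\emph{The case \(h>1\).} Here I divide by \(T^{h-1}\). The \(u^h\) term is bounded, so after division it tends to \(0\). For the increment, split the \(w\)-range at \(T/2\).
\begin{itemize}
\item On \(w\le T/2\): for \(T\ge 2s\) we have \(v\ge T/2\ge s\ge u\), so \(D_T/T^{h-1}=h u\,(\xi/T)^{h-1}\) with \(\xi\in[v-u,v]\), hence \(\xi/T\in[1/2-s/T,\,1+t/T]\) and \((\xi/T)^{h-1}\to1\). The integrand is dominated by \(C h s\) uniformly in \(T\), so dominated convergence gives \(\int_0^\infty e^{-\beta_j w}h u\,dw=h u/\beta_j\).
\item On \(w>T/2\): we have \(|D_T|\le C(1+T)^h\), while \(\int_{T/2}^{\infty}e^{-\beta_j w}\,dw\) decays like \(e^{-\beta_j T/2}\). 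After division by \(T^{h-1}\) this region contributes \(o(1)\). This region also contains the part where \(v<u\), which is therefore harmless.
\end{itemize}
Collecting terms and substituting \(u\mapsto s-u\) yields \(\frac{\sigma_i\sigma_j\rho}{2}\frac{h}{\beta_j}\int_0^s e^{-\beta_i u}(s-u)\,du\), which is the third claimed limit.

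The main obstacle is the uniform domination in the case \(h>1\), since \(D_T\) itself grows with \(T\). Splitting at \(w=T/2\) resolves this: the mean value bound controls the bulk region, and exponential decay controls the tail and the region where \(v<u\). The cases \(h\le1\) are routine once subadditivity gives \(|D_T|\le u^h\).
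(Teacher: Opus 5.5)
Your proposal is correct and follows essentially the same route as the paper's proof. Both use the substitution $w=t+T-v$, dominated convergence in $(u,w)$, and, for $h_{ij}>1$, a split at $w=T/2$ with an $O\bigl((1+T)^{h_{ij}}e^{-\beta_jT/2}\bigr)$ tail bound. The differences are cosmetic: for $h_{ij}\le1$ you dominate on the whole range via $|D_T|\le u^{h_{ij}}$ instead of splitting at $T/2$, and for $h_{ij}>1$ you use the mean value theorem where the paper writes the increment as $h\int_0^u(\cdot)^{h-1}\,dz$. Your bound $|D_T|\le u^{h_{ij}}$ also holds where $t+T-w<u$, since both powers then lie in $[0,u^{h_{ij}}]$.
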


Proposition~\ref{Propos2} describes the large-lag behavior of the cross-covariance. When \(h_{ij}\leq1\), \(R_{i,j}(s,t+T)\) converges to a finite limit. When \(h_{ij}>1\) and \(\rho_{i,j}\neq0\),
\[
R_{i,j}(s,t+T)\sim\left(\frac{\sigma_i\sigma_j\rho_{i,j}}{2}\frac{h_{ij}}{\beta_j}\int_0^s e^{-\beta_i u}(s-u)\,du\right)T^{h_{ij}-1},\qquad T\to\infty,
\]
so its magnitude has exact order \(T^{h_{ij}-1}\), with sign determined by \(\rho_{i,j}\). If \(i\neq j\) and \(\rho_{i,j}=0\), then \(K_{i,j}\) and \(R_{i,j}\) are identically zero.

Figure~\ref{mld} illustrates Proposition~\ref{Propos2}. The top row shows the three marginal covariance sections. In the bottom row, \(H_1+H_2=1.25\) produces polynomial growth of \(R_{1,2}\), \(H_1+H_3=1\) produces a finite negative limit for \(R_{1,3}\), and \(H_2+H_3=0.75\) produces a finite positive limit for \(R_{2,3}\). The signs are determined by the corresponding values of \(\rho_{i,j}\).

\begin{figure}[H]
\centering
\includegraphics[width=0.6\textwidth]{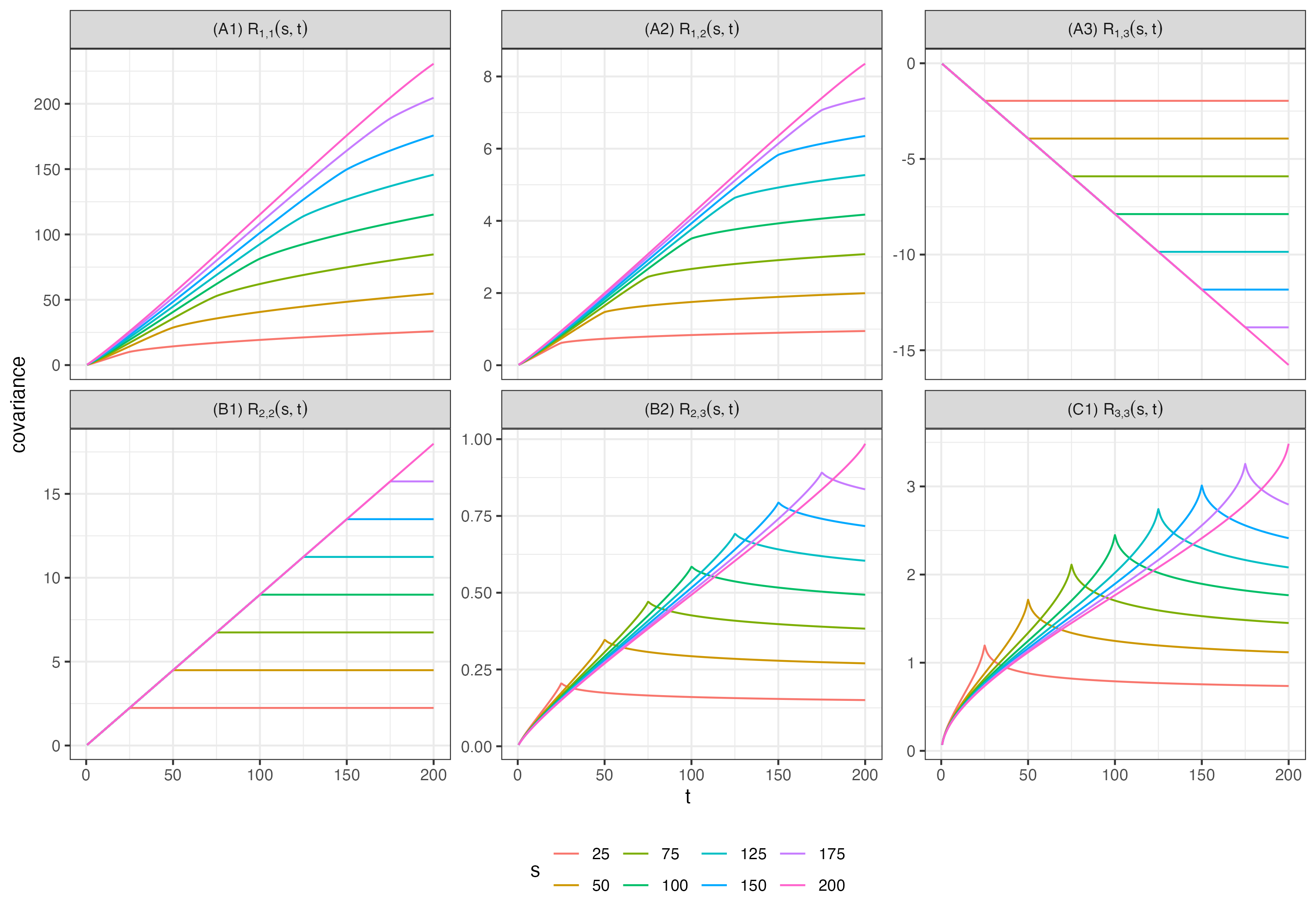}
\caption{Covariance sections for \(\boldsymbol{\sigma}=(2,3,2.5)\), \(\boldsymbol{\beta}=(7,10,5)\), \(\boldsymbol{H}=(0.75,0.50,0.25)\), and \((\rho_{12},\rho_{13},\rho_{23})=(0.185,-0.384,0.461)\). The top row shows \(R_{11}\), \(R_{22}\), and \(R_{33}\); the bottom row shows \(R_{12}\), \(R_{13}\), and \(R_{23}\) for the displayed fixed values of \(s\).}
\label{mld}
\end{figure}

\begin{prop}\label{Prop2.9}
Let \(0\leq r<\nu\), \(0\leq s<t\), and write \(h_{ij}=H_i+H_j\). Then
\begin{equation}
\begin{split}
&T^{2-h_{ij}}\operatorname{cov}\!\left(\mu_{H_i}(\nu)-\mu_{H_i}(r),\mu_{H_j}(T+t)-\mu_{H_j}(T+s)\right)\\
&\qquad\underset{T\to\infty}{\longrightarrow}
\frac{\sigma_i\sigma_j\rho_{i,j}h_{ij}(h_{ij}-1)}{2}\frac{t-s}{\beta_j}
\left[\frac{\nu-r}{\beta_i}+\frac{e^{-\beta_i\nu}-e^{-\beta_i r}}{\beta_i^2}\right].
\end{split}
\end{equation}
If \(h_{ij}=1\), the displayed limit is zero and
\[
\operatorname{cov}\!\left(\mu_{H_i}(\nu)-\mu_{H_i}(r),\mu_{H_j}(T+t)-\mu_{H_j}(T+s)\right)
=O\!\left((1+T)e^{-\beta_jT/2}\right).
\]
\end{prop}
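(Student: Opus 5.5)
The plan is to work at the level of velocities rather than with the four-term second difference of \(R_{i,j}\). By \eqref{eq:position},
\[
\mu_{H_i}(\nu)-\mu_{H_i}(r)=\int_r^\nu v_{H_i}(x)\,dx,
\]
so, since the initial conditions are deterministic,
\[
\operatorname{cov}\!\left(\mu_{H_i}(\nu)-\mu_{H_i}(r),\mu_{H_j}(T+t)-\mu_{H_j}(T+s)\right)
=\int_r^\nu\!\int_{T+s}^{T+t}\operatorname{cov}\{v_{H_i}(x),v_{H_j}(y)\}\,dy\,dx .
\]
The centred velocity is \(\sigma_i\int_0^x e^{-\beta_i(x-u)}\,dW_{H_i}(u)\). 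Equivalently, differentiating \eqref{muH2} in time and integrating by parts gives
\[
\sigma_iW_{H_i}(x)-\beta_i\sigma_i\int_0^x e^{\beta_i(u-x)}W_{H_i}(u)\,du ,
\]
and this pathwise form is the one I would use to justify every manipulation. Its cross-covariance is then a double integral of the kernel \eqref{k} against the exponential weights \(e^{-\beta_i(x-u)}\) and \(e^{-\beta_j(y-w)}\), taken in the distributional sense \(\partial_u\partial_w K_{i,j}(u,w)\). Away from the diagonal this derivative equals
\[
\frac{\rho_{i,j}h_{ij}(h_{ij}-1)}{2}\,|u-w|^{h_{ij}-2}.
\]

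Next I would split the \(w\)-integral at \(w=T/2\), using that \(u\le x\le\nu\) stays bounded while \(y\ge T+s\).

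\emph{Far region} \(w\in[T/2,y]\). Here \(|u-w|\ge T/2-\nu\), so the mixed derivative is a smooth function. Moreover,
\[
|w-u|^{h_{ij}-2}=T^{h_{ij}-2}\bigl(1+o(1)\bigr)
\]
uniformly in \(u\le\nu\) and in \(w\) within \(O(\log T)\) of \(y\); the contribution from \(w\) further below \(y\) is killed by the factor \(e^{-\beta_j(y-w)}\). This yields the following asymptotics:
\begin{itemize}
\item \(\int_{T/2}^{y}e^{-\beta_j(y-w)}\,dw\to1/\beta_j\);
\item integrating over \(y\in[T+s,T+t]\) gives the factor \((t-s)/\beta_j\);
\item \(\int_r^\nu\!\int_0^x e^{-\beta_i(x-u)}\,du\,dx=\int_r^\nu\frac{1-e^{-\beta_i x}}{\beta_i}\,dx=\frac{\nu-r}{\beta_i}+\frac{e^{-\beta_i\nu}-e^{-\beta_i r}}{\beta_i^2}\).
\end{itemize}
Multiplying by \(T^{2-h_{ij}}\) and by the constant \(\rho_{i,j}h_{ij}(h_{ij}-1)/2\) reproduces exactly the limit in the statement. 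Dominated convergence handles the \(o(1)\) terms.

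\emph{Near region} \(w\in[0,T/2]\). Here I would not differentiate \(K\), because \(|u-w|^{h_{ij}-2}\) is not integrable across the diagonal when \(h_{ij}<1\). Instead I would use the pathwise representation and write the contribution directly as integrals of \(K_{i,j}\) itself against the weights. After one integration by parts in \(w\), the boundary terms at \(w=T/2\) and all interior terms carry a factor at most \(e^{-\beta_j(y-T/2)}\le e^{-\beta_jT/2}\). The kernel \(K_{i,j}\) grows only polynomially, of order \(T^{h_{ij}}\le T^2\). Hence this region is \(O(\mathrm{poly}(T)\,e^{-\beta_jT/2})\), which is negligible after multiplication by \(T^{2-h_{ij}}\).

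For \(h_{ij}=1\), \eqref{k} reduces to \(K_{i,j}(u,w)=\rho_{i,j}\min(u,w)\). Its mixed derivative vanishes off the diagonal, so the far region contributes exactly zero and only the near region remains. Tracking the bound there more carefully, \(K\) is linear, so only a factor \(1+T\) appears, giving the stated \(O\bigl((1+T)e^{-\beta_jT/2}\bigr)\).

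I expect the main obstacle to be this near-region bookkeeping: making the integration by parts rigorous without \(\partial^2K\) (for \(h_{ij}<1\) the distributional derivative has a non-integrable diagonal part), and verifying that the boundary terms at \(w=T/2\) indeed carry the exponential factor. As a fallback, I would expand the increment covariance directly as the second difference
\[
R_{i,j}(\nu,T+t)-R_{i,j}(\nu,T+s)-R_{i,j}(r,T+t)+R_{i,j}(r,T+s)
\]
using \eqref{cfR}, and apply a mean-value expansion of \(|T+\cdot-u|^{h_{ij}}\) to second order in the explicit kernel. This avoids distributions entirely, at the cost of longer algebra.
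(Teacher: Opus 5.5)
Your proposal is correct and is essentially the paper's argument. The paper also splits the far-time integration at $T/2$ and bounds the piece near the origin by $O\bigl((1+T^{h_{ij}})e^{-\beta_jT/2}\bigr)$. On the remaining region it likewise uses the mixed second derivative $h_{ij}(h_{ij}-1)(\cdot)^{h_{ij}-2}$ of the power kernel, followed by dominated convergence. The paper obtains that derivative by applying the fundamental theorem of calculus twice to $G_{T,t}-G_{T,s}$ at the position level, rather than by your velocity-level integration by parts.

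The only substantive difference is at $h_{ij}=1$. You observe that the far region contributes exactly zero and read the stated $O\bigl((1+T)e^{-\beta_jT/2}\bigr)$ bound off the near region. The paper instead computes the covariance in closed form from $K_{i,j}=\rho_{i,j}\min(u,v)$ and gets the sharper $O(e^{-\beta_jT})$.
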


For \(h_{ij}\neq1\), define
\[
C_{i,j}^{\,r,\nu;s,t}:=
\frac{\sigma_i\sigma_j\rho_{i,j}h_{ij}(h_{ij}-1)}{2}\frac{t-s}{\beta_j}
\left[\frac{\nu-r}{\beta_i}+\frac{e^{-\beta_i\nu}-e^{-\beta_i r}}{\beta_i^2}\right].
\]
Since
\[
\frac{\nu-r}{\beta_i}+\frac{e^{-\beta_i\nu}-e^{-\beta_i r}}{\beta_i^2}
=\frac{1}{\beta_i}\int_r^\nu\left(1-e^{-\beta_i u}\right)\,du>0
\]
for \(0\leq r<\nu\), one has \(C_{i,j}^{\,r,\nu;s,t}\neq0\) if and only if \(\rho_{i,j}\neq0\). Therefore, whenever \(\rho_{i,j}\neq0\) and \(h_{ij}\neq1\),
\[
\gamma_{i,j}^{\,r,\nu;s,t}(n)\sim C_{i,j}^{\,r,\nu;s,t}n^{h_{ij}-2},\qquad n\to\infty,
\]
and the pair exhibits LRD with index \(\kappa_{i,j}=2-h_{ij}=2-(H_i+H_j)\).

If \(h_{ij}<1\), then \(h_{ij}-2<-1\), so \(\sum_{n\geq1}|\gamma_{i,j}^{\,r,\nu;s,t}(n)|<\infty\), and the pair has short memory. If \(h_{ij}>1\), then \(-1<h_{ij}-2<0\), so \(\sum_{n\geq1}|\gamma_{i,j}^{\,r,\nu;s,t}(n)|=\infty\), and the pair has long memory. If \(h_{ij}=1\), the covariance of separated increments decays exponentially; hence the pair has short memory and does not satisfy the LRD criterion adopted here. Thus, for \(\rho_{i,j}\neq0\), the pair exhibits LRD whenever \(H_i+H_j\neq1\), has short memory when \(H_i+H_j\leq1\), and has long memory when \(H_i+H_j>1\).

For an individual coordinate, \(\rho_{i,i}=1\) and \(h_{ii}=2H_i\). Hence the marginal process exhibits LRD when \(H_i\neq1/2\), has short memory when \(H_i\leq1/2\), and has long memory when \(H_i>1/2\). In the Brownian case \(H_i=1/2\), the covariance of separated increments decays exponentially and the marginal process does not satisfy the adopted LRD criterion.

\subsection{Likelihood inference and admissible correlation parametrization}\label{sub:likelihood}

Several asymptotic results are available for estimators of fractional Ornstein--Uhlenbeck parameters; see, for example, \cite{Alexandre,Nualart,Tanaka,Xiao}. However, as noted in \citet{JY}, these results require knowledge of the velocity trajectory or of a specific sample from it. In practice, only sparsely sampled position trajectories are commonly available. We therefore estimate the model parameters from the finite-dimensional Gaussian likelihood of the position process.

Bound-constrained optimization methods, such as L-BFGS-B and
truncated Newton methods, impose separate lower and upper bounds on
each parameter and therefore search over a hyperrectangle. For
\(p\geq2\), however, the admissible region of the correlation
parameters depends on \(H_1,\ldots,H_p\) and cannot in general be
specified through independent bounds on the individual correlations.
Such methods may therefore evaluate parameter values for which
\(Q_{H_1,\ldots,H_p}(\rho)\) is not positive definite. The
reparametrization introduced below allows optimization over a fixed
hypercube while ensuring that every transformed value defines a valid
covariance, without additional admissibility checks.

\begin{thm}\label{TeoT}
Let \(\mathcal{R}_{H_1,H_2,H_3}\) denote the positive-definite correlation parameter space for fixed Hurst parameters. Then the mapping \(T_{H_1,H_2,H_3}:(-1,1)^3\rightarrow\mathcal{R}_{H_1,H_2,H_3}\), defined by
\begin{equation}\label{T}
\begin{aligned}
&T_{H_1,H_2,H_3}(z_{1,2},z_{1,3},z_{2,3})\\
&\quad=\Bigg(
\frac{z_{1,2}\sqrt{\Gamma_{1,1}\Gamma_{2,2}}}{\Gamma_{1,2}},
\frac{z_{1,3}\sqrt{\Gamma_{1,1}\Gamma_{3,3}}}{\Gamma_{1,3}},
\frac{\left[z_{2,3}\sqrt{1-z_{1,2}^{2}}\sqrt{1-z_{1,3}^{2}}+z_{1,2}z_{1,3}\right]\sqrt{\Gamma_{2,2}\Gamma_{3,3}}}{\Gamma_{2,3}}
\Bigg),
\end{aligned}
\end{equation}
is bijective, with inverse
\begin{equation}\label{T2}
\begin{aligned}
&T^{-1}_{H_1,H_2,H_3}(\rho_{1,2},\rho_{1,3},\rho_{2,3})\\
&\quad=\Bigg(
\frac{\Gamma_{1,2}\rho_{1,2}}{\sqrt{\Gamma_{1,1}\Gamma_{2,2}}},
\frac{\Gamma_{1,3}\rho_{1,3}}{\sqrt{\Gamma_{1,1}\Gamma_{3,3}}},
\frac{\displaystyle\frac{\Gamma_{2,3}\rho_{2,3}}{\sqrt{\Gamma_{2,2}\Gamma_{3,3}}}-\frac{\Gamma_{1,2}\Gamma_{1,3}\rho_{1,2}\rho_{1,3}}{\Gamma_{1,1}\sqrt{\Gamma_{2,2}\Gamma_{3,3}}}}{\displaystyle\sqrt{1-\frac{\Gamma_{1,2}^{2}\rho_{1,2}^{2}}{\Gamma_{1,1}\Gamma_{2,2}}}\sqrt{1-\frac{\Gamma_{1,3}^{2}\rho_{1,3}^{2}}{\Gamma_{1,1}\Gamma_{3,3}}}}
\Bigg).
\end{aligned}
\end{equation}
\end{thm}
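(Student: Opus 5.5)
The plan is to reduce everything to the standard partial-correlation (LKJ/vine) parametrization of $3\times3$ correlation matrices. Set $D=\operatorname{diag}(\Gamma_{1,1},\Gamma_{2,2},\Gamma_{3,3})$; each $\Gamma_{i,i}=\Gamma(2H_i+1)\sin(\pi H_i)>0$ for $H_i\in(0,1)$. Then
\[
D^{-1/2}Q_{H_1,H_2,H_3}(\boldsymbol{\rho})D^{-1/2}=C(\boldsymbol{r}),
\]
where $C$ is the matrix with unit diagonal and off-diagonal entries
\[
r_{i,j}=\frac{\Gamma_{i,j}\rho_{i,j}}{\sqrt{\Gamma_{i,i}\Gamma_{j,j}}}.
\]
Since congruence by an invertible diagonal matrix preserves positive definiteness, $Q\succ0$ if and only if $C(\boldsymbol{r})\succ0$. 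We also need $\Gamma_{i,j}\neq0$ so that the scaling $\rho_{i,j}\mapsto r_{i,j}$ is a bijection of $\mathbb{R}$. This holds because $H_i+H_j\in(0,2)$ gives $\sin(\pi h_{ij}/2)>0$.

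Next I will invoke (or reprove in two lines) the classical fact about $3\times3$ correlation matrices. The map
\[
(z_{1,2},z_{1,3},z_{2,3})\mapsto\Bigl(r_{1,2},r_{1,3},r_{2,3}\Bigr)=\Bigl(z_{1,2},\;z_{1,3},\;z_{2,3}\sqrt{1-z_{1,2}^2}\sqrt{1-z_{1,3}^2}+z_{1,2}z_{1,3}\Bigr)
\]
is a bijection from $(-1,1)^3$ onto the set of $(r_{1,2},r_{1,3},r_{2,3})$ with $C(\boldsymbol r)\succ0$.

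To prove this, I use Sylvester's criterion. $C\succ0$ holds if and only if $|r_{1,2}|<1$ and $\det C>0$. The determinant factors as
\[
\det C=(1-r_{1,2}^2)(1-r_{1,3}^2)-(r_{2,3}-r_{1,2}r_{1,3})^2.
\]
So, given $|r_{1,2}|,|r_{1,3}|<1$, the condition $\det C>0$ is equivalent to
\[
z_{2,3}:=\frac{r_{2,3}-r_{1,2}r_{1,3}}{\sqrt{(1-r_{1,2}^2)(1-r_{1,3}^2)}}\in(-1,1).
\]
One point needs care: $\det C>0$ together with $|r_{1,2}|<1$ forces $|r_{1,3}|<1$. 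This follows because the $(1,3)$ principal minor of a positive definite matrix is positive. Hence the conditions are exactly $z_{1,2},z_{1,3},z_{2,3}\in(-1,1)$. The formula for $z_{2,3}$ solves the third coordinate equation uniquely, which gives both injectivity and surjectivity.

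Composing the two steps gives the result. $T_{H_1,H_2,H_3}$ equals the vine map followed by the coordinatewise scaling $r_{i,j}\mapsto\rho_{i,j}=r_{i,j}\sqrt{\Gamma_{i,i}\Gamma_{j,j}}/\Gamma_{i,j}$. It is therefore a bijection from $(-1,1)^3$ onto $\{\boldsymbol\rho: Q\succ0\}$. Substituting $r_{i,j}=\Gamma_{i,j}\rho_{i,j}/\sqrt{\Gamma_{i,i}\Gamma_{j,j}}$ into the inverse vine formula reproduces \eqref{T2} directly.

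The main obstacle is reconciling this target set with $\mathcal{R}_{H_1,H_2,H_3}$ as written in Remark~\ref{re:conic}, which also requires $\boldsymbol\rho\in(-1,1)^3$. When $\Gamma_{i,i}\Gamma_{j,j}/\Gamma_{i,j}^2>1$, points of the image may have $|\rho_{i,j}|\ge1$. So the statement must be read with $\mathcal{R}$ as the positive-definite region $\{Q\succ0\}$, as in the theorem's own wording "positive-definite correlation parameter space". I would state this explicitly. If the box constraint is genuinely intended, I would instead need the inequality $\Gamma_{i,j}^2\ge\Gamma_{i,i}\Gamma_{j,j}$, i.e.\ log-concavity of $h\mapsto\Gamma(h+1)\sin(\pi h/2)$. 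I expect that to fail in general, so I would adopt the first reading.
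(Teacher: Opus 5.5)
Your core argument is the paper's: congruence by $\operatorname{diag}(\sqrt{\Gamma_{i,i}})$ to a unit-diagonal matrix with entries $r_{i,j}=\Gamma_{i,j}\rho_{i,j}/\sqrt{\Gamma_{i,i}\Gamma_{j,j}}$, then Sylvester's criterion and the factorization $\det C=(1-r_{1,2}^2)(1-r_{1,3}^2)-(r_{2,3}-r_{1,2}r_{1,3})^2$. This gives both directions and the inverse formula, and that part is correct. The gap is in your last paragraph. You never show that the image of $T_{H_1,H_2,H_3}$ lies in $(-1,1)^3$. Your expectation that it may not is wrong. The case $\Gamma_{i,i}\Gamma_{j,j}/\Gamma_{i,j}^2>1$ never occurs, so the two readings of $\mathcal{R}_{H_1,H_2,H_3}$ define the same set. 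No reinterpretation is needed, but the containment must be proved. Without it, $T$ is not shown to map into the set named in the theorem, and you would in effect be asserting that the stated theorem is false.

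The paper closes this step with Lemma~\ref{lem2.2}. If $Q(\boldsymbol\rho)\succ0$, the kernel \eqref{k} is a genuine covariance. Since $K_{i,i}(1,1)=1$ and $K_{i,j}(1,1)=\rho_{i,j}$, Cauchy--Schwarz gives $|\rho_{i,j}|\le1$. Equality is excluded because $\{Q\succ0\}$ is open: a slight outward perturbation of $\rho_{i,j}$ would still give a valid covariance with $|K_{i,j}(1,1)|>1$.

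Alternatively, the inequality you doubted can be proved directly, without appealing to the cited validity result. Let $g(h)=\Gamma(h+1)\sin(\pi h/2)$ on $(0,2)$, and use that the trigamma function $\psi'$ is decreasing with $\psi'(1)=\pi^2/6$. Then
\[
(\log g)''(h)=\psi'(h+1)-\frac{\pi^2}{4\sin^2(\pi h/2)}<\frac{\pi^2}{6}-\frac{\pi^2}{4}<0 .
\]
So $\log g$ is strictly concave. Midpoint concavity at $a=2H_i$, $b=2H_j$ gives $\Gamma_{i,j}^2\ge\Gamma_{i,i}\Gamma_{j,j}$, with equality only when $H_i=H_j$. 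Hence $|\rho_{i,j}|\le|r_{i,j}|<1$ on the whole image, since $C\succ0$ forces every $|r_{i,j}|<1$. With either argument added, your proof establishes the theorem exactly as stated.
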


The proof is given in Appendix~\ref{ApenB}. In the two-dimensional case,
\[
T_{H_1,H_2}(z_{1,2})=\frac{z_{1,2}\sqrt{\Gamma_{1,1}\Gamma_{2,2}}}{\Gamma_{1,2}}.
\]
Although the application considers two- and three-dimensional trajectories, the same construction extends to general \(p\).

\subsection{Finite-dimensional simulation and velocity reconstruction}\label{sub:simulation_velocity}

For observation times \(\boldsymbol t=(t_0,\ldots,t_n)\), finite-dimensional simulation reduces to assembling the Gaussian covariance matrix in Definition~\ref{defmu3D}. The following recursion computes its entries from covariances of auxiliary variables.

\begin{cor}\label{cor:covariance_mu}
Let \(0=t_0<t_1<\cdots<t_n\), with \(n\in\mathbb{N}\), and define
\[
\mu_{i,j}:=\mu_{H_i}(t_j),\qquad i=1,\ldots,p,\quad j=0,\ldots,n.
\]
Then
\begin{equation}\label{covf}
\operatorname{cov}(\mu_{i_1,i},\mu_{i_2,j})
=\sigma_{i_1}\sigma_{i_2}\sum_{\ell=0}^{i-1}\sum_{r=0}^{j-1}
e^{-\beta_{i_1}(t_i-t_{\ell+1})}
\operatorname{cov}(y_{i_1,\ell},y_{i_2,r})
e^{-\beta_{i_2}(t_j-t_{r+1})},
\end{equation}
where
\[
y_{i,j}:=\int_{t_j}^{t_j+\Delta_j}e^{\beta_i(s-(t_j+\Delta_j))}W_{H_i}(s)\,ds,
\qquad \Delta_j:=t_{j+1}-t_j.
\]
The cross-covariance of the auxiliary variables is
\begin{equation}\label{eq:cov_y_irregular}
\begin{aligned}
\operatorname{cov}(y_{i_1,i},y_{i_2,j})
&=\int_{t_i}^{t_i+\Delta_i}\int_{t_j}^{t_j+\Delta_j}
e^{\beta_{i_1}(u-(t_i+\Delta_i))}
K_{i_1,i_2}(u,v)
e^{\beta_{i_2}(v-(t_j+\Delta_j))}
\,dv\,du\\
&=\frac{\rho_{i_1,i_2}}{2}\Bigg[
\frac{1-e^{-\beta_{i_2}\Delta_j}}{\beta_{i_2}}
\int_0^{\Delta_i}e^{-\beta_{i_1}u}(t_i+\Delta_i-u)^{H_{i_1}+H_{i_2}}\,du\\
&\qquad+
\frac{1-e^{-\beta_{i_1}\Delta_i}}{\beta_{i_1}}
\int_0^{\Delta_j}e^{-\beta_{i_2}v}(t_j+\Delta_j-v)^{H_{i_1}+H_{i_2}}\,dv\\
&\qquad-
\int_0^{\Delta_i}\int_0^{\Delta_j}
e^{-\beta_{i_1}u}
\left|t_j+\Delta_j-v+u-t_i-\Delta_i\right|^{H_{i_1}+H_{i_2}}
e^{-\beta_{i_2}v}\,dv\,du
\Bigg].
\end{aligned}
\end{equation}
\end{cor}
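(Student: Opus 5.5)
The plan is to reduce the claim to bilinearity of the covariance, after splitting the stochastic integral in \eqref{muH2} over the observation grid, and then to evaluate the auxiliary covariances by substituting the well-balanced kernel \eqref{k}.

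\textbf{Step 1: reduce to the centred integrals.} Under deterministic initial conditions, \eqref{muH2} shows that $\mu_{H_{i_1}}(t_i)-m_{i_1}(t_i)=\sigma_{i_1}\int_0^{t_i}e^{\beta_{i_1}(u-t_i)}W_{H_{i_1}}(u)\,du$. Hence $\operatorname{cov}(\mu_{i_1,i},\mu_{i_2,j})$ equals $\sigma_{i_1}\sigma_{i_2}$ times the covariance of two such integrals. This is consistent with Definition~\ref{defmu3D} and \eqref{cfR}.

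\textbf{Step 2: split over the grid.} Since $t_0=0$, I write $[0,t_i]=\bigcup_{\ell=0}^{i-1}[t_\ell,t_{\ell+1}]$. On each piece I factor the exponential as
\[
e^{\beta_{i_1}(u-t_i)}=e^{-\beta_{i_1}(t_i-t_{\ell+1})}\,e^{\beta_{i_1}(u-t_{\ell+1})},
\qquad t_{\ell+1}=t_\ell+\Delta_\ell .
\]
This gives
\[
\int_0^{t_i}e^{\beta_{i_1}(u-t_i)}W_{H_{i_1}}(u)\,du=\sum_{\ell=0}^{i-1}e^{-\beta_{i_1}(t_i-t_{\ell+1})}\,y_{i_1,\ell}.
\]
The same decomposition applies to coordinate $i_2$ at time $t_j$, with $r=0,\dots,j-1$. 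Bilinearity of the covariance for these finite sums of square-integrable Gaussian variables then yields \eqref{covf} directly.

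\textbf{Step 3: evaluate the auxiliary covariances.} First, Fubini, justified by continuity of $K$ on compact rectangles, gives the first line of \eqref{eq:cov_y_irregular}: the covariance of $y_{i_1,i}$ and $y_{i_2,j}$ is the double integral of the two exponential weights against $K_{i_1,i_2}$. Next I substitute $u\mapsto t_i+\Delta_i-u$ and $v\mapsto t_j+\Delta_j-v$. This maps both integration ranges to $[0,\Delta_i]$ and $[0,\Delta_j]$, with weights $e^{-\beta_{i_1}u}$ and $e^{-\beta_{i_2}v}$. I then insert \eqref{k} with $h=H_{i_1}+H_{i_2}$ and treat its three terms separately.
\begin{itemize}
\item The $|s|^{h}$ term does not depend on $v$. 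The $v$-integral therefore factors out as $(1-e^{-\beta_{i_2}\Delta_j})/\beta_{i_2}$, leaving $\int_0^{\Delta_i}e^{-\beta_{i_1}u}(t_i+\Delta_i-u)^{h}\,du$. The absolute value can be dropped because $t_i+\Delta_i-u\ge t_i\ge0$.
\item The $|t|^{h}$ term is handled symmetrically, giving the second line of \eqref{eq:cov_y_irregular}.
\item The $-|t-s|^{h}$ term becomes the double integral of $|t_j+\Delta_j-v-(t_i+\Delta_i-u)|^{h}$, which is exactly the third line.
\end{itemize}
The overall factor $\rho_{i_1,i_2}/2$ comes from \eqref{k}.

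\textbf{Main obstacle.} There is no analytic difficulty; the only risk is bookkeeping. Two points need care: matching the index $\ell+1$ in the exponential prefactor with the right endpoint of each subinterval, and keeping the sign conventions straight under the reflection substitution in the $|t-s|$ term. I will verify the result against the diagonal case $i_1=i_2$, $i=j$, where it should reduce to the known ifOU formula of \citet{JY}.
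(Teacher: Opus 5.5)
Your proposal is correct and follows the argument the paper relies on. The paper gives no separate proof, but your grid decomposition of the integral in \eqref{muH2} is exactly the recursion \eqref{MYJ}; bilinearity, followed by inserting the well-balanced kernel \eqref{k} on the reflected subintervals, then gives \eqref{covf} and \eqref{eq:cov_y_irregular}. Your index bookkeeping, the removal of the absolute values in the first two terms, and the Fubini justification (as in the proof of Proposition~\ref{Propos_mucov}) are all sound.
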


For the simulation recursion, we use \(v_{H_i}(0)=0\). Under this convention, Corollary~\ref{cor:covariance_mu} extends the univariate recursion of \citet{JY} through
\begin{equation}\label{MYJ}
\mu_{i,j}=\mu_{i,0}+\sigma_i\sum_{k=0}^{j-1}y_{i,k}e^{-\beta_i(t_j-t_{k+1})}.
\end{equation}
For a nonzero deterministic initial velocity, the corresponding deterministic mean term in Definition~\ref{defmu3D} must be added; the covariance formula is unchanged. For a regular grid \(t_j=j\Delta\), the cross-covariance of the auxiliary variables becomes
\begin{equation}\label{covy}
\begin{aligned}
\operatorname{cov}(y_{i_1,i},y_{i_2,j})
&=\frac{\rho_{i_1,i_2}}{2}\Bigg[
\frac{1-e^{-\beta_{i_2}\Delta}}{\beta_{i_2}}
\int_0^\Delta e^{-\beta_{i_1}u}\big((i+1)\Delta-u\big)^{H_{i_1}+H_{i_2}}\,du\\
&\qquad+
\frac{1-e^{-\beta_{i_1}\Delta}}{\beta_{i_1}}
\int_0^\Delta e^{-\beta_{i_2}v}\big((j+1)\Delta-v\big)^{H_{i_1}+H_{i_2}}\,dv\\
&\qquad-
\int_0^\Delta\int_0^\Delta
e^{-\beta_{i_1}u}
\left|(j-i)\Delta+u-v\right|^{H_{i_1}+H_{i_2}}
e^{-\beta_{i_2}v}\,dv\,du
\Bigg].
\end{aligned}
\end{equation}
For a regular grid, equation~\eqref{covy} also permits the numerical integrals to be reused across covariance entries. For a fixed coordinate pair, the two one-dimensional terms require at most two sequences of $n$ integrals, while the double-integral term depends only on the signed lag $j-i$. Consequently, a cross-coordinate block requires at most $2n-1$ distinct double integrals, and a marginal block requires only the $n$ nonnegative lags, rather than a separate double integral for each of its $n^2$ entries. For fixed $p$, the number of numerical integrations needed to construct all marginal and cross-coordinate blocks is therefore $O(p^2n)$. This extends the univariate regular-grid reduction of \citet[Remark~3.2]{JY} to the two- and three-dimensional settings considered here. Filling and storing the dense covariance matrix and factoring it for Gaussian simulation or likelihood evaluation retain their usual matrix costs.

Equations~\eqref{covf}--\eqref{covy} determine the finite-dimensional Gaussian distribution of the position process and provide a direct simulation procedure.

Velocity is not observed directly. Following \citet{JY}, we generate velocity trajectories from their conditional Gaussian distribution given the observed positions and plug-in maximum likelihood estimates. Definition~\ref{defmu3D} allows deterministic initial velocities, but the following formulas and implementation use \(v_{H_i}(0)=0\) for every coordinate. If another initial velocity is used, its contribution must be incorporated into the conditioning equations.

\begin{rem}\label{simvel}
For \(i=1,\ldots,p\), define
\[
\zeta_{t,\Delta}^{i,1}:=\sigma_i\left(W_{H_i}(t+\Delta)-e^{-\beta_i\Delta}W_{H_i}(t)\right)
\]
and
\begin{equation}\label{Z2}
\zeta_{t,\Delta}^{i,2}:=\beta_i(1-e^{-\beta_i\Delta})\mu_{H_i}(0)
-\beta_i\left(\mu_{H_i}(t+\Delta)-e^{-\beta_i\Delta}\mu_{H_i}(t)\right).
\end{equation}
Their covariances are
\begin{equation}\label{CR1}
\begin{aligned}
\operatorname{cov}\left(\zeta_{t,\Delta}^{i,1},\zeta_{s,\Delta}^{j,1}\right)
=\sigma_i\sigma_j\Big[&
K_{i,j}(t+\Delta,s+\Delta)
-e^{-\beta_j\Delta}K_{i,j}(t+\Delta,s)\\
&-e^{-\beta_i\Delta}K_{i,j}(t,s+\Delta)
+e^{-(\beta_i+\beta_j)\Delta}K_{i,j}(t,s)\Big],
\end{aligned}
\end{equation}
\begin{equation}\label{CR4}
\operatorname{cov}\left(\zeta_{t,\Delta}^{i,2},\zeta_{s,\Delta}^{j,2}\right)
=\sigma_i\sigma_j\beta_i\beta_j e^{-(\beta_i+\beta_j)\Delta}
\int_t^{t+\Delta}\int_s^{s+\Delta}
e^{\beta_i(u-t)}K_{i,j}(u,v)e^{\beta_j(v-s)}\,dv\,du,
\end{equation}
and
\begin{equation}\label{CR2}
\begin{aligned}
\operatorname{cov}\left(\zeta_{t,\Delta}^{i,2},\zeta_{s,\Delta}^{j,1}\right)
=-\beta_i\sigma_i\sigma_j\Bigg[
&e^{-\beta_i\Delta}\int_t^{t+\Delta}e^{\beta_i(u-t)}K_{i,j}(u,s+\Delta)\,du\\
&-e^{-(\beta_i+\beta_j)\Delta}\int_t^{t+\Delta}e^{\beta_i(u-t)}K_{i,j}(u,s)\,du
\Bigg].
\end{aligned}
\end{equation}
\end{rem}

By \citet[Proposition~3.5]{JY},
\begin{equation}\label{eq:velocity_recursive}
v_{H_i}(t+\Delta)-e^{-\beta_i\Delta}v_{H_i}(t)
=\zeta_{t,\Delta}^{i,1}+\zeta_{t,\Delta}^{i,2}.
\end{equation}
For a regular grid \(\boldsymbol t=(t_0,\ldots,t_n)\), write
\[
\zeta_{\boldsymbol t,\Delta}^{i,r}
:=\left(\zeta_{t_0,\Delta}^{i,r},\ldots,
\zeta_{t_{n-1},\Delta}^{i,r}\right)^\intercal,
\qquad i=1,\ldots,p,\quad r=1,2.
\]
Algorithm~\ref{alg:velocity} generates conditional velocity trajectories from the observed positions and the fitted parameters.

\begin{algorithm}[H]
\baselineskip=14pt
\caption{Generate conditional velocity trajectories from observed positions and plug-in parameter estimates}
\label{alg:velocity}
\centering\begin{minipage}{0.90\linewidth}
\begin{algorithmic}[1]
\small
\State Compute \(\left(\zeta_{\boldsymbol t,\Delta}^{1,2},\ldots,\zeta_{\boldsymbol t,\Delta}^{p,2}\right)\) from the observed positions using equation~\eqref{Z2} and the estimates \(\{\widehat{\beta}_i\}\).
\State Generate \(\left(\zeta_{\boldsymbol t,\Delta}^{1,1},\ldots,\zeta_{\boldsymbol t,\Delta}^{p,1}\right)\) conditionally on
\Statex \hspace{\algorithmicindent}\(\left(\zeta_{\boldsymbol t,\Delta}^{1,2},\ldots,\zeta_{\boldsymbol t,\Delta}^{p,2}\right)\), using equations~\eqref{CR1}, \eqref{CR2}, and \eqref{CR4},
\Statex \hspace{\algorithmicindent}and the estimates \(\{\widehat{\sigma}_i\}\), \(\{\widehat{\beta}_i\}\), \(\{\widehat{H}_i\}\), and \(\{\widehat{\rho}_{i,j}\}\).
\State Set \(v_{H_1}(0)=\cdots=v_{H_p}(0)=0\).
\State Use Equation~\eqref{eq:velocity_recursive} recursively to obtain \(\left(v_{H_1}(\boldsymbol t),\ldots,v_{H_p}(\boldsymbol t)\right)\).
\State Return \(\left(v_{H_1}(\boldsymbol t),\ldots,v_{H_p}(\boldsymbol t)\right)\).
\end{algorithmic}
\end{minipage}
\end{algorithm}

\subsection{Finite-sample numerical experiments}\label{sub:numerical_assessment}

The simulation study of \citet[Appendix~B.2]{JY} examined maximum likelihood estimation of the univariate ifOU parameters \((\sigma,\beta,H)\) from discretely observed position trajectories. They used \(\mu_H(0)=10\), \(\sigma=2\), \(\beta=3\), \(H\in\{0.1,0.2,\ldots,0.9\}\), a fixed observation horizon \(T=10\), and \(\Delta\in\{1/10,1/20,\ldots,1/50\}\). For each combination of \(H\) and \(\Delta\), \(50\) trajectories were simulated and the three parameters were estimated jointly. The reported mean squared errors generally decreased as \(\Delta\) decreased.

Building on this univariate benchmark, we examine joint estimation
and latent-velocity reconstruction in the multidimensional model and
study cross-coordinate correlation estimation under increasing- and
fixed-horizon designs. Sections~\ref{sub:joint-estimation-supp}--\ref{sub:infill-correlation-supp} of Appendix~\ref{sec2.4} report
the three numerical experiments, and Section~\ref{sub:computational-scaling-supp} of Appendix~\ref{sec2.4} discusses computational
considerations.

The first experiment jointly estimates all twelve parameters from one three-dimensional trajectory simulated on \([0,10]\) with \(\Delta=1/30\), \(\boldsymbol{\sigma}=(2,3,2.5)\), \(\boldsymbol{\beta}=(7,5,4.2)\), \(\boldsymbol{H}=(0.8,0.6,0.2)\), and \((\rho_{1,2},\rho_{1,3},\rho_{2,3})=(0.2801,-0.2647,0.3376)\). The maximum likelihood estimates are \(\widehat{\boldsymbol{\sigma}}=(1.7927,2.8061,2.4141)\), \(\widehat{\boldsymbol{\beta}}=(8.5931,5.1493,4.6100)\), \(\widehat{\boldsymbol{H}}=(0.7814,0.5939,0.1697)\), and \((\widehat{\rho}_{1,2},\widehat{\rho}_{1,3},\widehat{\rho}_{2,3})=(0.2679,-0.2448,0.3610)\). For each of the twelve parameters, Table~\ref{tab:original-sim3d} of Appendix~\ref{sec2.4} reports \(99\%\) Gaussian-approximation intervals and one-parameter likelihood-deviance intervals obtained with the remaining parameters fixed at the joint MLE\@. For \(\rho_{1,2}\), the two intervals are \((0.1954,0.3405)\) and \((0.1862,0.3324)\); for \(\rho_{1,3}\), they are \((-0.2925,-0.1971)\) and \((-0.2870,-0.1907)\); and for \(\rho_{2,3}\), they are \((0.3052,0.4169)\) and \((0.2974,0.4103)\), respectively. The true values of the three correlations are contained in both corresponding intervals. The simulated trajectory and the one-parameter likelihood sections are shown in Figures~\ref{fig:sim3d-supp}--\ref{PLC2} of Appendix~\ref{sec2.4}.

Because all twelve quantities enter the same likelihood optimization, this experiment directly examines simultaneous finite-sample estimation of the three scale, three damping, three Hurst, and three cross-correlation parameters for that realization.

The simulated trajectory used for the joint fit, together with its twelve MLEs, is then supplied to Algorithm~\ref{alg:velocity}. With all three position coordinates treated as observed, the MLEs fixed, and \(v_{H_i}(0)=0\), we generated 1,000 independent trajectories from the fitted conditional Gaussian law of the latent velocity. At each grid time, their Monte Carlo average estimates the conditional mean, and their empirical 0.025 and 0.975 quantiles define a pointwise \(95\%\) conditional band. The conditioning positions, conditional means, and pointwise bands for Longitude, Latitude, and Altitude are shown in Figure~\ref{fig:sim-velocity-reconstruction} of Appendix~\ref{sec2.4}. 

The second experiment examines cross-coordinate correlation
estimation with the marginal parameters fixed at their generating
values. Weak and moderate two- and three-dimensional designs are
considered, with 50 independent trajectories generated for each
design and each $n\in\{50,100\}$ on a grid with spacing
$\Delta=0.5$. Figure~\ref{fig:simrho} and Table~\ref{tab:simrho} of Appendix~\ref{sec2.4} report the
empirical distributions, biases, and RMSE values of the correlation
MLEs. The RMSE is lower at $n=100$ for every coordinate pair. The empirical biases are small, but for the weak nonzero correlations the RMSE remains comparable with $|\rho|$. Because $\Delta$ is fixed, increasing $n$
also increases the observation horizon; the comparison therefore
does not isolate the effect of sample size at a fixed horizon.

The third experiment examines estimation of a single
cross-coordinate correlation under fixed-horizon grid refinement.
The observation horizon is fixed at $T=10$, all marginal parameters
are held at their generating values, and
$\rho_{1,2}\in\{0.1,0.3\}$ is the sole unknown parameter. For each
combination of $\rho_{1,2}$ and
$\Delta\in\{1/10,1/20,1/30,1/40,1/50\}$, 100 independent Monte
Carlo datasets are generated. Section~\ref{sub:infill-correlation-supp} of Appendix~\ref{sec2.4} gives
the complete design; Figure~\ref{fig:infillrho-boxplots} and Table~\ref{tab:infillrho} report the empirical
distributions and numerical summaries. As $\Delta$ decreases from
$1/10$ to $1/50$, the RMSE decreases from $0.0934$ to $0.0414$ for
$\rho_{1,2}=0.1$ and from $0.0807$ to $0.0347$ for
$\rho_{1,2}=0.3$. The mean likelihood-ratio interval width also
decreases under grid refinement, whereas the empirical bias remains
small.

\subsection{Interactive simulation application}

An interactive Shiny application implements finite-dimensional simulation from user-specified marginal and canonical partial-correlation parameters. It displays the simulated Longitude, Latitude, and Altitude coordinates on a map and returns circular summaries of successive movement directions in the three coordinate planes; Figure~\ref{fig:shiny-application} shows its interface. The application is available at \url{https://jhramgon.shinyapps.io/aplication_r_imou/}.

\begin{figure}[H]
\centering
\includegraphics[width=0.85\textwidth]{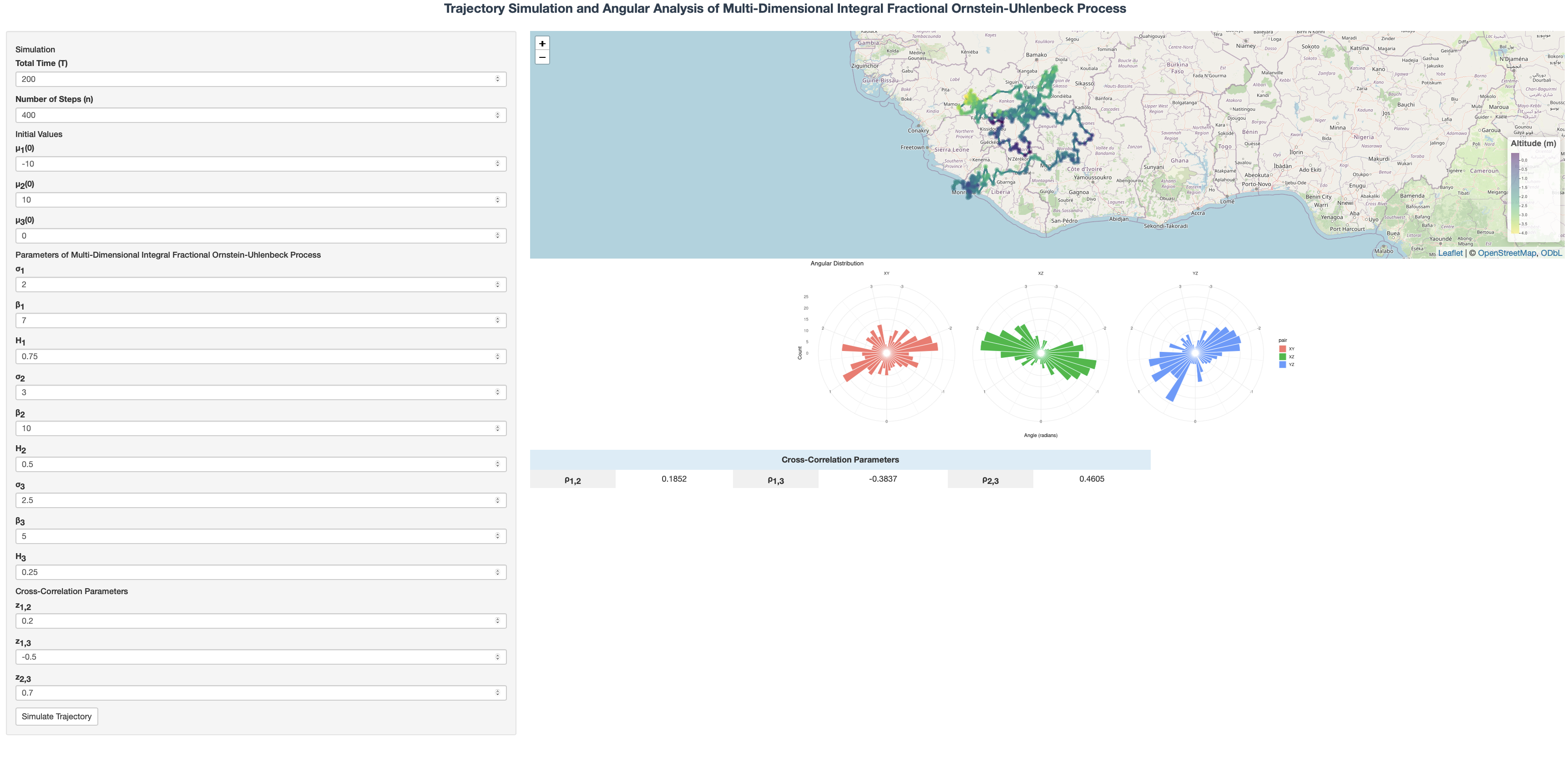}
\caption{Interactive implementation of the mifOU simulator. The input panel specifies the marginal and canonical partial-correlation parameters; the output displays the simulated trajectory, the three pairwise movement-angle distributions, and the resulting cross-coordinate correlations.}
\label{fig:shiny-application}
\end{figure}

\section{Application to bat migration trajectories}\label{SSec3}

We revisit the five common noctule migration trajectories analyzed by \citet{RamirezEtAl2026} from a multivariate perspective. The previous analysis treated Longitude and Latitude separately. For all five bats, these coordinates are taken from the two-dimensional dataset; the three-dimensional dataset supplies Altitude for Bats~3--5. The observations are grouped into consecutive half-minute intervals, and the available coordinate values within each occupied interval are averaged without interpolating empty intervals. This produces 47, 45, 62, 241, and 165 retained locations for Bats~1--5, respectively. The records span 249.5, 133.0, 101.5, 167.0, and 119.0 minutes.

The application is organized in two stages. First, empirical
diagnostics describe temporal variation, dependence across scales,
movement direction, and cross-coordinate association. Second,
Gaussian likelihoods are used to compare marginal covariance
families and complete multivariate trajectory models. The two stages
keep the descriptive analysis separate from model fitting.

\subsection{Data preprocessing and empirical diagnostics}
\label{sub:empirical-stage}

The migration trajectories were recorded by \citet{mara} in Germany
during 2016--2017 and are publicly available at
\url{https://doi.org/10.5441/001/1.5d736bf0}. Longitude and Latitude
are retained in degrees for likelihood evaluation. Altitude,
available for Bats~3--5 and recorded in meters, is converted to
kilometers for likelihood analysis. Its observed ranges are
499.0--588.4, 461.4--1289.4, and 390.1--643.7 meters for
Bats~3--5, respectively.

The empirical diagnostics examine local level and variability,
serial dependence before and after differencing, scale-dependent
fluctuations, movement direction, and cross-coordinate association.
For Altitude, centered rolling means and sample variances are
calculated in complete windows of 10 successive retained
observations. Sample autocorrelation functions are computed for the
observed series and their first and second differences through lag
20. For each plotted series \(Z\), the displayed dashed lines are
the approximate pointwise white-noise bounds
\[
\pm\frac{1.96}{\sqrt{n_Z}},
\]
where \(n_Z\) is the length of that series. Thus, the bounds use
\(N\), \(N-1\), and \(N-2\) observations for the levels, first
differences, and second differences, respectively.

The augmented Dickey--Fuller (ADF) and KPSS tests are interpreted
jointly because they use different null hypotheses. The ADF null is
the presence of a unit root, whereas the KPSS null is level
stationarity. Rejection of the ADF null together with failure to
reject the KPSS null is compatible with stationarity. Failure to
reject both null hypotheses is inconclusive.

Detrended fluctuation analysis (DFA) summarizes how locally
detrended fluctuations change with the observation scale. For a
series \(X_1,\ldots,X_N\), define the centered cumulative profile
\[
Y(j)
=
\sum_{k=1}^{j}\bigl(X_k-\overline X\bigr),
\qquad
j=1,\ldots,N.
\]
For each scale \(s\), let
\[
q_s
=
\left\lfloor\frac{N}{s}\right\rfloor.
\]
The profile is partitioned into \(q_s\) complete nonoverlapping
windows of length \(s\), starting from the beginning of the series.
The partition is then repeated from the end, yielding \(2q_s\)
windows and retaining the observations otherwise excluded when
\(N\) is not divisible by \(s\). Let
\(\mathcal I_{\nu,s}\) denote window \(\nu\), and let
\(\widehat Y_{\nu,s}(j)\) be the linear trend fitted within that
window. Define
\[
F_{\nu}^{2}(s)
=
\frac{1}{s}
\sum_{j\in\mathcal I_{\nu,s}}
\left\{
Y(j)-\widehat Y_{\nu,s}(j)
\right\}^{2},
\qquad
\nu=1,\ldots,2q_s,
\]
and
\[
F(s)
=
\left\{
\frac{1}{2q_s}
\sum_{\nu=1}^{2q_s}
F_{\nu}^{2}(s)
\right\}^{1/2}.
\]
Up to 12 distinct integer scales are obtained by flooring a
logarithmically spaced grid from \(4\) to \(\lfloor N/4\rfloor\).
If \(J\) distinct scales are available, the reported DFA slope
\(\widehat{\alpha}\) is obtained by fitting
\[
\log F(s)=a+\alpha\log s
\]
over the largest \(\max\{4,\lceil J/2\rceil\}\) scales. It is treated
as a descriptive scaling coefficient, not as an estimate of a mifOU
Hurst parameter or as a formal test of long-range dependence.

Empty half-minute intervals are not interpolated. All
sequence-based diagnostics are therefore applied in retained
observation order: rolling windows, autocorrelation lags, and DFA
scales do not represent fixed increments of elapsed time. The
resulting diagnostics are interpreted descriptively for these
irregularly spaced records. Complete Altitude diagnostics are
reported in Figures~\ref{fig:altstat}--\ref{fig:altdfa} and Tables~\ref{tab:altstat}--\ref{tab:altdfa} of Appendix~\ref{ApenD}.

The Longitude and Latitude diagnostics follow
\citet{RamirezEtAl2026}. Across the ten horizontal coordinate
series, KPSS rejected level stationarity in every observed series,
whereas ADF rejected the unit-root null in only two cases. After
first differencing, the two tests were compatible with stationarity
for four series; the remaining six did not satisfy both criteria.
Several first-difference autocorrelation functions also contained
values outside the approximate pointwise bounds. The fitted DFA
slopes exceeded one in all ten series, but they are interpreted only
as finite-range scaling summaries. Thus, first differencing does not
provide a uniform stationary description of the horizontal
coordinates.

The corresponding Altitude results are summarized in
Table~\ref{tab:altitude-diagnostics-main}. For all three Altitude
series, ADF fails to reject the unit-root null. For Bats~3--4, KPSS
also fails to reject level stationarity, so the paired results are
inconclusive. For Bat~5, KPSS rejects level stationarity. After
first differencing, the ADF and KPSS results are compatible with
stationarity for Bats~4--5. The result for Bat~3 remains
inconclusive because ADF gives \(p=0.072\). The fitted DFA slopes
exceed one for all three trajectories, although the regression
interval for Bat~5 includes one. Taken together, these diagnostics
do not justify restricting the Altitude comparison to stationary
covariance models.

\begin{table}[H]
\centering
\small
\setlength{\tabcolsep}{4pt}
\renewcommand{\arraystretch}{1.12}
\begin{tabular}{crrrrc}
\toprule
& \multicolumn{2}{c}{Altitude level}
& \multicolumn{2}{c}{First difference}
& \shortstack{DFA slope\\(95\% regression interval)}\\
\cmidrule(lr){2-3}\cmidrule(lr){4-5}
Bat
& ADF \(p\)
& KPSS \(p\)
& ADF \(p\)
& KPSS \(p\)
& \(\widehat{\alpha}\)\\
\midrule
3
& 0.383
& \(\geq 0.100\)
& 0.072
& \(\geq 0.100\)
& \(1.578\ [1.241,1.915]\)\\
4
& 0.664
& 0.094
& 0.017
& \(\geq 0.100\)
& \(1.742\ [1.337,2.147]\)\\
5
& 0.525
& 0.024
& \(\leq 0.010\)
& \(\geq 0.100\)
& \(1.229\ [0.811,1.648]\)\\
\bottomrule
\end{tabular}
\caption{Selected Altitude diagnostics before model fitting. ADF
and KPSS \(p\)-values are reported for the observed levels and first
differences. The final column gives the DFA slope fitted over the
selected large-scale range and its regression interval. Values shown
as \(\leq 0.010\) or \(\geq 0.100\) are bounds of the tabulated
\(p\)-value approximations. Complete ADF and KPSS results, including
second differences, and the full DFA summaries are reported in
Tables~\ref{tab:altstat}--\ref{tab:altdfa} of Appendix~\ref{ApenD}.}
\label{tab:altitude-diagnostics-main}
\end{table}

Movement direction and cross-coordinate association are summarized
from successive displacements expressed in kilometers. Let
\(\lambda_k\) and \(\phi_k\) denote Longitude and Latitude in
radians, let \(A_k\) denote Altitude in meters, and let
\(\overline\phi\) be the trajectory-specific mean latitude. Define
\[
x_{1,k}
=
R_{\mathrm E}\cos(\overline\phi)(\lambda_k-\lambda_0),
\qquad
x_{2,k}
=
R_{\mathrm E}(\phi_k-\phi_0),
\qquad
x_{3,k}
=
\frac{A_k}{1000},
\]
where \(R_{\mathrm E}=6371.0088\ \mathrm{km}\) and
\((\lambda_0,\phi_0)\) is the first retained horizontal location.
For Bats~1--2, only the two horizontal coordinates are available.
For successive retained observations, write
\[
\Delta x_{a,k}
=
x_{a,k+1}-x_{a,k}.
\]
For a coordinate pair \((a,b)\), let
\[
\mathcal K_{a,b}
=
\left\{
k:
(\Delta x_{a,k},\Delta x_{b,k})\neq(0,0)
\right\}.
\]
For \(k\in\mathcal K_{a,b}\), define the projected displacement
direction by
\[
\theta_{a,b,k}
=
\operatorname{atan2}
\left(
\Delta x_{b,k},
\Delta x_{a,k}
\right)
\in(-\pi,\pi].
\]
If \(m_{a,b}=|\mathcal K_{a,b}|\), directional concentration is
summarized by the mean resultant length
\[
\overline R_{a,b}
=
\left|
\frac{1}{m_{a,b}}
\sum_{k\in\mathcal K_{a,b}}
e^{\,\mathrm{i}\theta_{a,b,k}}
\right|.
\]
Values near one indicate concentration around a common direction,
whereas values near zero indicate little directional concentration.

Because observation gaps are unequal, coordinate-specific
displacement rates are defined by
\[
u_{a,k}
=
\frac{\Delta x_{a,k}}{t_{k+1}-t_k},
\]
with time measured in minutes. Pearson correlations summarize
linear association between pairs of rate components, whereas
Spearman correlations summarize monotone rank association. The
reported two-sided \(p\)-values are conventional and unadjusted:
they account for neither serial dependence between adjacent rates
nor the multiple coordinate pairs examined. They are therefore
interpreted descriptively. These empirical correlations are not
estimates of the model parameters \(\rho_{i,j}\).

For Bat~1, the Longitude--Latitude association is weak and negative:
the Pearson and Spearman coefficients are \(-0.284\)
(\(p=0.0557\)) and \(-0.201\) (\(p=0.180\)), respectively. For
Bat~2, both associations are more clearly negative, with
coefficients \(-0.381\) (\(p=0.0107\)) and \(-0.420\)
(\(p=0.0045\)), respectively. For Bats~3--5, both measures identify
Longitude--Latitude as the principal associated coordinate pair.
The Pearson coefficients range from \(0.317\) to \(0.580\), and the
Spearman coefficients range from \(0.444\) to \(0.747\).
Associations involving Altitude are smaller: their Pearson
correlations have absolute value at most \(0.146\), and none of the
corresponding Pearson tests has an unadjusted \(p\)-value below
\(0.05\). Among the Spearman correlations involving Altitude, only
the Latitude--Altitude pair for Bat~5 has \(p<0.05\), with
coefficient \(0.186\) and \(p=0.0172\).

In the Longitude--Latitude plane, the mean resultant lengths for
Bats~3--5 are \(0.875\), \(0.765\), and \(0.079\), respectively.
Horizontal displacements are therefore strongly concentrated around
a preferred direction for Bats~3--4 but not for Bat~5. Directional
concentration and correlation between coordinate rates describe
different features of the trajectories and need not lead to the
same ordering across individuals. Complete directional summaries
and cross-coordinate correlations are reported in
Figures~\ref{fig:angles12}--\ref{fig:angles5} and Table~\ref{tab:desccorr} of Appendix~\ref{ApenD}.

For Bats~3--4, the positive Longitude--Latitude rate associations and
high mean resultant lengths describe coordinated horizontal movement
along a preferred direction. Bat~5 shows a positive horizontal rate
association but little directional concentration. Associations involving
Altitude are smaller. These summaries are descriptive and are not
assigned to a specific behavioral mechanism.

\subsection{Likelihood-based model comparison at half-minute resolution}
\label{sub:model-fitting-stage}

Complete-trajectory models are constructed from the seven univariate
families considered by \citet{RamirezEtAl2026}, together with
admissible correlated iOU and ifOU specifications. All independent
products are retained rather than only the product of the
coordinatewise winners. For Bats~1--2, the \(7^2=49\) independent
products are compared with correlated iOU and ifOU
Longitude--Latitude models, giving 51 candidates. For Bats~3--5,
the \(7^3=343\) independent products are augmented by
\(3\times2\times7=42\) models formed by one correlated iOU or ifOU
coordinate pair and one independent singleton family, together with
full three-dimensional iOU and ifOU models. Thus, each
three-dimensional comparison contains 387 candidates. The other five
univariate families enter a correlated candidate only as an
independent singleton component; no cross-covariance involving those
families is imposed.

Every correlated candidate is fitted by a separate maximization of
its Gaussian likelihood over all free marginal and dependence
parameters. In particular, a full three-dimensional model has its
own maximum-likelihood estimates and is not evaluated at estimates
inherited from an independent or reduced model.

At half-minute resolution, the common multivariate implementation is
used within each fixed combination of marginal families to quantify
the AIC change produced by introducing cross-coordinate dependence.
The resulting change is anchored to the coordinatewise reference AIC
values. Let \(C\) denote a correlated candidate and let \(I(C)\)
denote the independently fitted product model with the same
coordinate-specific marginal families. Write
\(\mathrm{AIC}_{\mathrm g}\) for the AIC obtained with the common
multivariate implementation. For coordinate \(j\), let
\(\mathrm{AIC}^{\mathrm{ref}}_j(M_j)\) denote the reference
univariate AIC of family \(M_j\). For Longitude and Latitude, these
are exactly the values reported in Table~1 of
\citet{RamirezEtAl2026}; for Altitude, they are those in
Table~\ref{tab:altitude-all-main}. The calibrated score is
\[
\mathrm{AIC}_{\mathrm{cal}}(C)
=
\sum_j\mathrm{AIC}^{\mathrm{ref}}_j(M_j)
+
\left\{
\mathrm{AIC}_{\mathrm g}(C)
-
\mathrm{AIC}_{\mathrm g}\!\left(I(C)\right)
\right\},
\]
where the sum is over the coordinates observed for that individual.
The term in braces compares two separate maximum-likelihood fits
under the same marginal-family combination and the same likelihood
implementation; it therefore does not hold the marginal parameters
at the estimates from \(I(C)\). Absolute values of
\(\mathrm{AIC}_{\mathrm g}\) are not compared across different
marginal-family combinations. For each independent product,
\(\mathrm{AIC}_{\mathrm{cal}}\) is the sum of its coordinatewise
reference AIC values.

The reference AIC values retain the parameter counts used in their
respective univariate comparisons. In the common multivariate fits
entering the term in braces, \(C\) and \(I(C)\) use the same nominal
marginal parametrizations, so all shared marginal counts cancel from
their AIC difference. Complete-model parameter counts include the
nominal marginal dimensions and the active correlations; within
these multivariate fits, \(\zeta\) is represented by
\((\beta,\sigma^2)\), including when its optimum occurs at
\(\widehat\beta=0\). The ifOU, iOU, and multivariate OU-position
fits use \(0.001<\beta\leq400\), and each two- or
three-dimensional block is maximized separately over its complete
parameter vector. The newly fitted Confluent Altitude components and
the Confluent fits at one- and two-minute resolution use
\(0.001<\alpha\leq100\) and
\(0.001<\beta\leq2000\).  Weights derived from
\(\mathrm{AIC}_{\mathrm{cal}}\) are normalized separately within
each bat over the 51 or 387 candidates. Tables~9--10 of Web
Appendix~\ref{ApenD} separate the evidence for estimated Hurst parameters from
the evidence for cross-coordinate correlation, Table~\ref{tab:aic-components} reports the
two common-fit AIC values and their difference, and Tables~12--13
report the one- and two-minute analyses.

Information criteria are compared only among candidates fitted to
the same response vector for the same individual. In particular,
bivariate Longitude--Latitude likelihoods are not compared with
three-dimensional likelihoods. For Bats~3--5, every complete
candidate uses the same Longitude, Latitude, and Altitude
observations, so criterion differences reflect covariance family,
dependence structure, and parameter count rather than the inclusion
of an additional coordinate.

\subsubsection{Marginal covariance models for Altitude}
\label{sub:altitude-models}

Each observed Altitude series is fitted with the same seven families
used in the Longitude and Latitude comparison: the exponential-weight
\(\zeta\) process; ifOU \(\mu_H\);
iOU \(\mu_{1/2}\); scaled fractional Brownian motion \(\sigma W_H\);
the stationary confluent hypergeometric covariance; and the
stationary Stein models \(S^{(2)}\) and \(S^{(3)}\).

Suppose Bat \(b\) has \(N_b\) retained Altitude observations, and let
\(A_b(t_i)\) denote the recorded Altitude in meters, with the first
retained observation indexed by \(t_0=0\). The analyzed response is
\[
Y_{b,i}
=
\frac{A_b(t_i)-A_b(0)}{1000},
\qquad i=1,\ldots,N_b-1,
\]
and is therefore expressed in kilometers relative to the first
retained observation. Within each bat, all seven models are fitted
to the same response vector and observation times. For a model with
parameter vector \(\theta\) and \(k\) estimated parameters, the
reported criterion is
\[
\mathrm{AIC}
=
-2\ell(\widehat{\theta})+2k.
\]
This comparison is strictly univariate and estimates no correlation
between Altitude and the horizontal coordinates. For a stationary
candidate, subtraction of the initial observed Altitude is
deterministic centering; its covariance is therefore evaluated as
\(R(t_i-t_j)\), rather than as the covariance of a stochastic
difference from \(A_b(0)\).

For \(\zeta\), the interior specification \(\beta>0\) and the
restriction \(\beta=0\) are fitted separately. The interior model
uses \(k=2\), whereas the boundary model fixes \(\beta=0\) and uses
\(k=1\); the lower-AIC specification represents the family. 

Table~\ref{tab:altitude-all-main} reports the seven Altitude fits for
each individual. For Bat~3, fBM has the smallest AIC, while iOU and
ifOU are \(0.605\) and \(2.020\) units higher, respectively; both
OU-position fits attain \(\beta_{\max}=400\). For Bat~4,
\(\zeta\) has Akaike weight \(0.994\) and an AIC \(10.695\) units
lower than that of ifOU. For Bat~5, iOU has the smallest AIC, while
ifOU and \(\zeta\) are \(0.819\) and \(2.577\) units higher,
respectively. Thus, the minimum-AIC Altitude family differs among
the three individuals.

\begin{table}[H]
\centering
\scriptsize
\setlength{\tabcolsep}{3pt}
\renewcommand{\arraystretch}{1.02}
\setlength{\abovecaptionskip}{3pt}
\setlength{\belowcaptionskip}{0pt}
\begin{tabular}{@{}clccr@{}}
\toprule
Bat ID
& Model
& \shortstack{Estimated\\parameters}
& \shortstack{Maximum-likelihood\\estimates}
& AIC\\
\midrule
3 & $\zeta$
  & $(\beta,\sigma^2)$
  & $(0.026807,2.8523\times10^{-4})$
  & -376.783\\
  & ifOU $\mu_H$
  & $(\beta,\sigma,H)$
  & $(400^*,3.3276,0.44914)$
  & -409.818\\
  & iOU $\mu_{1/2}$
  & $(\beta,\sigma)$
  & $(400^*,3.3428)$
  & -411.233\\
  & \textbf{fBM $\sigma W_H$}
  & $(H,\sigma)$
  & $(0.45014,0.008295)$
  & \textbf{-411.838}\\
  & Confluent
  & $(\sigma^2,\eta,\alpha,\beta)$
  & $(0.001906,0.4585,100.000,1027.162)$
  & -403.608\\
  & Stein $S^{(2)}$
  & $(\sigma^2,\lambda)$
  & $(0.002089,0.01733)$
  & -407.247\\
  & Stein $S^{(3)}$
  & $(\sigma^2,\lambda)$
  & $(0.001997,0.01205)$
  & -407.284\\
\midrule
4 & \textbf{$\zeta$}
  & $(\beta,\sigma^2)$
  & $(0.005518,1.7032\times10^{-4})$
  & \textbf{-1532.139}\\
  & ifOU $\mu_H$
  & $(\beta,\sigma,H)$
  & $(0.06778,0.04709,0.06148)$
  & -1521.443\\
  & iOU $\mu_{1/2}$
  & $(\beta,\sigma)$
  & $(1.0041,0.03699)$
  & -1509.059\\
  & fBM $\sigma W_H$
  & $(H,\sigma)$
  & $(0.96548,0.04102)$
  & -1518.242\\
  & Confluent
  & $(\sigma^2,\eta,\alpha,\beta)$
  & $(0.2187,1.0334,0.4358,51.418)$
  & -1511.882\\
  & Stein $S^{(2)}$
  & $(\sigma^2,\lambda)$
  & $(0.2373,7.738\times10^{-4})$
  & -1324.613\\
  & Stein $S^{(3)}$
  & $(\sigma^2,\lambda)$
  & $(0.2085,5.872\times10^{-4})$
  & -1324.741\\
\midrule
5 & $\zeta$
  & $(\beta,\sigma^2)$
  & $(0.012414,2.0344\times10^{-4})$
  & -1059.092\\
  & ifOU $\mu_H$
  & $(\beta,\sigma,H)$
  & $(0.87739,0.03180,0.30353)$
  & -1060.851\\
  & \textbf{iOU $\mu_{1/2}$}
  & $(\beta,\sigma)$
  & $(1.6401,0.03799)$
  & \textbf{-1061.670}\\
  & fBM $\sigma W_H$
  & $(H,\sigma)$
  & $(0.88172,0.02126)$
  & -1045.823\\
  & Confluent
  & $(\sigma^2,\eta,\alpha,\beta)$
  & $(0.004594,1.2053,0.3788,5.963)$
  & -1054.243\\
  & Stein $S^{(2)}$
  & $(\sigma^2,\lambda)$
  & $(0.008746,0.01164)$
  & -995.987\\
  & Stein $S^{(3)}$
  & $(\sigma^2,\lambda)$
  & $(0.007222,0.009401)$
  & -996.255\\
\bottomrule
\end{tabular}
\caption{Univariate Altitude fits at half-minute resolution.
MLEs follow the parameter order shown; bold marks the minimum AIC
within each bat and a star marks \(\beta_{\max}=400\). The Confluent
fits use \(\alpha\in(0.001,100]\) and
\(\beta\in(0.001,2000]\), with Bat~3 attaining
\(\alpha_{\max}=100\). Table~\ref{tab:altitude-top3-supp} of Appendix~\ref{ApenD} reports the three
leading models, their \(\Delta\)AIC values, and Akaike weights.}
\label{tab:altitude-all-main}
\end{table}

The selected Altitude families imply different temporal structures.
Bat~3 selects fBM with \(\widehat H=0.4501\): its level is
nonstationary, while its increments are stationary and have short
memory; iOU is only \(0.605\) AIC units higher. Bat~4 selects
\(\zeta\), \(10.695\) units ahead of ifOU; this process is neither
stationary nor intrinsically stationary and has logarithmic
covariance growth. Bat~5 selects iOU, with ifOU only \(0.819\) units
higher; at \(H=1/2\), its separated-increment covariance decays
exponentially. The selected Altitude models therefore do not share a
stationarity or memory~regime~.

\begin{landscape}
\vspace*{0.5em}
\subsubsection{Complete-trajectory model comparison}
\label{sub:complete-models}

\begin{table}[H]
\centering
\tiny
\setlength{\tabcolsep}{2.8pt}
\renewcommand{\arraystretch}{1.00}
\caption{Three lowest calibrated-AIC complete-trajectory models at
half-minute resolution. Each correlated candidate was fitted by a
separate joint maximum-likelihood optimization, and its reported
estimates are specific to that model. The horizontal independent fits
and AIC values are those of Table~1 in \citet{RamirezEtAl2026}.}
\label{tab:topthree-main}
\begin{tabularx}{\linewidth}{@{}cc>{\raggedright\arraybackslash}p{.18\linewidth}>{\raggedright\arraybackslash}X>{\raggedright\arraybackslash}p{.165\linewidth}>{\raggedleft\arraybackslash}p{.14\linewidth}@{}}
\toprule
Bat & Rank & Model & Model-specific estimates & Active correlations & Model support\\
\midrule
1 & \textbf{1} & \textbf{fBM(Lon)\(\oplus\)fBM(Lat)} & \textbf{Lon fBM \((H,\sigma)=(0.70300,0.006)\);\newline
Lat fBM \((H,\sigma)=(0.77900,0.004)\)} & \textbf{zero (independent)} & \textbf{\shortstack[r]{$k=4$, $\mathrm{AIC}_{\rm cal}=-532.067$\\$\Delta=0.000$, $w=0.500$}}\\
 & 2 & ifOU(Lon)\(\oplus\)fBM(Lat) & Lon ifOU \((\beta^*,\sigma,H)=(118.714^{*},0.72200,0.70300)\);\newline
Lat fBM \((H,\sigma)=(0.77900,0.004)\) & zero (independent) & \shortstack[r]{$k=5$, $\mathrm{AIC}_{\rm cal}=-530.042$\\$\Delta=2.025$, $w=0.182$}\\
 & 3 & fBM(Lon)\(\oplus\)ifOU(Lat) & Lon fBM \((H,\sigma)=(0.70300,0.006)\);\newline
Lat ifOU \((\beta^*,\sigma,H)=(219.167^{*},0.85900,0.77900)\) & zero (independent) & \shortstack[r]{$k=5$, $\mathrm{AIC}_{\rm cal}=-530.024$\\$\Delta=2.043$, $w=0.180$}\\
\addlinespace[2pt]
2 & \textbf{1} & \textbf{fBM(Lon)\(\oplus\)fBM(Lat)} & \textbf{Lon fBM \((H,\sigma)=(0.68200,0.006)\);\newline
Lat fBM \((H,\sigma)=(0.69400,0.005)\)} & \textbf{zero (independent)} & \textbf{\shortstack[r]{$k=4$, $\mathrm{AIC}_{\rm cal}=-555.663$\\$\Delta=0.000$, $w=0.465$}}\\
 & 2 & fBM(Lon)\(\oplus\)ifOU(Lat) & Lon fBM \((H,\sigma)=(0.68200,0.006)\);\newline
Lat ifOU \((\beta^*,\sigma,H)=(99.1470^{*},0.49400,0.69400)\) & zero (independent) & \shortstack[r]{$k=5$, $\mathrm{AIC}_{\rm cal}=-553.641$\\$\Delta=2.022$, $w=0.169$}\\
 & 3 & ifOU(Lon)\(\oplus\)fBM(Lat) & Lon ifOU \((\beta^*,\sigma,H)=(101.543^{*},0.62500,0.68200)\);\newline
Lat fBM \((H,\sigma)=(0.69400,0.005)\) & zero (independent) & \shortstack[r]{$k=5$, $\mathrm{AIC}_{\rm cal}=-553.640$\\$\Delta=2.023$, $w=0.169$}\\
\addlinespace[2pt]
3 & \textbf{1} & \textbf{ifOU(Lon,Lat)\(\oplus\)fBM(Alt)} & \textbf{Lon ifOU \((\beta,\sigma,H)=(8.3005,0.02857,0.84145)\);\newline
Lat ifOU \((\beta,\sigma,H)=(2.7698,0.008055,0.74152)\);\newline
Alt fBM \((H,\sigma)=(0.45014,0.008295)\)} & \textbf{\(\widehat\rho_{\rm Lon,Lat}=0.639\)} & \textbf{\shortstack[r]{$k=9$, $\mathrm{AIC}_{\rm cal}=-1585.780$\\$\Delta=0.000$, $w=0.358$}}\\
 & 2 & ifOU(Lon,Lat)\(\oplus\)iOU(Alt) & Lon ifOU \((\beta,\sigma,H)=(8.3005,0.02857,0.84145)\);\newline
Lat ifOU \((\beta,\sigma,H)=(2.7698,0.008055,0.74152)\);\newline
Alt iOU \((\beta,\sigma)=(400^{*},3.3428)\) & \(\widehat\rho_{\rm Lon,Lat}=0.639\) & \shortstack[r]{$k=9$, $\mathrm{AIC}_{\rm cal}=-1585.175$\\$\Delta=0.605$, $w=0.264$}\\
 & 3 & ifOU(Lon,Lat,Alt) & Lon ifOU \((\beta,\sigma,H)=(6.7097,0.02323,0.83045)\);\newline
Lat ifOU \((\beta,\sigma,H)=(2.3093,0.006953,0.71698)\);\newline
Alt ifOU \((\beta,\sigma,H)=(394.234,3.3018,0.45186)\) & \((0.631,0.183,0.338)\) in Lon--Lat, Lon--Alt, Lat--Alt order & \shortstack[r]{$k=12$, $\mathrm{AIC}_{\rm cal}=-1584.285$\\$\Delta=1.495$, $w=0.169$}\\
\bottomrule
\end{tabularx}
\end{table}

\newpage
\begin{table}[H]
\ContinuedFloat
\centering
\tiny
\setlength{\tabcolsep}{2.8pt}
\renewcommand{\arraystretch}{1.00}
\caption[]{Three lowest calibrated-AIC complete-trajectory models at half-minute resolution (continued).}
\begin{tabularx}{\linewidth}{@{}cc>{\raggedright\arraybackslash}p{.18\linewidth}>{\raggedright\arraybackslash}X>{\raggedright\arraybackslash}p{.165\linewidth}>{\raggedleft\arraybackslash}p{.14\linewidth}@{}}
\toprule
Bat & Rank & Model & Model-specific estimates & Active correlations & Model support\\
\midrule
4 & \textbf{1} & \textbf{ifOU(Lon,Lat)\(\oplus\)\(\zeta\)(Alt)} & \textbf{Lon ifOU \((\beta,\sigma,H)=(141.330,0.66776,0.87475)\);\newline
Lat ifOU \((\beta,\sigma,H)=(22.1510,0.06221,0.88398)\);\newline
Alt \(\zeta\) \((\beta,\sigma^2)=(0.005518,1.7032\times10^{-4})\)} & \textbf{\(\widehat\rho_{\rm Lon,Lat}=0.579\)} & \textbf{\shortstack[r]{$k=9$, $\mathrm{AIC}_{\rm cal}=-6417.739$\\$\Delta=0.000$, $w=0.993$}}\\
 & 2 & ifOU(Lon,Lat)\(\oplus\)ifOU(Alt) & Lon ifOU \((\beta,\sigma,H)=(141.330,0.66776,0.87475)\);\newline
Lat ifOU \((\beta,\sigma,H)=(22.1510,0.06221,0.88398)\);\newline
Alt ifOU \((\beta,\sigma,H)=(0.06778,0.04709,0.06148)\) & \(\widehat\rho_{\rm Lon,Lat}=0.579\) & \shortstack[r]{$k=10$, $\mathrm{AIC}_{\rm cal}=-6407.044$\\$\Delta=10.695$, $w=0.005$}\\
 & 3 & ifOU(Lon,Lat,Alt) & Lon ifOU \((\beta,\sigma,H)=(202.086,0.95648,0.87514)\);\newline
Lat ifOU \((\beta,\sigma,H)=(22.7547,0.06333,0.88185)\);\newline
Alt ifOU \((\beta,\sigma,H)=(0.07884,0.03915,0.09202)\) & \((0.577,0.026,0.042)\) in Lon--Lat, Lon--Alt, Lat--Alt order & \shortstack[r]{$k=12$, $\mathrm{AIC}_{\rm cal}=-6404.460$\\$\Delta=13.278$, $w=0.001$}\\
\addlinespace[2pt]
5 & \textbf{1} & \textbf{\(\zeta\)(Lon)\(\oplus\)Confluent(Lat)\(\oplus\)iOU(Alt)} & \textbf{Lon \(\zeta\) \((\beta,\sigma^2)=(0.001,4.1660\times10^{-6})\);\newline
Lat Confluent \((\sigma^2,\eta,\alpha,\beta)=(0.005,0.90100,9.0990,406.667)\);\newline
Alt iOU \((\beta,\sigma)=(1.6401,0.03799)\)} & \textbf{zero (independent)} & \textbf{\shortstack[r]{$k=8$, $\mathrm{AIC}_{\rm cal}=-4299.775$\\$\Delta=0.000$, $w=0.105$}}\\
 & 2 & \(\zeta\)(Lon)\(\oplus\)fBM(Lat)\(\oplus\)iOU(Alt) & Lon \(\zeta\) \((\beta,\sigma^2)=(0.001,4.1660\times10^{-6})\);\newline
Lat fBM \((H,\sigma)=(0.88800,0.003)\);\newline
Alt iOU \((\beta,\sigma)=(1.6401,0.03799)\) & zero (independent) & \shortstack[r]{$k=6$, $\mathrm{AIC}_{\rm cal}=-4299.147$\\$\Delta=0.629$, $w=0.077$}\\
 & 3 & \(\zeta\)(Lon)\(\oplus\)Confluent(Lat)\(\oplus\)ifOU(Alt) & Lon \(\zeta\) \((\beta,\sigma^2)=(0.001,4.1660\times10^{-6})\);\newline
Lat Confluent \((\sigma^2,\eta,\alpha,\beta)=(0.005,0.90100,9.0990,406.667)\);\newline
Alt ifOU \((\beta,\sigma,H)=(0.87739,0.03180,0.30353)\) & zero (independent) & \shortstack[r]{$k=9$, $\mathrm{AIC}_{\rm cal}=-4298.956$\\$\Delta=0.819$, $w=0.070$}\\
\bottomrule
\end{tabularx}
\vspace{2pt}
\begin{minipage}{0.985\linewidth}
\tiny
\textit{Notes.} The quantities \(\Delta\) and \(w\) are based on
\(\mathrm{AIC}_{\mathrm{cal}}\), with weights normalized separately
within each bat over the 51 or 387 candidates. Parameter order is
\((\beta,\sigma,H)\) for ifOU and \((\beta,\sigma)\) for iOU, with
\(H=1/2\) fixed. A star marks the numerical bound
\(\beta_{\max}=400\) in a multivariate fit. In an independent row,
\(\beta^*\) retains the profile-plateau meaning specified in Table~1 of
\citet{RamirezEtAl2026}. Correlation triples are ordered as
Longitude--Latitude, Longitude--Altitude, and Latitude--Altitude. Each
full three-dimensional model is fitted jointly and has its own
coordinate-specific estimates. Independent components shown here use
the univariate reference estimates whose AIC values anchor
\(\mathrm{AIC}_{\mathrm{cal}}\).
\end{minipage}
\end{table}
\end{landscape}

Table~\ref{tab:topthree-main} reports the three lowest calibrated
AIC values for each individual. Every correlated row results from a
separate maximization over all free marginal and dependence
parameters. Tables~\ref{tab:familyaic}--\ref{tab:3daic} of Appendix~\ref{ApenD} report the two restricted
OU-position comparisons. Table~\ref{tab:aic-components} gives the global-refit increment and
the reference marginal AIC sum entering the calibrated score, and
Tables~12--13 report the one- and two-minute rankings and
penalty-sensitivity results.

The restricted comparisons separate evidence for fractional temporal
structure from evidence for cross-coordinate dependence. Ordinary AIC and AICc
favor mifOU over the \(H_i=1/2\) miOU restriction for all five
trajectories after optimizing the dependence structure within each
temporal family. After allowing iOU as the \(H=1/2\) special case within
each dependence class, the restricted OU-position comparison selects
independent coordinates for Bats~1--2 and a correlated
Longitude--Latitude block for Bats~3--5. The complete seven-family
ranking, however, retains a more flexible independent marginal product
for Bat~5.

\paragraph{Empirical associations and fitted dependence}

The Pearson and Spearman coefficients summarize association between
adjacent coordinate-specific displacement rates, whereas
\(\widehat\rho\) parametrizes dependence in the latent mfBm driver.
These quantities refer to different stochastic objects and need not
agree in magnitude or sign. The empirical coefficients are used as
descriptive context; evidence for dependence in the fitted model is
assessed through likelihood comparison.

For Bat~1, the complete ranking selects the independent
fBM(Longitude)\(\oplus\)fBM(Latitude) product, with calibrated
Akaike weight \(0.500\). Its three leading complete models are
independent products, consistent with the weak descriptive
Longitude--Latitude association. 

For Bat~2, the same independent fBM product has the smallest
calibrated AIC, with weight \(0.465\). Although the observed
coordinate-specific rates show a negative association, the three
leading complete models are again independent products. 

For Bat~3, all three leading models contain a correlated ifOU
Longitude--Latitude block. The first two estimate
\(\widehat\rho_{\rm Lon,Lat}=0.639\) and combine the
block with independent fBM or iOU Altitude components. The full
three-dimensional ifOU model remains competitive in the complete
calibrated ranking and estimates
\((\widehat\rho_{\rm Lon,Lat},\widehat\rho_{\rm Lon,Alt},
\widehat\rho_{\rm Lat,Alt})=(0.631,0.183,0.338)\).
Longitude--Latitude remains the principal fitted pair, although the
small separation of the full model does not clearly exclude additional
dependence involving Altitude.

For Bat~4, all three leading models contain a correlated ifOU
Longitude--Latitude block. The minimum combines this block, with
\(\widehat\rho_{\rm Lon,Lat}=0.579\), and an
independent \(\zeta\) Altitude component; its Akaike weight is
\(0.993\). The full three-dimensional ifOU model
has \(\Delta\mathrm{AIC}_{\rm cal}=13.278\) relative to the
overall complete-model minimum and estimates the two correlations
involving Altitude as
\(0.026\) and \(0.042\). Within the restricted OU-position comparison, the full model is
\(2.583\) ordinary-AIC units above the reduced
Longitude--Latitude model. The \(10.695\)-unit
separation between the first two complete models concerns the
Altitude marginal family and favors \(\zeta\) over ifOU.

For Bat~5, the three leading models are independent products. The
minimum combines \(\zeta\) Longitude, Confluent Latitude, and iOU
Altitude and has Akaike weight \(0.105\).
Replacing the Confluent Latitude component by fBM raises the calibrated
AIC by \(0.629\), whereas replacing iOU Altitude by ifOU raises it by
\(0.819\). The best correlated model is
ifOU(Lon,Lat)\(\oplus\)iOU(Alt), with
\(\widehat\rho_{\rm Lon,Lat}=-0.194\). It ranks eighth and lies
\(\Delta\mathrm{AIC}_{\rm cal}=1.986\) above the complete-model
minimum. The complete ranking therefore selects the independent
marginal product, although the correlated alternative remains
competitive.

The selected complete-trajectory models contain nonstationary
position components. For Bats~1--2, both selected components are
scaled fBM processes and all four fitted Hurst parameters exceed
\(1/2\). For Bats~3--4, the selected Longitude--Latitude blocks are
ifOU models. Their fitted Hurst sums are
\(0.84145+0.74152=1.58297\) and
\(0.87475+0.88398=1.75873\),
respectively; together with the nonzero fitted correlations, both
horizontal pairs have long memory for separated cross-coordinate
increments. Bat~3 has an independent fBM Altitude component with
\(\widehat H=0.4501\), Bat~4 an independent \(\zeta\) Altitude
component, and Bat~5 independent \(\zeta\), Confluent, and iOU
components. Thus, the selected models do not imply a common
stationarity or memory regime across coordinates.
\subsubsection{Sensitivity to the complexity penalty and temporal resolution}

To examine the effect of changing only the complexity penalty in the
half-minute ranking, define an implied fit term
\(\ell_{\mathrm{cal}}\) by
\[
-2\ell_{\mathrm{cal}}
:=
\mathrm{AIC}_{\mathrm{cal}}-2k
\]
and compute
\[
\mathrm{AICc}_{\mathrm{cal}}(m)
=
\mathrm{AIC}_{\mathrm{cal}}
+
\frac{2k(k+1)}{m-k-1},
\qquad
\mathrm{BIC}_{\mathrm{cal}}(m)
=
-2\ell_{\mathrm{cal}}+k\log m,
\]
where \(m=n-1\) is the number of retained transitions and \(k\) is
the complete-model parameter count. These quantities are
penalty-sensitivity scores derived from
\(\mathrm{AIC}_{\mathrm{cal}}\); they are not presented as
small-sample criteria newly derived for the irregular multivariate
Gaussian model. The AICc correction was developed for regression
and autoregressive settings \citep{HurvichTsai1989}. Here each
observation time is counted once, rather than treating simultaneous
coordinates as independent replicates. The BIC-type score replaces
the AIC penalty by \(k\log m\) \citep{Schwarz1978}.

\begin{table}[H]
\centering
\scriptsize
\setlength{\tabcolsep}{3.5pt}
\begin{tabular}{crlrrrr}
\toprule
Bat & \(m\) & \(\mathrm{AIC}_{\rm cal}\)-selected model & \(k\) & \(\mathrm{AIC}_{\rm cal}\) & \(\mathrm{AICc}_{\rm cal}(m)\) & \(\mathrm{BIC}_{\rm cal}(m)\)\\
\midrule
1 & 46 & fBM(Lon)\(\oplus\)fBM(Lat) & 4 & -532.067 & -531.091 & -524.752\\
2 & 44 & fBM(Lon)\(\oplus\)fBM(Lat) & 4 & -555.663 & -554.637 & -548.526\\
3 & 61 & ifOU(Lon,Lat)\(\oplus\)fBM(Alt) & 9 & -1585.780 & -1582.251 & -1566.782\\
4 & 240 & ifOU(Lon,Lat)\(\oplus\)\(\zeta\)(Alt) & 9 & -6417.739 & -6416.956 & -6386.413\\
5 & 164 & \(\zeta\)(Lon)\(\oplus\)Confluent(Lat)\(\oplus\)iOU(Alt) & 8 & -4299.775 & -4298.846 & -4274.976\\
\bottomrule
\end{tabular}
\captionsetup{justification=raggedright,singlelinecheck=false}
\caption{Calibrated-AIC winners and their penalty-sensitivity scores at
half-minute resolution, with \(m=n-1\) retained transitions.
\(\mathrm{AICc}_{\rm cal}\) selects the same rows.
\(\mathrm{BIC}_{\rm cal}\) selects the same rows for Bats~1--4 and
selects \(\zeta\)(Lon)\(\oplus\)fBM(Lat)\(\oplus\)iOU(Alt) for Bat~5, with
\(\mathrm{BIC}_{\rm cal}=-4280.548\).}
\label{tab:criterion-sensitivity-main}
\end{table}

At half-minute resolution, \(\mathrm{AIC}_{\rm cal}\) and
\(\mathrm{AICc}_{\rm cal}\) select the same model for every
individual. \(\mathrm{BIC}_{\rm cal}\) differs only for the Bat~5
Latitude marginal, where it favors fBM over Confluent. The data were also aggregated into nonoverlapping
one- and two-minute intervals, and the complete candidate set was
refitted at each resolution. Because all coarser-resolution candidates
use the same likelihood implementation, their ordinary AIC, AICc, and
BIC values are compared directly. Complete rankings are reported in
Tables~\ref{tab:resolution-top3}--\ref{tab:criterion-sensitivity-supp} of Appendix~\ref{ApenD}.

Under the global complete-model criteria, Bats~1--2 retain
independent horizontal products at both coarser resolutions. Bat~1
selects fBM for both coordinates at one minute; at two minutes AIC and
AICc select fBM Longitude with Confluent Latitude, whereas BIC retains
fBM for both coordinates. Bat~2 retains fBM for both coordinates under
every resolution and penalty. Bat~3 retains the
correlated Longitude--Latitude block with independent fBM Altitude;
its iOU Altitude counterpart is only \(0.130\) and
\(0.045\) AIC units higher at one and two minutes. Bat~4
retains the horizontal block, while its selected Altitude family
changes from \(\zeta\) at half a minute to ifOU under one-minute AIC
and to iOU under one-minute AICc and BIC and under all three two-minute
criteria. The one-minute ifOU and iOU models differ by only
\(0.053\) AIC units. For Bat~5, the global complete-model minimum
remains an independent product at all three resolutions; its Latitude
family changes from Confluent at half a minute to \(\zeta\) at one and
two minutes. The global complete-model pattern is stable under
aggregation, whereas several marginal-family choices remain sensitive
to resolution or penalty. This statement concerns joint marginal-and-dependence selection
and does not replace a direct assessment of the correlation term. The
following subsection therefore returns to the half-minute records and
separates horizontal marginal fit from cross-coordinate dependence.

\subsubsection{Separating horizontal marginal fit from cross-coordinate dependence}
\label{subsub:horizontal-dependence-isolation}

The complete ranking in Table~\ref{tab:topthree-main} compares marginal
covariance families and dependence structures simultaneously. This is
the relevant comparison for selecting a complete covariance model, but
it does not isolate the contribution of a cross-coordinate correlation.
The independent class may combine any of the seven marginal families,
whereas correlated blocks are available only within the iOU and ifOU
families. Consequently, an independent product can attain a smaller
global criterion score because it describes one or both marginal
coordinates better, even when introducing correlation improves fit
within a matched temporal family. A global independent minimum should therefore not be
interpreted, by itself, as evidence that
\(\rho_{\rm Lon,Lat}=0\).

We next compare the bivariate Longitude--Latitude records for all five
bats. For Bats~3--5, Altitude is removed from both the response and
the score. For each individual, the candidate set contains
the \(7^2=49\) independent products of the marginal families and the
separately maximized correlated iOU and ifOU blocks. All 51 candidates
use the same two coordinates and observation times. At half-minute
resolution, the calibrated score is defined as in Section~\ref{sub:model-fitting-stage}, but the
reference sum contains only Longitude and Latitude:
\[
 \operatorname{AIC}^{(2D)}_{\rm cal}(C)
 =
 \sum_{j\in\{\mathrm{Lon},\mathrm{Lat}\}}
 \operatorname{AIC}^{\rm ref}_{j}(M_j)
 +
 \left\{
 \operatorname{AIC}^{(2D)}_{g}(C)
 -
 \operatorname{AIC}^{(2D)}_{g}\!\left(I(C)\right)
 \right\}.
\]
Here \(I(C)\) is the independently fitted product with the same
coordinate-specific families as \(C\). For an independent candidate,
\(C=I(C)\) and the term in braces is zero. Each correlated block is
maximized over its own marginal and dependence parameters; no estimates
are inherited from an independent fit.

{\scriptsize
\setlength{\tabcolsep}{3.0pt}
\renewcommand{\arraystretch}{1.04}
\begin{longtable}{@{}cc>{\raggedright\arraybackslash}p{.285\textwidth}crrr>{\raggedright\arraybackslash}p{.185\textwidth}@{}}
\caption{Five lowest calibrated-AIC models for the bivariate
Longitude--Latitude records of Bats~1--5 at half-minute resolution.
For Bats~3--5, Altitude is excluded from both the response and the
score. Each individual is compared over 49 independent marginal
products and two separately maximized correlated blocks. Akaike weights
are normalized over the 51 candidates fitted to that individual.}
\label{tab:horizontal-top5-main}\\
\toprule
Bat & Rank & Horizontal model & \(k\) & \(\mathrm{AIC}^{(2D)}_{\rm cal}\) & \(\Delta\mathrm{AIC}^{(2D)}_{\rm cal}\) & \(w\) & Active correlation\\
\midrule
\endfirsthead
\multicolumn{8}{c}{\tablename\ \thetable\ (continued)}\\
\toprule
Bat & Rank & Horizontal model & \(k\) & \(\mathrm{AIC}^{(2D)}_{\rm cal}\) & \(\Delta\mathrm{AIC}^{(2D)}_{\rm cal}\) & \(w\) & Active correlation\\
\midrule
\endhead
\midrule
\multicolumn{8}{r}{Continued on next page}\\
\endfoot
\bottomrule
\endlastfoot
1 & 1 & fBM(Lon)\(\oplus\)fBM(Lat) & 4 & -532.067 & 0.000 & 0.500 & --\\*
  & 2 & ifOU(Lon)\(\oplus\)fBM(Lat) & 5 & -530.042 & 2.025 & 0.182 & --\\*
  & 3 & fBM(Lon)\(\oplus\)ifOU(Lat) & 5 & -530.024 & 2.043 & 0.180 & --\\*
  & 4 & ifOU(Lon)\(\oplus\)ifOU(Lat) & 6 & -527.999 & 4.068 & 0.065 & --\\*
  & 5 & ifOU(Lon,Lat) & 7 & -526.848 & 5.219 & 0.037 & \(\widehat\rho_{\rm Lon,Lat}=-0.147\)\\
\midrule
2 & 1 & fBM(Lon)\(\oplus\)fBM(Lat) & 4 & -555.663 & 0.000 & 0.465 & --\\*
  & 2 & fBM(Lon)\(\oplus\)ifOU(Lat) & 5 & -553.641 & 2.022 & 0.169 & --\\*
  & 3 & ifOU(Lon)\(\oplus\)fBM(Lat) & 5 & -553.640 & 2.023 & 0.169 & --\\*
  & 4 & ifOU(Lon)\(\oplus\)ifOU(Lat) & 6 & -551.618 & 4.045 & 0.062 & --\\*
  & 5 & ifOU(Lon,Lat) & 7 & -551.182 & 4.481 & 0.049 & \(\widehat\rho_{\rm Lon,Lat}=-0.211\)\\
\midrule
3 & 1 & ifOU(Lon,Lat) & 7 & -1173.942 & 0.000 & 0.99996 & \(\widehat\rho_{\rm Lon,Lat}=0.639\)\\*
  & 2 & iOU(Lon,Lat) & 5 & -1151.519 & 22.423 & \(1.35\times10^{-5}\) & \(\widehat\rho_{\rm Lon,Lat}=0.659\)\\*
  & 3 & \(\zeta\)(Lon)\(\oplus\)ifOU(Lat) & 5 & -1151.147 & 22.795 & \(1.12\times10^{-5}\) & --\\*
  & 4 & \(\zeta\)(Lon)\(\oplus\)iOU(Lat) & 4 & -1150.335 & 23.608 & \(7.48\times10^{-6}\) & --\\*
  & 5 & fBM(Lon)\(\oplus\)ifOU(Lat) & 5 & -1148.490 & 25.452 & \(2.97\times10^{-6}\) & --\\
\midrule
4 & 1 & ifOU(Lon,Lat) & 7 & -4885.600 & 0.000 & \(>0.999999999\) & \(\widehat\rho_{\rm Lon,Lat}=0.579\)\\*
  & 2 & fBM(Lon)\(\oplus\)\(\zeta\)(Lat) & 4 & -4841.955 & 43.646 & \(3.33\times10^{-10}\) & --\\*
  & 3 & ifOU(Lon)\(\oplus\)\(\zeta\)(Lat) & 5 & -4839.926 & 45.675 & \(1.21\times10^{-10}\) & --\\*
  & 4 & Confluent(Lon)\(\oplus\)\(\zeta\)(Lat) & 6 & -4831.673 & 53.927 & \(1.95\times10^{-12}\) & --\\*
  & 5 & \(\zeta\)(Lon)\(\oplus\)\(\zeta\)(Lat) & 3 & -4825.237 & 60.363 & \(7.80\times10^{-14}\) & --\\
\midrule
5 & 1 & \(\zeta\)(Lon)\(\oplus\)Confluent(Lat) & 6 & -3238.106 & 0.000 & 0.212 & --\\*
  & 2 & \(\zeta\)(Lon)\(\oplus\)fBM(Lat) & 4 & -3237.477 & 0.629 & 0.155 & --\\*
  & 3 & Confluent(Lon)\(\oplus\)Confluent(Lat) & 8 & -3237.116 & 0.990 & 0.129 & --\\*
  & 4 & Confluent(Lon)\(\oplus\)fBM(Lat) & 6 & -3236.488 & 1.618 & 0.094 & --\\*
  & 5 & ifOU(Lon,Lat) & 7 & -3236.119 & 1.986 & 0.078 & \(\widehat\rho_{\rm Lon,Lat}=-0.194\)\\\end{longtable}
}

For Bats~1--2, the independent fBM product has the smallest
horizontal calibrated score. The correlated ifOU block ranks fifth,
with \(\Delta\mathrm{AIC}^{(2D)}_{\rm cal}\) equal to \(5.219\) and
\(4.481\), respectively. For Bats~3--4, by contrast, the correlated
ifOU block is the clear horizontal minimum: its
\(\Delta\mathrm{AIC}^{(2D)}_{\rm cal}\) values relative to the
second-ranked model are \(22.423\) and \(43.646\), respectively. For
Bat~5, the independent
\(\zeta\)(Longitude)\(\oplus\)Confluent(Latitude) product has the
smallest horizontal calibrated score, but the correlated ifOU block
remains competitive: it ranks fifth with
\(\Delta\mathrm{AIC}^{(2D)}_{\rm cal}=1.986\).

The horizontal results are consistent with the descriptive evidence
in the sense that Longitude--Latitude is the principal empirical
association for Bats~3--5. Agreement is strongest for Bats~3--4,
where the correlated block is preferred by a wide
\(\mathrm{AIC}^{(2D)}_{\rm cal}\) margin. For Bat~5, the empirical
rate correlations indicate horizontal association and the correlated
model remains competitive, with
\(\Delta\mathrm{AIC}^{(2D)}_{\rm cal}=1.986\).

A more direct assessment of the dependence term is obtained by holding
the temporal family comparable. Table~\ref{tab:3daic} of Appendix~\ref{ApenD} restricts
the analysis to the OU-position family, profiles the Hurst parameters,
and maximizes each dependence class separately. Within that comparison,
the Longitude--Latitude class has the smallest ordinary AIC and AICc
for Bats~3--5. Relative to independent coordinates, its ordinary-AIC
improvements are \(27.445\), \(91.699\), and \(5.226\) units, and its
AICc improvements are \(26.684\), \(91.521\), and \(4.986\) units,
respectively. The evidence is therefore strong for Bats~3--4 and more
limited, but still material within the matched family, for Bat~5.

For Bat~3, the full three-dimensional ifOU model remains competitive
under both comparisons: it has
\(\Delta\mathrm{AIC}_{\rm cal}=1.495\) relative to the complete
seven-family minimum in Table~\ref{tab:topthree-main}, and
\(\Delta\mathrm{AIC}=0.890\) and
\(\Delta\mathrm{AICc}=3.860\) relative to the reduced OU-position
minimum in Table~\ref{tab:3daic} of Appendix~\ref{ApenD}.

Table~\ref{tab:3daic} also shows that the dependence is concentrated in the
horizontal pair. The separately fitted Longitude--Altitude and
Latitude--Altitude blocks are well above the Longitude--Latitude class
for Bats~3--5. For Bats~4--5, the fitted correlations involving
Altitude are near zero. This pattern
agrees with the descriptive displacement-rate summaries, which show
substantially smaller associations involving Altitude.

The complete, horizontal-only, and matched-family comparisons thus
answer different questions. At half-minute resolution, the complete
and horizontal rankings use calibrated scores, whereas Table~\ref{tab:3daic} uses ordinary AIC and AICc from the common multivariate
likelihood. A global criterion minimum selects a joint
marginal-and-dependence specification; it does not, by itself, measure
the contribution of a correlation parameter when competing models
also differ in marginal covariance family. The horizontal table and
the matched-family comparison separate these two sources of fit.

\section{Conclusions}\label{SSec4}

We introduced the multidimensional integral fractional
Ornstein--Uhlenbeck process as an extension of the coordinatewise
ifOU model to dependent multivariate trajectories. Coordinate-specific
damping, scale, and Hurst parameters are retained, while dependence
is introduced through a well-balanced multivariate fractional Brownian
motion. We established covariance validity, characterized the
admissible correlation region, and derived the large-lag behavior of
the position cross-covariances and separated increments. The resulting
finite-dimensional Gaussian distribution provides a basis for
likelihood inference, simulation, and conditional velocity
reconstruction from observed positions.

The numerical experiments illustrate joint estimation of the complete
marginal and dependence parameter vector and examine correlation
estimation under repeated sampling. Moderate correlations are estimated
with small empirical bias in the considered designs, whereas weak
correlations remain less precisely estimated. Under fixed-horizon grid
refinement, both the RMSE and the likelihood-ratio interval width
decrease as the observation grid becomes finer. These results describe
finite-sample performance for the stated simulation settings and do not
establish asymptotic properties of the joint estimator.

In the bat application, the horizontal comparison favors independent
products for Bats~1--2, strongly favors a correlated ifOU
Longitude--Latitude model for Bats~3--4, and retains that correlated
model as a competitive alternative for Bat~5. Within the matched
OU-position comparison, the Longitude--Latitude dependence class also
minimizes AIC and AICc for Bats~3--5. Evidence for dependence involving
Altitude is weaker, although the full three-dimensional ifOU model
remains competitive for Bat~3. The selected marginal covariance
families vary across coordinates and individuals, indicating that
marginal temporal structure and cross-coordinate dependence should be
assessed separately rather than imposed uniformly.

\section*{Data Availability Statement}
The data supporting the findings of this study are publicly available as open data at \url{https://doi.org/10.5441/001/1.5d736bf0}.

\section*{Conflict of Interest Statement}
The authors declare no conflicts of interest.

\section*{Funding Statement}
This research received no specific grant from any funding agency in the public, commercial, or not-for-profit sectors.

\clearpage
\appendix
\makeatletter
\@addtoreset{figure}{section}
\@addtoreset{table}{section}
\@addtoreset{equation}{section}
\makeatother
\renewcommand{\thefigure}{\thesection.\arabic{figure}}
\renewcommand{\thetable}{\thesection.\arabic{table}}
\renewcommand{\theequation}{\thesection.\arabic{equation}}
\renewcommand{\theHfigure}{\thesection.\arabic{figure}}
\renewcommand{\theHtable}{\thesection.\arabic{table}}
\renewcommand{\theHequation}{\thesection.\arabic{equation}}
\section{Proofs of the covariance and dependence results}\label{ApenA}

\textbf{Proof of Proposition 1.} Fix \(s,t\geq0\) and define
\[
g_i(s,u):=\sigma_i e^{\beta_i(u-s)}\mathbf{1}_{[0,s]}(u),
\qquad i=1,\ldots,p.
\]
Because \(K=(K_{i,j})_{i,j=1}^p\) is a covariance kernel, there exists a centered second-order process \(W=(W_1,\ldots,W_p)^\intercal\) with
\(\operatorname{cov}\{W_i(u),W_j(v)\}=K_{i,j}(u,v)\). By Theorem~2.1 of \citet{Lavancier}, \(K\) is bounded on compact rectangles. Hence, for each fixed \(s\),
\[
\int_0^s \big\|g_i(s,u)W_i(u)\big\|_{L^2}\,du
\leq |\sigma_i|s\sup_{0\leq u\leq s}K_{i,i}(u,u)^{1/2}<\infty.
\]
The mean-square integral
\[
Y_i(s):=\int_0^s g_i(s,u)W_i(u)\,du
\]
is therefore well defined. Moreover,
\[
\int_0^s\!\!\int_0^t
|g_i(s,u)|\,|K_{i,j}(u,v)|\,|g_j(t,v)|\,dv\,du<\infty,
\]
so Fubini's theorem gives
\[
\operatorname{cov}\{Y_i(s),Y_j(t)\}
=
\int_0^s\!\!\int_0^t
g_i(s,u)K_{i,j}(u,v)g_j(t,v)\,dv\,du
=
R_{i,j}(s,t).
\]
Thus \(R\) is the covariance kernel of the centered process \(Y=(Y_1,\ldots,Y_p)^\intercal\). In particular,
\(R_{i,j}(s,t)=R_{j,i}(t,s)\), and for arbitrary times \(s_1,\ldots,s_n\) and vectors \(a^{(1)},\ldots,a^{(n)}\in\mathbb{R}^p\),
\[
\sum_{k,\ell=1}^n
\big\langle a^{(k)},R(s_k,s_\ell)a^{(\ell)}\big\rangle
=
\operatorname{var}\!\left(
\sum_{k=1}^n\sum_{i=1}^p a_i^{(k)}Y_i(s_k)
\right)\geq0.
\]
Therefore \(R\) is a matrix-valued covariance kernel on \(\mathbb{R}_+\times\mathbb{R}_+\).
\(\hfill\square\)

\medskip
\textbf{Proof of Proposition 2.}
Assume $K_{i,j}(u,v)=K_{i,j}(v,u)$ and
$\beta_i=\beta_j=\beta$. Interchanging $u$ and $v$ in the defining integral yields
\[
\begin{split}
R_{i,j}(s,t)
&=\sigma_i\sigma_j\int_0^s\!\int_0^t
 e^{-\beta(s-u)}K_{i,j}(u,v)e^{-\beta(t-v)}\,dv\,du\\
&=\sigma_i\sigma_j\int_0^t\!\int_0^s
 e^{-\beta(t-v)}K_{i,j}(v,u)e^{-\beta(s-u)}\,du\,dv
=R_{i,j}(t,s).
\end{split}
\]
\hfill$\square$

\medskip
\textbf{Proof of Proposition 3.}
Write
\[
h=H_i+H_j\in(0,2),
\qquad
c_{i,j}=\frac{\sigma_i\sigma_j\rho_{i,j}}{2}.
\]
For fixed $a,b\ge0$, the covariance formula in the main text and the change of variables
$w=T+b-v$ give
\begin{equation}\label{eq:large-lag-start}
R_{i,j}(a,T+b)
=c_{i,j}\int_0^a e^{-\beta_i(a-u)}
\int_0^{T+b}e^{-\beta_jw}G_{T,b}(u,w)\,dw\,du,
\end{equation}
where
\[
G_{T,b}(u,w)
=u^h+(T+b-w)^h-\lvert T+b-w-u\rvert^h.
\]
Fix $a$ and $b$, and choose $T_0$ such that $T/2+b-a\ge T/4>0$ for all
$T\ge T_0$. Decompose
\[
R_{i,j}(a,T+b)=M_{T,b}(a)+E_{T,b}(a),
\]
where $M_{T,b}(a)$ is the contribution of $0\le w\le T/2$ in
\eqref{eq:large-lag-start}, and $E_{T,b}(a)$ is the contribution of
$T/2<w\le T+b$. For $u\in[0,a]$ and $0\le w\le T+b$,
\[
\lvert G_{T,b}(u,w)\rvert
\le a^h+(T+b)^h+(T+b+a)^h
\le C_{a,b,h}(1+T^h).
\]
Consequently,
\begin{equation}\label{eq:prop3-tail}
\lvert E_{T,b}(a)\rvert
\le C(1+T^h)\int_{T/2}^{\infty}e^{-\beta_jw}\,dw
\le C'(1+T^h)e^{-\beta_jT/2}.
\end{equation}
In particular, $E_{T,b}(a)\to0$ when $h\le1$, and
$T^{1-h}E_{T,b}(a)\to0$ when $h>1$.

Suppose first that $0<h<1$. On $w\le T/2$ and $T\ge T_0$, put
$x=T+b-w\ge u$. Concavity of $x\mapsto x^h$ gives
\[
0\le x^h-(x-u)^h\le u^h.
\]
Thus
\[
\lvert G_{T,b}(u,w)\rvert\le2u^h,
\qquad 0\le w\le T/2,
\]
and, for every fixed $(u,w)$,
$G_{T,b}(u,w)\to u^h$. After extending the integrand by
$\mathbf1_{\{w\le T/2\}}$ to $w\in[0,\infty)$, it is dominated by
$2e^{-\beta_i(a-u)}e^{-\beta_jw}u^h$, which is integrable on
$[0,a]\times[0,\infty)$. The dominated convergence theorem and
\eqref{eq:prop3-tail} therefore yield
\[
R_{i,j}(a,T+b)\longrightarrow
\frac{c_{i,j}}{\beta_j}
\int_0^a e^{-\beta_i(a-u)}u^h\,du.
\]

If $h=1$, then on the common region $w\le T/2$ one has exactly
\[
G_{T,b}(u,w)=u+(T+b-w)-(T+b-w-u)=2u.
\]
Together with \eqref{eq:prop3-tail}, this gives
\[
R_{i,j}(a,T+b)\longrightarrow
\frac{2c_{i,j}}{\beta_j}
\int_0^a e^{-\beta_i(a-u)}u\,du.
\]

Finally, let $1<h<2$. On $w\le T/2$,
\[
(T+b-w)^h-(T+b-w-u)^h
=h\int_0^u(T+b-w-z)^{h-1}\,dz.
\]
Define
\[
A_T(u,w)
:=T^{1-h}u^h
+h\int_0^u\left(\frac{T+b-w-z}{T}\right)^{h-1}dz.
\]
Then
\begin{align*}
T^{1-h}M_{T,b}(a)
&=c_{i,j}\int_0^a\int_0^\infty
 e^{-\beta_i(a-u)}e^{-\beta_jw}
 \mathbf1_{\{w\le T/2\}}A_T(u,w)\,dw\,du.
\end{align*}
For $T\ge T_0$, $w\le T/2$, and $z\le u\le a$,
\[
\frac14\le\frac{T+b-w-z}{T}\le1+\frac{b}{T},
\]
so the factor raised to $h-1$ is uniformly bounded. The displayed integrand is therefore dominated by a constant multiple of
$e^{-\beta_i(a-u)}e^{-\beta_jw}(u^h+u)$, which is integrable. Pointwise,
$T^{1-h}u^h\to0$ and
$((T+b-w-z)/T)^{h-1}\to1$. Dominated convergence and
\eqref{eq:prop3-tail} give
\[
T^{1-h}R_{i,j}(a,T+b)\longrightarrow
\frac{c_{i,j}h}{\beta_j}
\int_0^a e^{-\beta_i(a-u)}u\,du.
\]
Taking $a=s$ and $b=t$, and then applying the substitution $x=s-u$, gives the three formulas stated in Proposition~\ref{Propos2} of the main text. \hfill$\square$

\medskip
\textbf{Proof of Proposition 4.}
Under the well-balanced covariance specification of the main text, fix
\(i,j\in\{1,\ldots,p\}\), and let
\[
h=H_i+H_j\in(0,2),
\qquad
c_{i,j}=\frac{\sigma_i\sigma_j\rho_{i,j}}{2}.
\]
Fix \(0\le s<t\) and \(a\ge0\). For \(b\in\{s,t\}\), define
\[
G_{T,b}(u,w)
=
u^h+(T+b-w)^h-\lvert T+b-w-u\rvert^h.
\]
As in \eqref{eq:large-lag-start}, the change of variables
\(w=T+b-v\) gives
\begin{equation}\label{eq:increment-asymptotic-start}
R_{i,j}(a,T+b)
=
c_{i,j}\int_0^a e^{-\beta_i(a-u)}
\int_0^{T+b}e^{-\beta_jw}G_{T,b}(u,w)\,dw\,du.
\end{equation}
Set
\[
D_T(a)
=
R_{i,j}(a,T+t)-R_{i,j}(a,T+s).
\]
Choose \(T_0\) so large that
\begin{equation}\label{eq:T0-condition}
T/2+s-a\ge T/4>0,
\qquad T\ge T_0.
\end{equation}
For \(b\in\{s,t\}\), write
\begin{align*}
M_{T,b}(a)
&=
c_{i,j}\int_0^a e^{-\beta_i(a-u)}
\int_0^{T/2}e^{-\beta_jw}G_{T,b}(u,w)\,dw\,du,\\
E_{T,b}(a)
&=
c_{i,j}\int_0^a e^{-\beta_i(a-u)}
\int_{T/2}^{T+b}e^{-\beta_jw}G_{T,b}(u,w)\,dw\,du.
\end{align*}
Then
\begin{equation}\label{eq:DT-decomposition}
D_T(a)
=
\{M_{T,t}(a)-M_{T,s}(a)\}
+
\{E_{T,t}(a)-E_{T,s}(a)\}.
\end{equation}
This decomposition is exact; in particular, the different upper
limits \(T+s\) and \(T+t\) are entirely contained in the two
\(E\)-terms.

For \(u\in[0,a]\), \(b\in\{s,t\}\), and \(0\le w\le T+b\),
\[
\lvert G_{T,b}(u,w)\rvert
\le
a^h+(T+t)^h+(T+t+a)^h
\le
C_{a,s,t,h}(1+T^h).
\]
Since \(\beta_i,\beta_j>0\),
\begin{align}
\lvert E_{T,b}(a)\rvert
&\le
C(1+T^h)
\int_0^a e^{-\beta_i(a-u)}\,du
\int_{T/2}^{\infty}e^{-\beta_jw}\,dw \notag\\
&\le
C'(1+T^h)e^{-\beta_jT/2}.
\label{eq:tail-negligible}
\end{align}
Consequently,
\begin{equation}\label{eq:scaled-tail-negligible}
T^{2-h}
\lvert E_{T,t}(a)-E_{T,s}(a)\rvert
\longrightarrow0.
\end{equation}

Suppose first that \(h\neq1\). By \eqref{eq:T0-condition}, whenever
\(T\ge T_0\), \(w\le T/2\), \(b\in[s,t]\), and
\(0\le z\le u\le a\), one has
\[
T+b-w-z
\ge
T/2+s-a
\ge
T/4>0.
\]
Thus, on this region, the absolute values in \(G_{T,s}\) and
\(G_{T,t}\) can be removed. For \(f(x)=x^h\), two applications of
the fundamental theorem of calculus give
\begin{align}
&G_{T,t}(u,w)-G_{T,s}(u,w) \notag\\
&=
(T+t-w)^h-(T+t-w-u)^h
-(T+s-w)^h+(T+s-w-u)^h \notag\\
&=
\int_s^t
\{f'(T+b-w)-f'(T+b-w-u)\}\,db \notag\\
&=
h(h-1)\int_s^t\int_0^u
(T+b-w-z)^{h-2}\,dz\,db.
\label{eq:second-difference-power}
\end{align}

For \(T\ge T_0\), define on
\([0,a]\times[0,\infty)\times[s,t]\times[0,a]\)
\[
Q_T(u,w,b,z)
:=
\begin{cases}
\displaystyle
\left(\frac{T}{T+b-w-z}\right)^{2-h},
& w\le T/2\ \text{and}\ z\le u,\\[8pt]
0,
& \text{otherwise}.
\end{cases}
\]
The first branch is well defined because
\(T+b-w-z\ge T/4>0\). Moreover,
\[
0\le
Q_T(u,w,b,z)
\le
4^{2-h}\mathbf 1_{\{z\le u\}},
\]
and, for every fixed \((u,w,b,z)\),
\[
Q_T(u,w,b,z)
\longrightarrow
\mathbf 1_{\{z\le u\}}.
\]

Substituting \eqref{eq:second-difference-power} into
\(M_{T,t}(a)-M_{T,s}(a)\) and using Fubini's theorem gives
\begin{align}
&T^{2-h}\{M_{T,t}(a)-M_{T,s}(a)\} \notag\\
&=
c_{i,j}h(h-1)
\int_0^a\int_0^\infty\int_s^t\int_0^a
e^{-\beta_i(a-u)}e^{-\beta_jw}
Q_T(u,w,b,z)
\,dz\,db\,dw\,du.
\label{eq:scaled-main-part}
\end{align}
The use of Fubini's theorem is legitimate because the absolute
value of the integrand is bounded by
\[
4^{2-h}
e^{-\beta_i(a-u)}
e^{-\beta_jw}
\mathbf 1_{\{z\le u\}},
\]
which is integrable. Indeed,
\begin{align*}
&\int_0^a\int_0^\infty\int_s^t\int_0^a
e^{-\beta_i(a-u)}e^{-\beta_jw}
\mathbf 1_{\{z\le u\}}
\,dz\,db\,dw\,du\\
&\qquad=
\frac{t-s}{\beta_j}
\int_0^a e^{-\beta_i(a-u)}u\,du
<\infty.
\end{align*}
The dominated convergence theorem therefore yields
\begin{equation}\label{eq:DT-limit}
T^{2-h}\{M_{T,t}(a)-M_{T,s}(a)\}
\longrightarrow
c_{i,j}h(h-1)\frac{t-s}{\beta_j}
\int_0^a e^{-\beta_i(a-u)}u\,du.
\end{equation}
Combining \eqref{eq:DT-decomposition},
\eqref{eq:scaled-tail-negligible}, and \eqref{eq:DT-limit}, we obtain
\[
T^{2-h}D_T(a)
\longrightarrow
c_{i,j}h(h-1)\frac{t-s}{\beta_j}
\int_0^a e^{-\beta_i(a-u)}u\,du.
\]

By covariance bilinearity,
\begin{align*}
&\operatorname{cov}\!\left(
\mu_{H_i}(\nu)-\mu_{H_i}(r),
\mu_{H_j}(T+t)-\mu_{H_j}(T+s)
\right)\\
&\qquad=
D_T(\nu)-D_T(r).
\end{align*}
Define
\[
J_i(a)
:=
\int_0^a e^{-\beta_i(a-u)}u\,du
=
\frac{a}{\beta_i}
-
\frac{1-e^{-\beta_i a}}{\beta_i^2}.
\]
Then
\[
J_i(\nu)-J_i(r)
=
\frac{\nu-r}{\beta_i}
+
\frac{e^{-\beta_i\nu}-e^{-\beta_i r}}{\beta_i^2}.
\]
It follows that, for \(h\neq1\),
\begin{align*}
&T^{2-h}\operatorname{cov}\!\left(
\mu_{H_i}(\nu)-\mu_{H_i}(r),
\mu_{H_j}(T+t)-\mu_{H_j}(T+s)
\right)\\
&\quad\longrightarrow
\frac{\sigma_i\sigma_j\rho_{i,j}h(h-1)}{2}
\frac{t-s}{\beta_j}
\left[
\frac{\nu-r}{\beta_i}
+
\frac{e^{-\beta_i\nu}-e^{-\beta_i r}}{\beta_i^2}
\right].
\end{align*}
Furthermore,
\[
J_i(\nu)-J_i(r)
=
\frac{1}{\beta_i}
\int_r^\nu
\left(1-e^{-\beta_i x}\right)\,dx
>0.
\]
Since \(h\in(0,2)\setminus\{1\}\), the limiting coefficient is
therefore nonzero exactly when \(\rho_{i,j}\neq0\).

It remains to consider \(h=1\). In this case, for \(u,v\ge0\),
\[
K_{i,j}(u,v)
=
\frac{\rho_{i,j}}{2}
\{u+v-\lvert v-u\rvert\}
=
\rho_{i,j}\min(u,v).
\]
Let \(x\ge a\). The covariance representation gives
\[
R_{i,j}(a,x)
=
\sigma_i\sigma_j\rho_{i,j}
\int_0^a e^{-\beta_i(a-u)}
\int_0^x e^{-\beta_j(x-v)}
\min(u,v)\,dv\,du.
\]
For \(0\le u\le a\le x\),
\begin{align*}
\int_0^x e^{-\beta_j(x-v)}\min(u,v)\,dv
&=
\int_0^u e^{-\beta_j(x-v)}v\,dv
+
u\int_u^x e^{-\beta_j(x-v)}\,dv\\
&=
\frac{u}{\beta_j}
-
\frac{e^{-\beta_jx}(e^{\beta_j u}-1)}
{\beta_j^2}.
\end{align*}
Define
\[
A_{i,j}(a)
:=
\int_0^a
e^{-\beta_i(a-u)}
\left(e^{\beta_j u}-1\right)\,du.
\]
For \(T+s\ge a\), the preceding identity gives
\[
D_T(a)
=
\frac{\sigma_i\sigma_j\rho_{i,j}}{\beta_j^2}
e^{-\beta_j(T+s)}
\left(1-e^{-\beta_j(t-s)}\right)
A_{i,j}(a).
\]
Consequently, for \(T+s\ge\nu\),
\begin{align*}
&\operatorname{cov}\!\left(
\mu_{H_i}(\nu)-\mu_{H_i}(r),
\mu_{H_j}(T+t)-\mu_{H_j}(T+s)
\right)\\
&\quad=
\frac{\sigma_i\sigma_j\rho_{i,j}}{\beta_j^2}
e^{-\beta_j(T+s)}
\left(1-e^{-\beta_j(t-s)}\right)
\{A_{i,j}(\nu)-A_{i,j}(r)\}.
\end{align*}
A direct calculation gives
\[
A_{i,j}'(a)
=
\frac{\beta_j}{\beta_i+\beta_j}
\left(e^{\beta_j a}-e^{-\beta_i a}\right),
\]
and hence
\[
A_{i,j}(\nu)-A_{i,j}(r)
=
\frac{\beta_j}{\beta_i+\beta_j}
\int_r^\nu
\left(e^{\beta_jx}-e^{-\beta_i x}\right)\,dx.
\]
Therefore, when \(h=1\) and \(T+s\ge\nu\),
\begin{align*}
&\operatorname{cov}\!\left(
\mu_{H_i}(\nu)-\mu_{H_i}(r),
\mu_{H_j}(T+t)-\mu_{H_j}(T+s)
\right)\\
&\quad=
\frac{\sigma_i\sigma_j\rho_{i,j}
e^{-\beta_j(T+s)}
\left(1-e^{-\beta_j(t-s)}\right)}
{\beta_j(\beta_i+\beta_j)}
\int_r^\nu
\left(e^{\beta_jx}-e^{-\beta_i x}\right)\,dx.
\end{align*}
In particular,
\[
\operatorname{cov}\!\left(
\mu_{H_i}(\nu)-\mu_{H_i}(r),
\mu_{H_j}(T+t)-\mu_{H_j}(T+s)
\right)
=
O\!\left(e^{-\beta_jT}\right).
\]
Thus the displayed scaled limit is zero when \(h=1\), and the
separated-increment covariance decays exponentially. Since
\[
O\!\left(e^{-\beta_jT}\right)
\subset
O\!\left((1+T)e^{-\beta_jT/2}\right),
\]
this also implies the bound stated in the proposition. This
completes the proof. \hfill\(\square\)

\section{Admissible-correlation parametrization}\label{ApenB}

\textbf{Proof of Theorem 2.}
Let
\[
D=\operatorname{diag}\!\left(\sqrt{\Gamma_{1,1}},
\sqrt{\Gamma_{2,2}},\sqrt{\Gamma_{3,3}}\right),
\qquad
r_{i,j}:=\frac{\Gamma_{i,j}\rho_{i,j}}
{\sqrt{\Gamma_{i,i}\Gamma_{j,j}}}.
\]
Then
\[
 C(\boldsymbol\rho):=D^{-1}Q_{H_1,H_2,H_3}(\boldsymbol\rho)D^{-1}
 =\begin{pmatrix}
 1&r_{1,2}&r_{1,3}\\
 r_{1,2}&1&r_{2,3}\\
 r_{1,3}&r_{2,3}&1
 \end{pmatrix}.
\]
Since \(D\) is nonsingular, \(Q(\boldsymbol\rho)\succ0\) if and only if
\(C(\boldsymbol\rho)\succ0\).

For $\boldsymbol z=(z_{1,2},z_{1,3},z_{2,3})\in(-1,1)^3$, define
\[
 r_{1,2}=z_{1,2},\qquad r_{1,3}=z_{1,3},\qquad
 r_{2,3}=z_{1,2}z_{1,3}
 +z_{2,3}\sqrt{1-z_{1,2}^{2}}\sqrt{1-z_{1,3}^{2}}.
\]
The leading principal minors are \(1\), \(1-z_{1,2}^{2}\), and
\[
 \det C=(1-z_{1,2}^{2})(1-z_{1,3}^{2})(1-z_{2,3}^{2})>0.
\]
Thus \(C\succ0\), and hence \(Q\succ0\), by Sylvester's criterion.
Lemma~\ref{lem2.2} of the main text then yields a covariance kernel with
\(K_{i,j}(1,1)=\rho_{i,j}\), so \(|\rho_{i,j}|\leq1\). Equality is excluded
because \(Q\succ0\) is an open condition: a small outward perturbation would
otherwise preserve positive definiteness while violating the covariance
inequality. Therefore \(T(\boldsymbol z)\in\mathcal R_{H_1,H_2,H_3}\).

Conversely, let $\boldsymbol\rho\in\mathcal R_{H_1,H_2,H_3}$. Then $C(\boldsymbol\rho)\succ0$, so $|r_{1,2}|<1$ and $|r_{1,3}|<1$. Set
\[
 z_{1,2}=r_{1,2},\qquad z_{1,3}=r_{1,3},\qquad
 z_{2,3}=\frac{r_{2,3}-r_{1,2}r_{1,3}}
 {\sqrt{1-r_{1,2}^{2}}\sqrt{1-r_{1,3}^{2}}}.
\]
The determinant identity
\[
 \det C=(1-r_{1,2}^{2})(1-r_{1,3}^{2})(1-z_{2,3}^{2})
\]
implies \(|z_{2,3}|<1\). Substitution gives the inverse displayed in
Theorem~2, so \(T\) and \(T^{-1}\) are mutual inverses. \hfill\(\square\)

\section{Numerical experiments}\label{sec2.4}

This appendix reports the numerical experiments summarized in
Section~\ref{sub:numerical_assessment} of the main text. Section~\ref{sub:joint-estimation-supp} considers the joint
twelve-parameter fit to one simulated three-dimensional trajectory
and the conditional reconstruction of its latent velocity.
Section~\ref{sub:replicated-correlation-supp} reports the replicated two- and three-dimensional
correlation-estimation experiment with the marginal parameters fixed
at their generating values. Section~\ref{sub:infill-correlation-supp} reports the fixed-horizon
bivariate correlation-estimation experiment.

\subsection{Joint-estimation illustration for one trajectory}\label{sub:joint-estimation-supp}

A three-dimensional trajectory is simulated on $[0,10]$ with $\Delta=1/30$ and the parameter values listed in Table~\ref{tab:original-sim3d}. All twelve marginal and cross-correlation parameters are estimated jointly from the exact Gaussian likelihood. Figure~\ref{fig:sim3d-supp} shows the simulated path, and Table~\ref{tab:original-sim3d} reports the joint estimates, Gaussian-approximation intervals, and one-parameter likelihood-deviance intervals for this realization. The replicated experiments in Sections~\ref{sub:replicated-correlation-supp}--\ref{sub:infill-correlation-supp} fix the marginal parameters at their true values and study correlation estimation separately.

\begin{figure}[H]
\centering
\includegraphics[width=.78\textwidth]{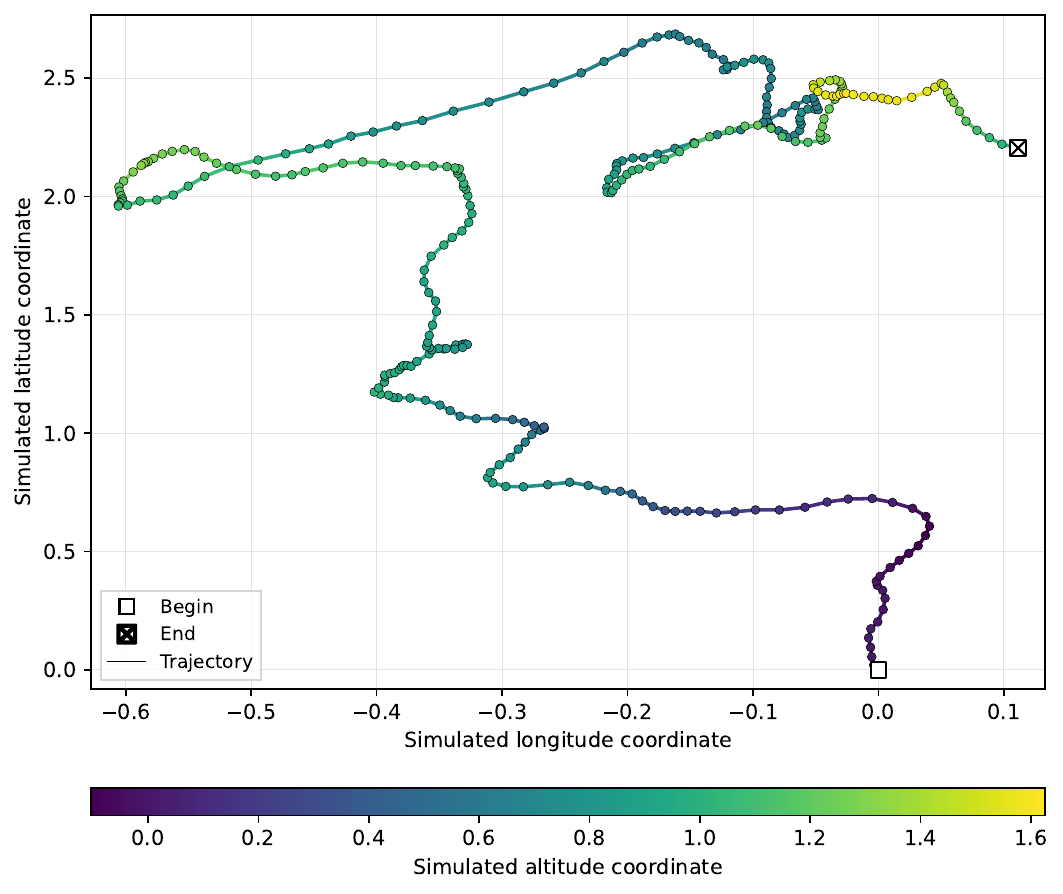}
\caption{Simulated mifOU trajectory in the Longitude--Latitude plane, with Altitude represented by color, for $T=10$ and $\Delta=1/30$. Parameter values and joint estimates are reported in Table~\ref{tab:original-sim3d}.}
\label{fig:sim3d-supp}
\end{figure}

\begin{table}[H]
\centering\scriptsize
\setlength{\tabcolsep}{4pt}
\renewcommand{\arraystretch}{1.08}
\begin{tabularx}{\textwidth}{@{}llrr>{\centering\arraybackslash}X>{\centering\arraybackslash}X@{}}
\toprule
Component & Parameter & True value & MLE & 99\% GA interval & 99\% DEV interval\\
\midrule
Longitude--Latitude & $\rho_{\mathrm{Longitude},\mathrm{Latitude}}$ & 0.2801 & 0.2679 & $[0.1954,0.3405]$ & $[0.1862,0.3324]$\\
Longitude--Altitude & $\rho_{\mathrm{Longitude},\mathrm{Altitude}}$ & -0.2647 & -0.2448 & $[-0.2925,-0.1971]$ & $[-0.2870,-0.1907]$\\
Latitude--Altitude & $\rho_{\mathrm{Latitude},\mathrm{Altitude}}$ & 0.3376 & 0.3610 & $[0.3052,0.4169]$ & $[0.2974,0.4103]$\\
\midrule
Longitude & $\sigma_{\mathrm{Longitude}}$ & 2.0000 & 1.7927 & $[1.6289,1.9565]$ & $[1.6412,1.9707]$\\
Latitude & $\sigma_{\mathrm{Latitude}}$ & 3.0000 & 2.8061 & $[2.5552,3.0569]$ & $[2.5738,3.0784]$\\
Altitude & $\sigma_{\mathrm{Altitude}}$ & 2.5000 & 2.4141 & $[2.2028,2.6255]$ & $[2.2179,2.6429]$\\
Longitude & $\beta_{\mathrm{Longitude}}$ & 7.0000 & 8.5931 & $[6.1380,11.0481]$ & $[6.1444,11.0545]$\\
Latitude & $\beta_{\mathrm{Latitude}}$ & 5.0000 & 5.1493 & $[2.9690,7.3296]$ & $[2.9797,7.3405]$\\
Altitude & $\beta_{\mathrm{Altitude}}$ & 4.2000 & 4.6100 & $[2.8004,6.4195]$ & $[2.8167,6.4360]$\\
Longitude & $H_{\mathrm{Longitude}}$ & 0.8000 & 0.7814 & $[0.7623,0.8004]$ & $[0.7612,0.7994]$\\
Latitude & $H_{\mathrm{Latitude}}$ & 0.6000 & 0.5939 & $[0.5695,0.6183]$ & $[0.5682,0.6172]$\\
Altitude & $H_{\mathrm{Altitude}}$ & 0.2000 & 0.1697 & $[0.1174,0.2220]$ & $[0.1040,0.2145]$\\
\bottomrule
\end{tabularx}
\caption{True values and joint MLEs, together with 99\% Gaussian-approximation (GA) and likelihood-deviance (DEV) intervals, for the simulated three-dimensional trajectory.}
\label{tab:original-sim3d}
\end{table}

The estimates and intervals in Table~\ref{tab:original-sim3d} describe one simulated realization. All twelve marginal and cross-correlation parameters enter the same likelihood optimization, so the experiment directly illustrates full-vector finite-sample estimation.

\begin{figure}[H]
\centering
\includegraphics[width=.82\textwidth]{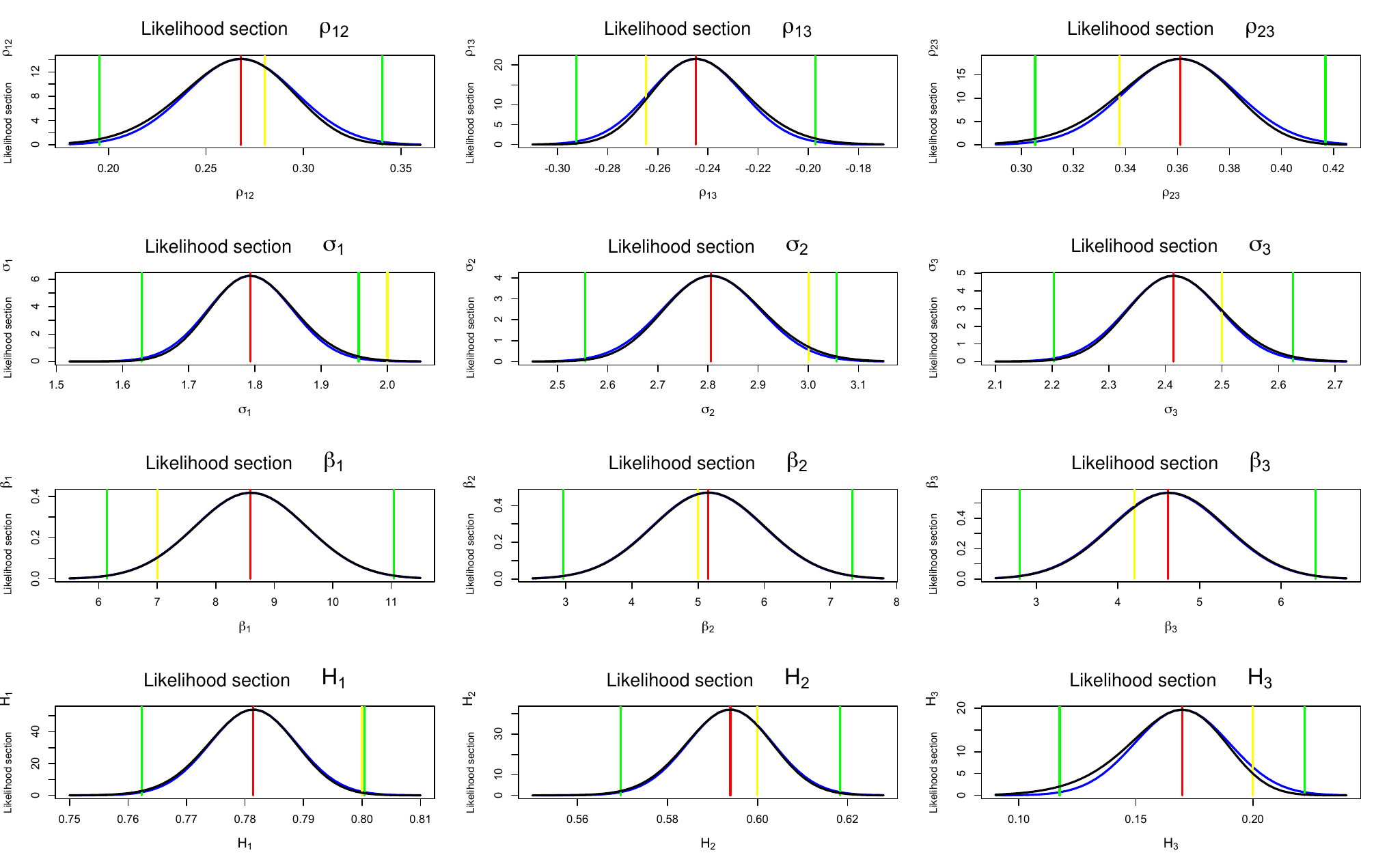}
\caption{One-parameter likelihood sections and local Gaussian approximations for the three-dimensional experiment. Each section varies one parameter while fixing the remaining eleven at the joint MLE.}
\label{PL}
\end{figure}

\begin{figure}[H]
\centering
\includegraphics[width=.82\textwidth]{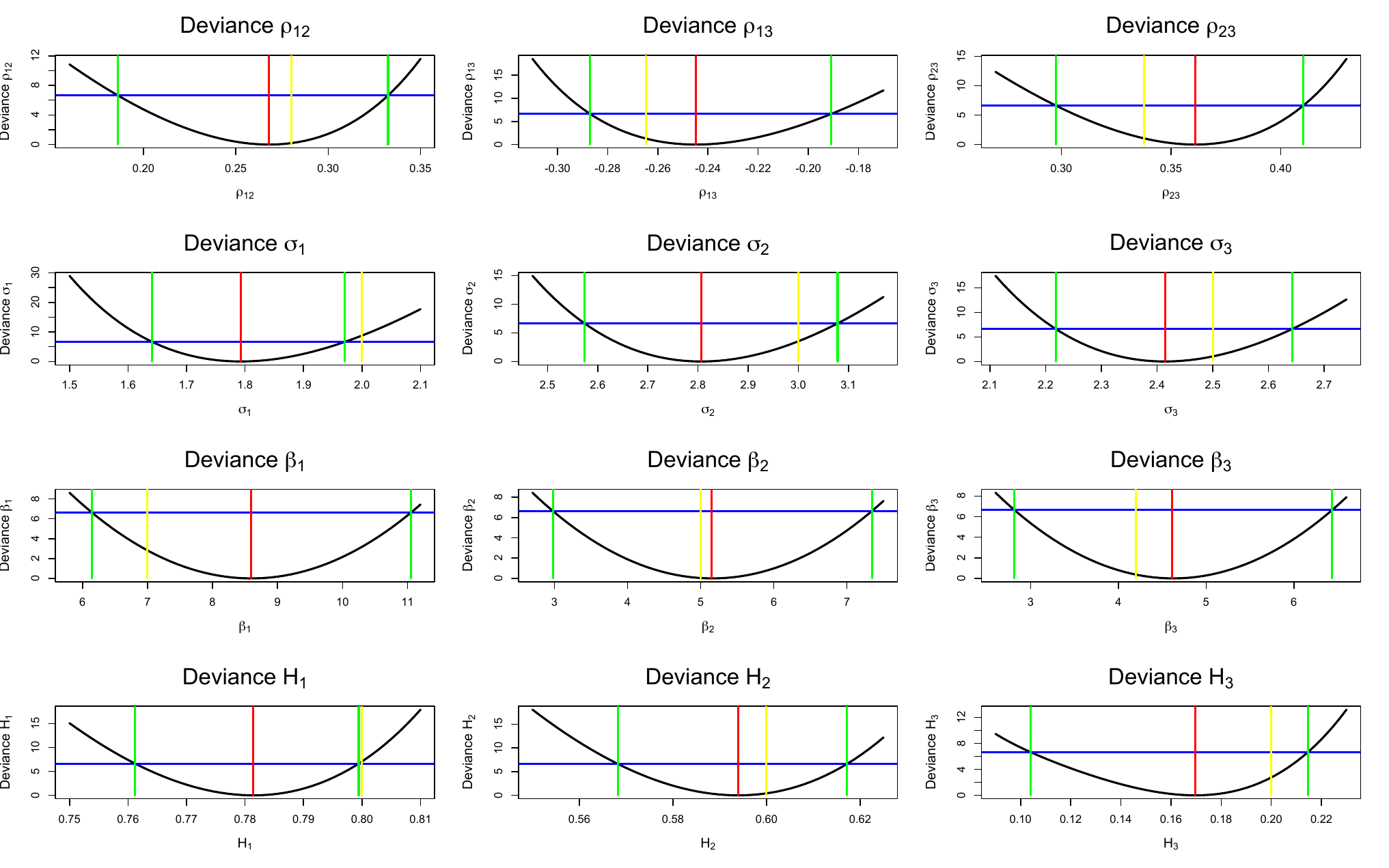}
\caption{One-parameter likelihood-deviance sections for the three-dimensional experiment. Each section varies one parameter while fixing the remaining eleven at the joint MLE.}
\label{PLC2}
\end{figure}

Figures~\ref{PL}--\ref{PLC2} show one-parameter sections rather than profile likelihoods because the remaining eleven parameters are not reoptimized along each curve. They describe the local likelihood shape for this realization.

\subsubsection{Conditional velocity reconstruction}

Algorithm~\ref{alg:velocity} of the main text is applied to the same simulated trajectory used for the joint fit in Table~\ref{tab:original-sim3d}, rather than to a separately generated path. The conditioning record contains 301 positions per coordinate on the grid \(t_j=j/30\), \(j=0,\ldots,300\). Step~1 computes the 900 observed auxiliary variables \(\boldsymbol{\zeta}^{2}\) from these positions. In Step~2, the conditional Gaussian law of the 900-dimensional vector \(\boldsymbol{\zeta}^{1}\) is evaluated with the twelve MLEs in Table~\ref{tab:original-sim3d} fixed. Using seed 20260730, \(B=1{,}000\) independent conditional draws are generated, and each draw is propagated through the recursion in Steps~3--4 with \(v_{H_i}(0)=0\).

For coordinate \(i\) and grid time \(t_j\), the displayed Monte Carlo conditional mean is
\[
\overline v_i(t_j)=\frac{1}{B}\sum_{b=1}^{B}v_i^{(b)}(t_j),
\]
and the pointwise \(95\%\) conditional band is given by the empirical 0.025 and 0.975 quantiles of \(\{v_i^{(b)}(t_j)\}_{b=1}^{B}\). Figure~\ref{fig:sim-velocity-reconstruction} shows the observed conditioning positions in the left column and these conditional velocity summaries in the right column.

\begin{figure}[H]
\centering
\includegraphics[width=\textwidth]{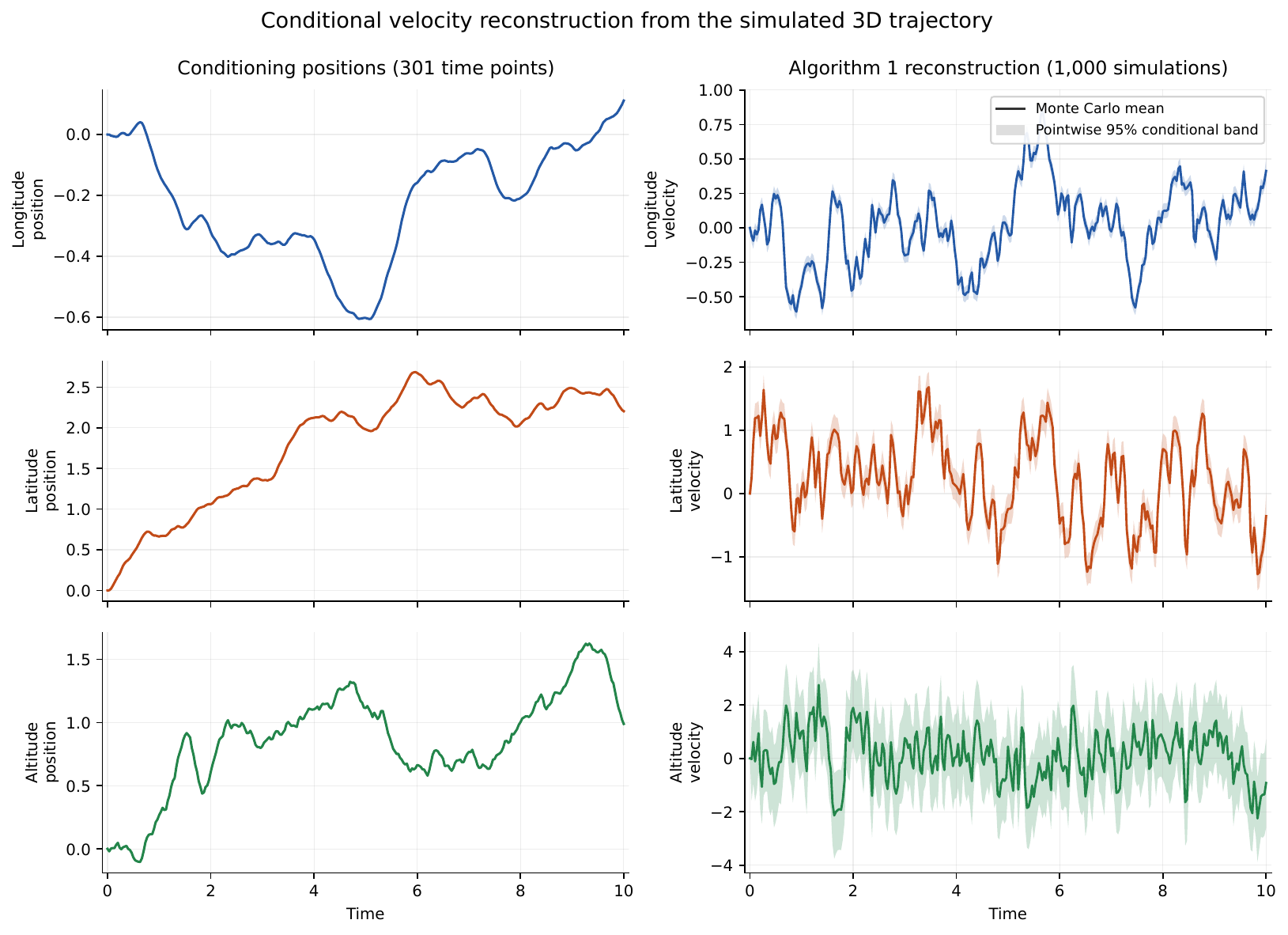}
\caption{Conditional velocity reconstruction from the simulated trajectory used in Table~\ref{tab:original-sim3d}. The left panels show the conditioning positions. The right panels show the Monte Carlo conditional means (solid lines) and pointwise 95\% conditional bands (shaded regions), computed from \(B=1{,}000\) draws under Algorithm~1 with \(v_{H_i}(0)=0\) and the twelve MLEs fixed. The bands are the empirical 0.025 and 0.975 quantiles and exclude parameter-estimation uncertainty.}
\label{fig:sim-velocity-reconstruction}
\end{figure}

The median pointwise band widths are 0.108, 0.475, and 3.050 for Longitude, Latitude, and Altitude, respectively; the corresponding maximum widths are 0.182, 0.675, and 3.318. Thus, the Longitude and Latitude bands are visibly narrower than the Altitude band on their coordinate-specific velocity scales.

\subsection{Replicated correlation-estimation experiment}\label{sub:replicated-correlation-supp}

We examine finite-sample estimation of the cross-coordinate
correlations with the marginal parameters fixed at their generating
values. The two-dimensional designs use
$(\sigma_1,\sigma_2)=(1,1)$ and $(\beta_1,\beta_2)=(3,8)$,
whereas the three-dimensional designs use
$\boldsymbol{\sigma}=(1,1,1)$ and
$\boldsymbol{\beta}=(7,5,4.2)$. The remaining generating parameters
for the four designs are summarized below:

\begin{center}
\small
\setlength{\tabcolsep}{5pt}
\begin{tabular}{lcccc}
\toprule
Design
& $\boldsymbol{H}$
& $\rho_{12}$
& $\rho_{13}$
& $\rho_{23}$\\
\midrule
2D weak
& $(0.55,0.90)$
& $0.0749$
& --
& --\\
2D moderate
& $(0.10,0.90)$
& $-0.2300$
& --
& --\\
3D weak
& $(0.90,0.60,0.20)$
& $0.0784$
& $0$
& $-0.0830$\\
3D moderate
& $(0.80,0.60,0.20)$
& $0.2801$
& $-0.2648$
& $0.3376$\\
\bottomrule
\end{tabular}
\end{center}

The two-dimensional moderate design lies on the boundary
$H_1+H_2=1$ of the separated-increment dependence result. The
three-dimensional moderate design uses the same Hurst configuration
as the three-dimensional experiment in Section~\ref{sub:joint-estimation-supp}.

For each design and each $n\in\{50,100\}$, 50 independent
trajectories are generated on a regular grid with spacing
$\Delta=0.5$. During estimation, all marginal parameters
$(\sigma_i,\beta_i,H_i)$ are held fixed at their generating values.
One canonical dependence parameter is estimated in each
two-dimensional design, whereas the three canonical parameters are
estimated jointly in the three-dimensional designs. Before computing
the Monte Carlo summaries, all estimates are transformed to the
$\rho$ scale.

Figure~\ref{fig:simrho} displays the resulting empirical
distributions, and Table~\ref{tab:simrho} reports the empirical mean,
bias, and RMSE for each coordinate pair. No fitted canonical
parameter exceeded $0.94$ in absolute value.

\begin{figure}[H]
\centering
\includegraphics[width=\textwidth]{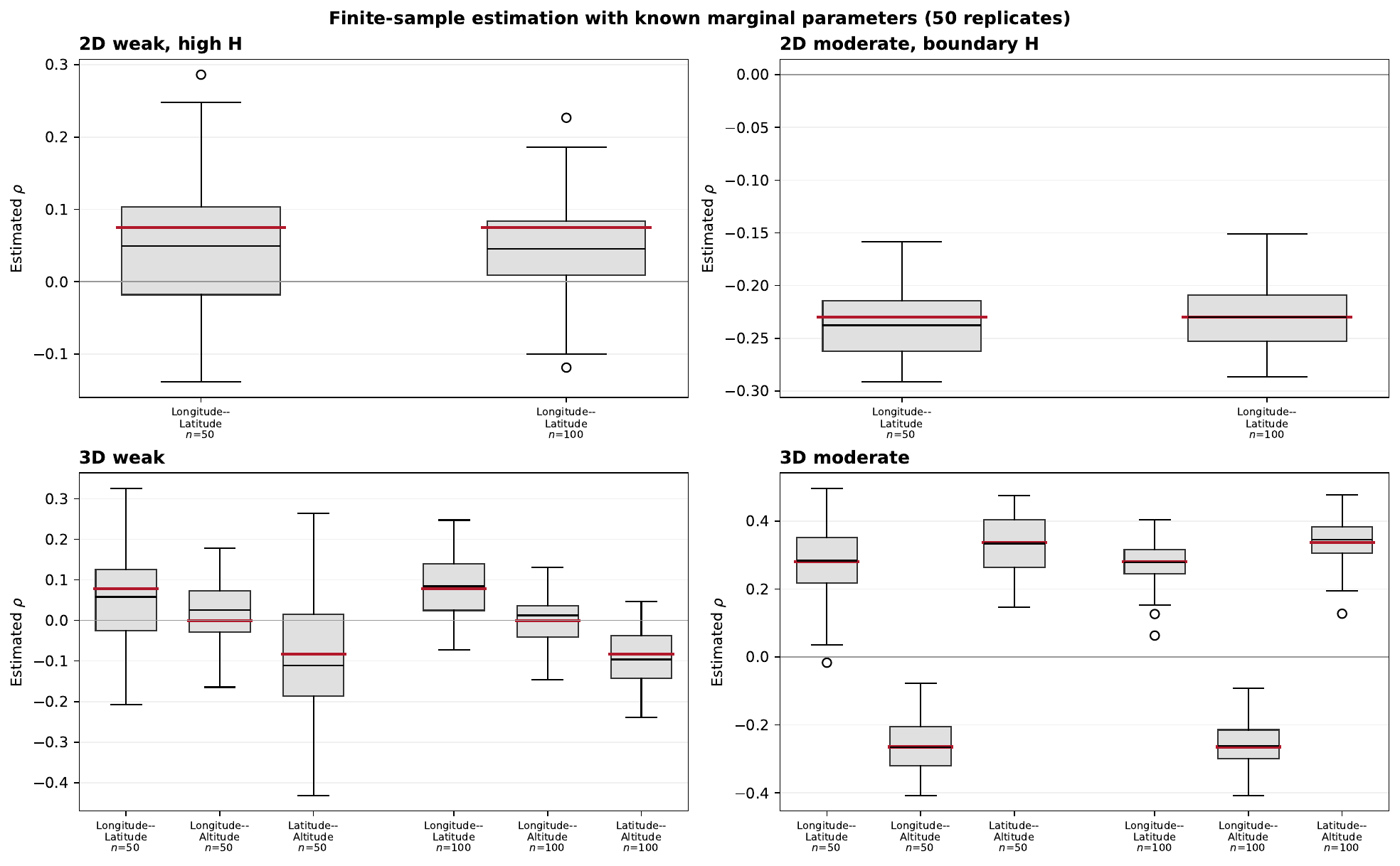}
\caption{Monte Carlo distributions of the cross-coordinate
correlation MLEs. Each boxplot summarizes 50 independent estimates
for the indicated design, coordinate pair, and sample size.
Horizontal segments denote the generating correlations. Marginal
parameters are fixed at their generating values.}
\label{fig:simrho}
\end{figure}

\begin{table}[H]
\centering
\small
\setlength{\tabcolsep}{5pt}
\renewcommand{\arraystretch}{1.06}
\begin{tabularx}{\textwidth}
{@{}>{\raggedright\arraybackslash}Xcrrrr@{}}
\toprule
Coordinate pair
& $n$
& \shortstack{Generating\\$\rho$}
& \shortstack{Mean\\$\widehat{\rho}$}
& Bias
& RMSE\\
\midrule

\multicolumn{6}{@{}l}{%
\textbf{2D weak}\quad
$\boldsymbol{H}=(0.55,0.90)$}\\
Longitude--Latitude & 50
 & 0.0749 & 0.0432 & -0.0318 & 0.1010\\
Longitude--Latitude & 100
 & 0.0749 & 0.0503 & -0.0247 & 0.0758\\

\addlinespace[4pt]
\multicolumn{6}{@{}l}{%
\textbf{2D moderate}\quad
$\boldsymbol{H}=(0.10,0.90)$}\\
Longitude--Latitude & 50
 & -0.2300 & -0.2357 & -0.0056 & 0.0341\\
Longitude--Latitude & 100
 & -0.2300 & -0.2294 & 0.0006 & 0.0276\\

\addlinespace[4pt]
\multicolumn{6}{@{}l}{%
\textbf{3D weak}\quad
$\boldsymbol{H}=(0.90,0.60,0.20)$}\\
Longitude--Latitude & 50
 & 0.0784 & 0.0541 & -0.0242 & 0.1109\\
Longitude--Altitude & 50
 & 0 & 0.0237 & 0.0237 & 0.0794\\
Latitude--Altitude & 50
 & -0.0830 & -0.0955 & -0.0124 & 0.1364\\
\addlinespace[1pt]
Longitude--Latitude & 100
 & 0.0784 & 0.0777 & -0.0006 & 0.0751\\
Longitude--Altitude & 100
 & 0 & 0.0011 & 0.0011 & 0.0554\\
Latitude--Altitude & 100
 & -0.0830 & -0.0942 & -0.0111 & 0.0783\\

\addlinespace[4pt]
\multicolumn{6}{@{}l}{%
\textbf{3D moderate}\quad
$\boldsymbol{H}=(0.80,0.60,0.20)$}\\
Longitude--Latitude & 50
 & 0.2801 & 0.2754 & -0.0048 & 0.1183\\
Longitude--Altitude & 50
 & -0.2648 & -0.2653 & -0.0005 & 0.0783\\
Latitude--Altitude & 50
 & 0.3376 & 0.3307 & -0.0069 & 0.0860\\
\addlinespace[1pt]
Longitude--Latitude & 100
 & 0.2801 & 0.2763 & -0.0038 & 0.0700\\
Longitude--Altitude & 100
 & -0.2648 & -0.2610 & 0.0038 & 0.0572\\
Latitude--Altitude & 100
 & 0.3376 & 0.3402 & 0.0025 & 0.0680\\

\bottomrule
\end{tabularx}
\caption{Empirical summaries of the cross-coordinate correlation
MLEs based on 50 independent trajectories per design and sample
size. Marginal parameters are fixed at their generating values, and
all summaries are reported on the $\rho$ scale.}
\label{tab:simrho}
\end{table}

For every coordinate pair, the RMSE is lower at $n=100$ than at
$n=50$. Because $\Delta$ is fixed, this comparison involves both
more observations and a longer observation horizon. 

These results describe correlation-only estimation conditional on
known marginal parameters. They do not include the additional
uncertainty arising from joint estimation of the marginal and
dependence parameters. The canonical parametrization is used solely
to restrict optimization to the admissible correlation region; all
estimates and Monte Carlo summaries are reported on the $\rho$ scale.

\subsection{Fixed-horizon correlation estimation under grid refinement}\label{sub:infill-correlation-supp}

We examine the estimation of $\rho_{12}$ as the observation grid is
refined while the horizon remains fixed at $T=10$. The marginal
parameters are fixed at their generating values, and only
$\rho_{12}$ is estimated. The grid sequence follows
\citet[Appendix~B.2]{JY}. The simulation design is summarized below.

\begin{center}
\begin{tabular}{ll}
\toprule
Quantity & Value\\
\midrule
Observation horizon
  & $T=10$\\
Marginal parameters
  & $(\sigma_1,\sigma_2)=(2,3)$\\
  & $(\beta_1,\beta_2)=(7,5)$\\
  & $(H_1,H_2)=(0.8,0.6)$\\
True correlations
  & $\rho_{12}\in\{0.1,0.3\}$\\
Sampling intervals
  & $\Delta\in\{1/10,1/20,1/30,1/40,1/50\}$\\
Sample sizes
  & $n\in\{100,200,300,400,500\}$\\
Replicates
  & $100$ for each combination of $\rho_{12}$ and $\Delta$\\
\bottomrule
\end{tabular}
\end{center}

Let ${\bf Y}_{i,n}$ denote coordinate $i$ observed at
$t_k=k\Delta$, $k=1,\ldots,n$, where $n\Delta=T$, and define
\[
{\bf Y}_n
=
\left(
{\bf Y}_{1,n}^{\mathsf T},
{\bf Y}_{2,n}^{\mathsf T}
\right)^{\mathsf T}.
\]
Conditional on the marginal parameters, its covariance matrix is
\[
\boldsymbol\Sigma_n(\rho_{12})
=
\begin{pmatrix}
K_{1,n} & \rho_{12}C_{12,n}\\
\rho_{12}C_{12,n}^{\mathsf T} & K_{2,n}
\end{pmatrix},
\]
where $K_{i,n}$ is the marginal ifOU covariance matrix of coordinate
$i$ and $C_{12,n}$ is the cross-covariance matrix evaluated at unit
correlation.

The admissible correlation map is
$\rho_{12}=c_Hz_{12}$. Setting $B_n=c_HC_{12,n}$ gives
\[
\boldsymbol\Sigma_n(z_{12})
=
\begin{pmatrix}
K_{1,n} & z_{12}B_n\\
z_{12}B_n^{\mathsf T} & K_{2,n}
\end{pmatrix}.
\]
Let $K_{i,n}=L_iL_i^{\mathsf T}$ and
\[
L_1^{-1}B_nL_2^{-\mathsf T}=UDV^{\mathsf T}.
\]
Whitening by $L_1$ and $L_2$ and rotating by $U$ and $V$ transform
the covariance matrix into
\[
\begin{pmatrix}
I_n & z_{12}D\\
z_{12}D & I_n
\end{pmatrix}.
\]
The determinant and quadratic form are therefore evaluated directly
from the singular values in $D$. This transformation is exact, and
the likelihood optimization is one-dimensional in $z_{12}$. The MLE
is subsequently transformed to the $\rho_{12}$ scale.

An approximate $95\%$ likelihood-ratio interval is obtained from the
values of $z_{12}$ satisfying
\[
2\left\{
\ell(\widehat z_{12})-\ell(z_{12})
\right\}
\leq
\chi^2_{1,0.95},
\]
and its endpoints are transformed to the $\rho_{12}$ scale.

Figure~\ref{fig:infillrho-boxplots} shows the Monte Carlo
distributions of the MLEs, and Table~\ref{tab:infillrho} reports the
corresponding estimation and interval summaries.

\begin{figure}[H]
\centering
\includegraphics[width=\textwidth]{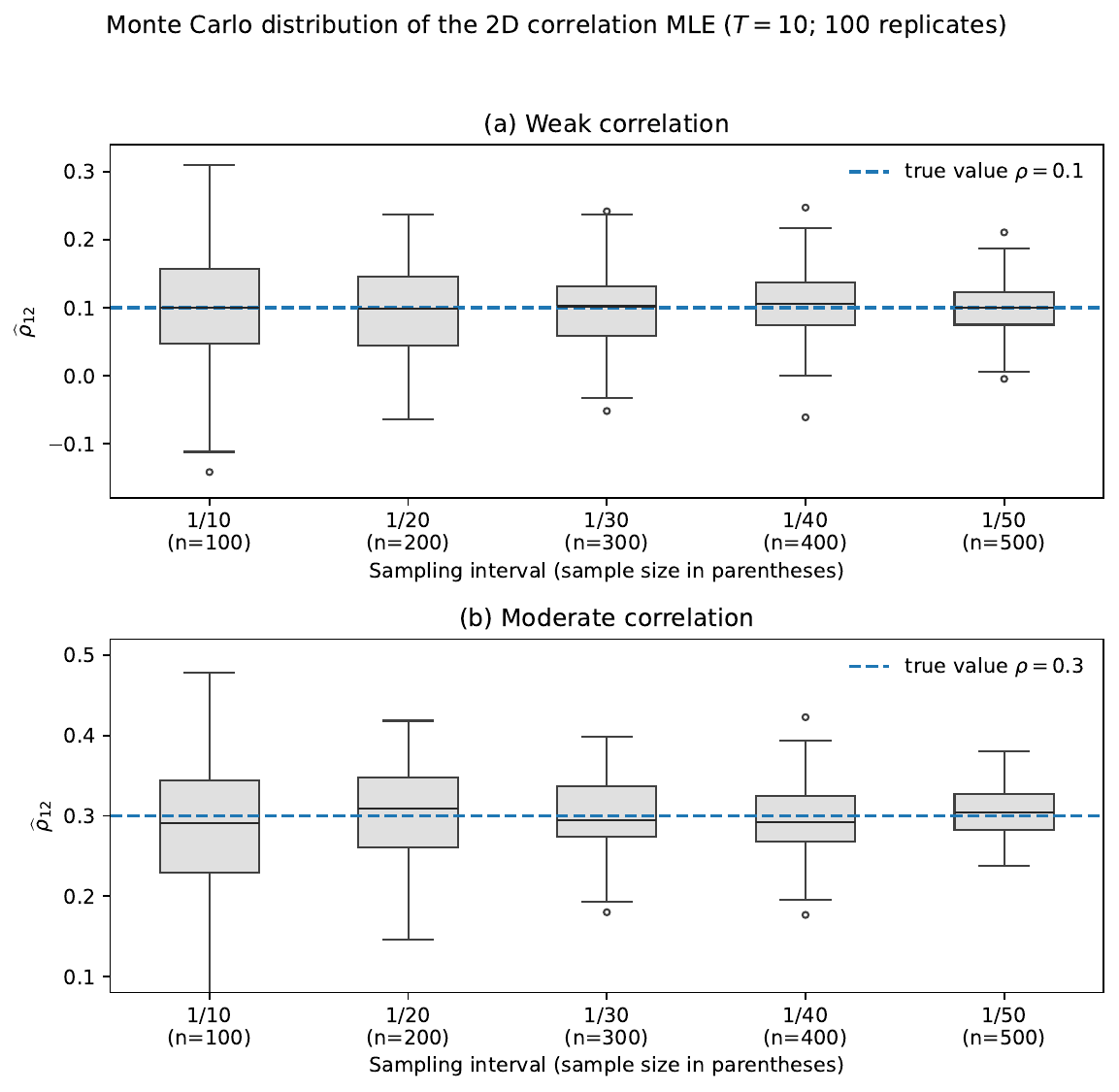}
\caption{Monte Carlo distributions of $\widehat{\rho}_{12}$ under
fixed-horizon grid refinement. Each boxplot summarizes 100
independent trajectories. From left to right, $\Delta$ decreases and
$n$ increases while $n\Delta=T=10$. Horizontal lines indicate the
true correlations.}
\label{fig:infillrho-boxplots}
\end{figure}

\begin{table}[H]
\centering
\scriptsize
\begin{tabular}{llrrrrrr}
\toprule
True $\rho_{12}$ & $\Delta$ & $n$
& Mean $\widehat{\rho}_{12}$
& Bias & RMSE & Coverage & Mean width\\
\midrule
$0.1$ & $1/10$ & 100
& 0.0993 & -0.0007 & 0.0934 & 0.95 & 0.3455\\
      & $1/20$ & 200
& 0.0981 & -0.0019 & 0.0671 & 0.91 & 0.2470\\
      & $1/30$ & 300
& 0.0972 & -0.0028 & 0.0567 & 0.92 & 0.2012\\
      & $1/40$ & 400
& 0.1044 &  0.0044 & 0.0525 & 0.90 & 0.1733\\
      & $1/50$ & 500
& 0.0995 & -0.0005 & 0.0414 & 0.91 & 0.1558\\[1mm]
$0.3$ & $1/10$ & 100
& 0.2911 & -0.0089 & 0.0807 & 0.94 & 0.3100\\
      & $1/20$ & 200
& 0.3036 &  0.0036 & 0.0605 & 0.95 & 0.2138\\
      & $1/30$ & 300
& 0.3011 &  0.0011 & 0.0457 & 0.95 & 0.1755\\
      & $1/40$ & 400
& 0.2955 & -0.0045 & 0.0434 & 0.94 & 0.1539\\
      & $1/50$ & 500
& 0.3067 &  0.0067 & 0.0347 & 0.96 & 0.1348\\
\bottomrule
\end{tabular}
\caption{Monte Carlo summaries for fixed-horizon estimation of
$\rho_{12}$. Each row is based on 100 independent trajectories.
Coverage and mean width refer to the approximate $95\%$
likelihood-ratio intervals.}
\label{tab:infillrho}
\end{table}

For both generating correlations, grid refinement is accompanied by
a monotone reduction in RMSE and mean likelihood-ratio interval
width.

\subsection{Computational scaling}\label{sub:computational-scaling-supp}

The regular-grid recursion reduces the number of numerical integrations
needed to construct the covariance matrix.  For coordinates
$a,b\in\{1,\ldots,p\}$, let $h_{a,b}=H_a+H_b$ and define
\[
 g_{a\mid b}(m)
 =
 \int_0^\Delta e^{-\beta_a u}
 \bigl((m+1)\Delta-u\bigr)^{h_{a,b}}\,du,
 \qquad m=0,\ldots,n-1,
\]
and
\[
 k_{a,b}(d)
 =
 \int_0^\Delta\!\int_0^\Delta
 e^{-\beta_a u}
 \lvert d\Delta+u-v\rvert^{h_{a,b}}
 e^{-\beta_b v}\,dv\,du,
 \qquad d=-(n-1),\ldots,n-1.
\]
Equation~\eqref{covy} of the main text can then be written as
\[
 \operatorname{cov}(y_{a,i},y_{b,j})
 =
 \frac{\rho_{a,b}}{2}
 \left\{
 \frac{1-e^{-\beta_b\Delta}}{\beta_b}g_{a\mid b}(i)
 +
 \frac{1-e^{-\beta_a\Delta}}{\beta_a}g_{b\mid a}(j)
 -k_{a,b}(j-i)
 \right\}.
\]
Thus, for a cross-coordinate block, the first two terms require two
sequences of $n$ one-dimensional integrals and the last term requires
at most $2n-1$ signed-lag double integrals.  For a marginal block,
the two one-dimensional sequences coincide and
$k_{a,a}(d)=k_{a,a}(-d)$, so only $n$ double integrals are required.
The direct evaluation of a separate double integral for each of the
$n^2$ entries is therefore replaced by $O(n)$ numerical integrations
per coordinate-pair block.  Across all marginal and cross-coordinate
blocks, this gives $O(p^2n)$ numerical integrations for fixed-grid
simulation in $p$ dimensions.  This is the multivariate extension of
the regular-grid reduction in \citet[Remark~3.2]{JY}; it applies in
particular to both the two- and three-dimensional models considered in
the article.

This reduction concerns covariance construction.  Filling and storing
the resulting dense $pn\times pn$ matrix still require
$O((pn)^2)$ arithmetic and storage.  The dominant operation in a
general Gaussian simulation or likelihood evaluation is a Cholesky
factorization of that matrix, which requires $O((pn)^3)$ operations.
Joint estimation does not alter these per-evaluation orders, and its
total cost also depends on the optimization dimension, the number of
starting values, and the convergence criteria.  The
admissible-correlation parametrization ensures that every parameter
vector considered during optimization defines a valid covariance
matrix, but it does not change the dense-matrix orders.

\section{Empirical analyses}\label{ApenD}

The empirical analyses below use the half-minute records and
preprocessing described in Section~\ref{sub:empirical-stage} of the main text.
Altitude-based analyses are restricted to Bats~3--5, and missing
Altitude values are not imputed. Likelihood calculations retain
Longitude and Latitude in degrees and express Altitude in kilometers;
the directional and cross-coordinate summaries use the metric
transformation and observed-time-gap normalization defined in the
main text. This appendix reports the complete diagnostic results,
cross-coordinate summaries, and supporting likelihood comparisons
referenced there.
\subsection{Altitude diagnostics and marginal model comparison}
\label{sub:altitude-diagnostics}

The Altitude records of Bats~3--5 are examined using the same classes
of diagnostics previously considered for Longitude and Latitude by
\citet{RamirezEtAl2026}: rolling summaries, sample autocorrelation
functions, augmented Dickey--Fuller (ADF) and KPSS tests, and
detrended fluctuation analysis (DFA). The precise implementation used
for Altitude is specified below. Figures~\ref{fig:altstat}--
\ref{fig:altdfa} report the graphical diagnostics, and
Tables~\ref{tab:altstat}--\ref{tab:altdfa} give the corresponding
ADF, KPSS, and DFA results.

The calculations use the retained observation sequences without
interpolating empty half-minute intervals. Consequently, rolling
windows, autocorrelation lags, deterministic trends in the ADF
regressions, and DFA scales index successive retained observations
rather than equal increments of elapsed time. The rolling summaries
in Figure~\ref{fig:altstat} are centered means and sample variances
calculated from complete windows of 10 retained observations. Sample
autocorrelation functions are displayed through lag 20. For Bat~\(b\)
and differencing order \(d\in\{0,1,2\}\), let
\[
n_{b,d}=N_b-d
\]
denote the length of the corresponding level or differenced series.
The reference bounds for that series are
\[
\pm\frac{1.96}{\sqrt{n_{b,d}}}.
\]

For a tested series of length \(N\), the ADF regression includes an
intercept, a linear trend in the retained-observation index, and
\[
k
=
\left\lfloor (N-1)^{1/3}\right\rfloor
\]
lagged differences. Its null hypothesis is a unit root
\citep{DickeyFuller1979}. The KPSS test uses the level-stationarity
null and
\[
L
=
\min\left\{
N-2,\,
\left\lceil
12\left(\frac{N}{100}\right)^{1/4}
\right\rceil
\right\}
\]
Bartlett-weighted lags \citep{KPSS1992}. The two tests are interpreted
jointly: rejection of the ADF null together with failure to reject
the KPSS null is compatible with level stationarity, whereas failure
to reject both opposing null hypotheses is inconclusive. Values
reported as \(p\leq0.010\) for ADF or \(p\geq0.100\) for KPSS are
bounds of the tabulated \(p\)-value approximations rather than exact
values. Because the retained observations are not equally spaced in
elapsed time, these tests are used as sequence-based diagnostics and
their nominal \(p\)-values are interpreted descriptively.

\begin{figure}[H]
\centering
\includegraphics[width=\textwidth]
{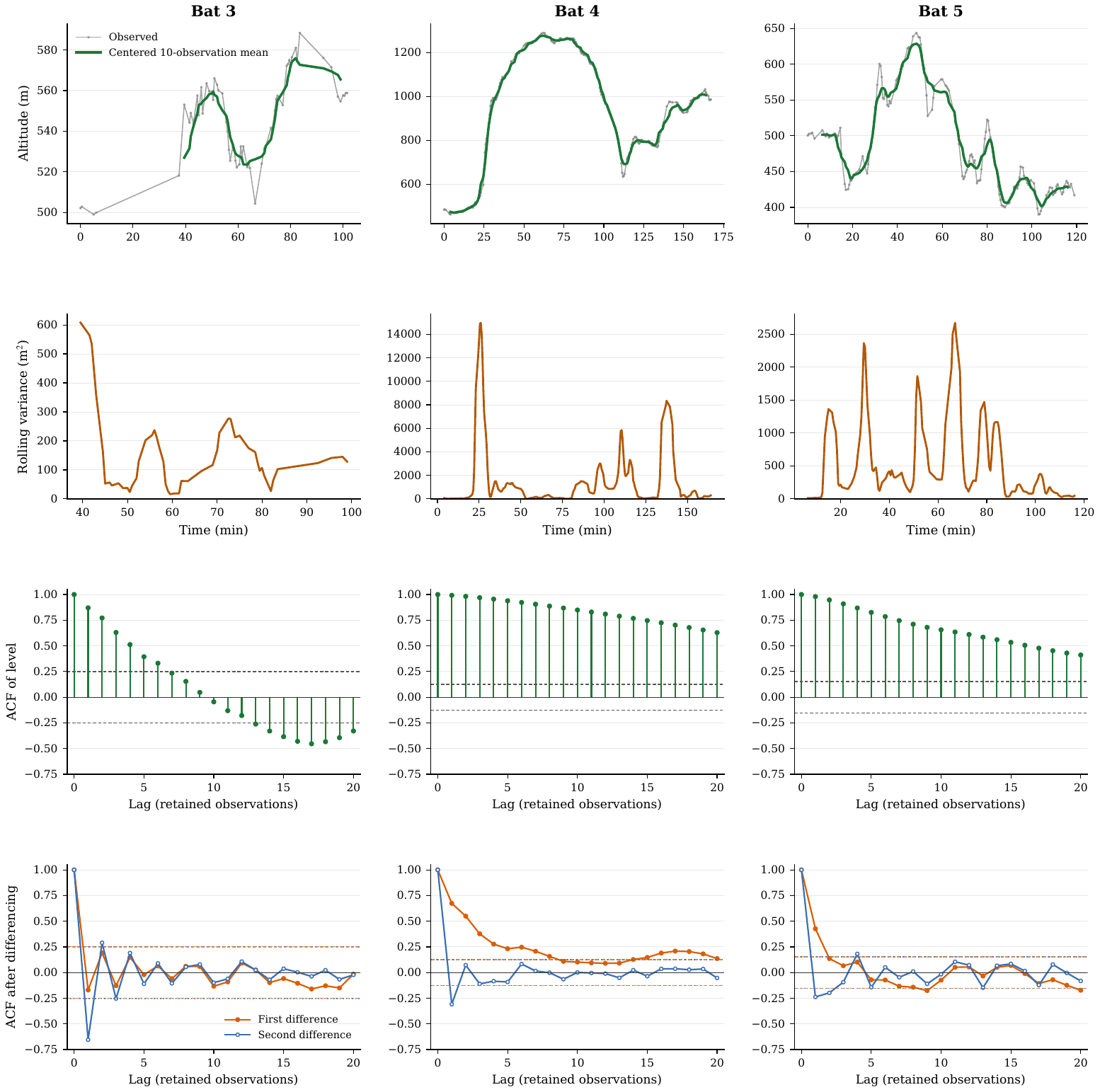}
\caption{Altitude diagnostics for Bats~3--5. The first row shows the
observed Altitude series and centered rolling means based on complete
windows of 10 retained observations; the second row shows the
corresponding centered rolling sample variances. The third row
reports the sample ACFs of the observed levels, and the fourth row
reports the ACFs of the first and second differences. For a series of
differencing order \(d\), the dashed lines are the approximate
pointwise white-noise bounds
\(\pm1.96/\sqrt{n_{b,d}}\), where \(n_{b,d}=N_b-d\). Rolling windows
and autocorrelation lags index successive retained observations;
empty half-minute intervals are not interpolated.}
\label{fig:altstat}
\end{figure}

\begin{table}[H]
\centering
\small
\setlength{\tabcolsep}{6pt}
\renewcommand{\arraystretch}{1.10}
\begin{tabular}{@{}clrrrr@{}}
\toprule
Bat & Series
& ADF statistic & ADF \(p\)
& KPSS statistic & KPSS \(p\)\\
\midrule
3 & Level
& -2.475 & 0.383
& 0.251 & \(\geq0.100\)\\
3 & First difference
& -3.349 & 0.072
& 0.118 & \(\geq0.100\)\\
3 & Second difference
& -8.202 & \(\leq0.010\)
& 0.110 & \(\geq0.100\)\\
\addlinespace[2pt]
4 & Level
& -1.790 & 0.664
& 0.362 & 0.094\\
4 & First difference
& -3.858 & 0.017
& 0.301 & \(\geq0.100\)\\
4 & Second difference
& -7.429 & \(\leq0.010\)
& 0.042 & \(\geq0.100\)\\
\addlinespace[2pt]
5 & Level
& -2.122 & 0.525
& 0.581 & 0.024\\
5 & First difference
& -4.802 & \(\leq0.010\)
& 0.096 & \(\geq0.100\)\\
5 & Second difference
& -6.846 & \(\leq0.010\)
& 0.064 & \(\geq0.100\)\\
\bottomrule
\end{tabular}
\caption{ADF and KPSS results for the Altitude levels and their first
and second differences. The ADF null hypothesis is a unit root,
whereas the KPSS null hypothesis is level stationarity. Values shown
as \(\leq0.010\) or \(\geq0.100\) are bounds of the tabulated
\(p\)-value approximations.}
\label{tab:altstat}
\end{table}

The level ACFs are positive at short lags for all three bats, with
slower decay for Bats~4--5. After first differencing, the Bat~3
autocorrelations at lags \(1,\ldots,20\) remain within the displayed
pointwise bounds. Bat~4 retains positive first-difference
autocorrelation over several lags, whereas Bat~5 retains a positive
lag-one autocorrelation and isolated later values outside the
corresponding bounds. The second-difference autocorrelation at lag
one is negative and outside its bound for all three bats. These
comparisons are descriptive because the bounds are pointwise and the
observation gaps are irregular.

The level tests do not give a common conclusion across the three
Altitude records. For Bats~3--4, neither test rejects its respective
null hypothesis, so the paired results are inconclusive. For Bat~5,
ADF fails to reject a unit root and KPSS rejects level stationarity.
After first differencing, the ADF and KPSS results are compatible
with stationarity for Bats~4--5. The Bat~3 result remains
inconclusive because ADF gives \(p=0.072\). After second
differencing, the two tests are compatible with stationarity for all
three bats. Taken together with the autocorrelation functions, these
results do not justify restricting the marginal Altitude comparison
to stationary covariance families.

For DFA, let \(X_1,\ldots,X_N\) denote an Altitude sequence and
define its centered cumulative profile by
\[
Y(j)
=
\sum_{r=1}^{j}
\left(X_r-\overline X\right),
\qquad
j=1,\ldots,N.
\]
At scale \(s\), let
\[
q_s
=
\left\lfloor\frac{N}{s}\right\rfloor.
\]
The profile is partitioned into \(q_s\) complete nonoverlapping
windows of length \(s\), starting from the beginning of the series.
The partition is then repeated from the end, yielding \(2q_s\)
windows and retaining the observations otherwise excluded when
\(N\) is not divisible by \(s\). Let
\(\mathcal I_{\nu,s}\) denote window \(\nu\), and let
\(\widehat Y_{\nu,s}(j)\) be the least-squares linear trend fitted
within that window. Define
\[
F_{\nu}^{2}(s)
=
\frac{1}{s}
\sum_{j\in\mathcal I_{\nu,s}}
\left\{
Y(j)-\widehat Y_{\nu,s}(j)
\right\}^{2},
\qquad
\nu=1,\ldots,2q_s,
\]
and
\[
F(s)
=
\left\{
\frac{1}{2q_s}
\sum_{\nu=1}^{2q_s}
F_{\nu}^{2}(s)
\right\}^{1/2}.
\]
This is the standard two-direction DFA construction
\citep{Peng1994}.

Up to 12 distinct integer scales are obtained by flooring a
logarithmically spaced grid from \(4\) to
\(\lfloor N/4\rfloor\), with both endpoints included. If \(J\)
distinct scales are available, the regression
\[
\log F(s)=a+\alpha\log s
\]
is fitted over the largest
\(\max\{4,\lceil J/2\rceil\}\) scales. The fitted scales are
\(9,10,11,13,15\) for Bat~3;
\(17,22,28,36,46,60\) for Bat~4; and
\(14,17,21,26,33,41\) for Bat~5. The reported interval is the usual
ordinary-least-squares regression interval for the fitted slope and
is used only as a finite-range regression summary. It is not a
confidence interval for a mifOU Hurst parameter.

\begin{figure}[H]
\centering
\includegraphics[width=\textwidth]{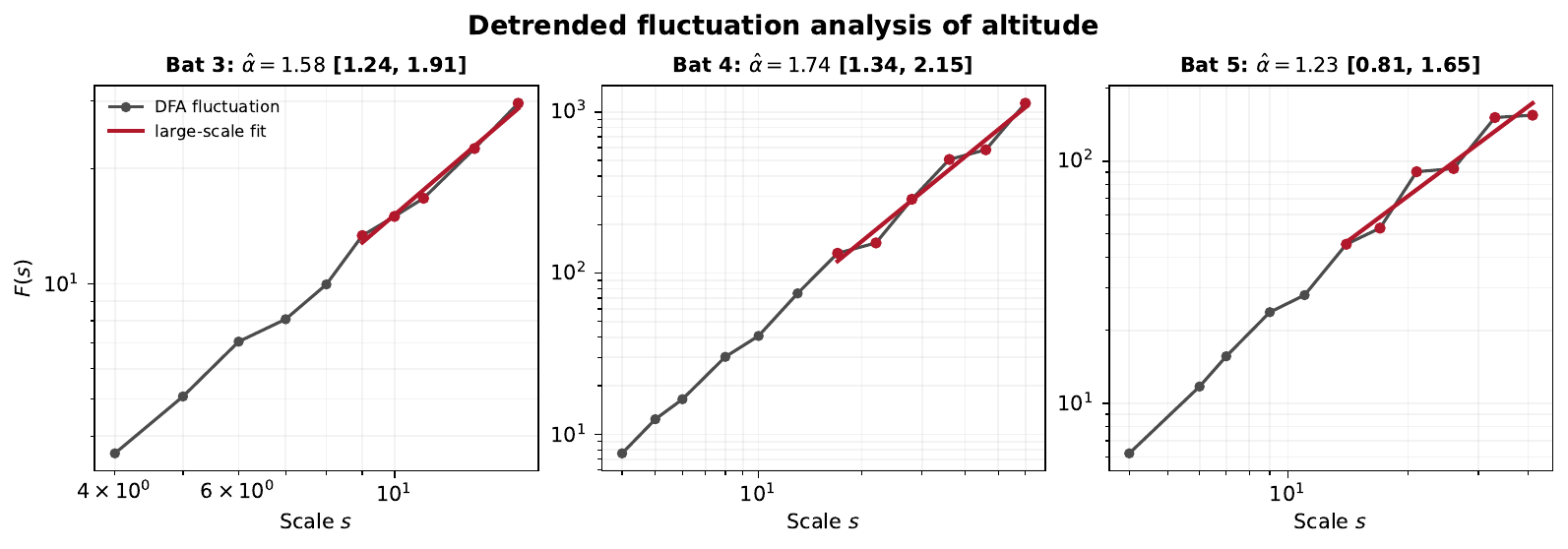}
\caption{Altitude DFA curves for Bats~3--5. Points show the
fluctuation function \(F(s)\) at the available integer scales, and
red segments show the ordinary-least-squares fits over the selected
large-scale ranges. Panel headings report the fitted slope and its
95\% regression interval. DFA scales index successive retained
observations rather than equal intervals of elapsed time.}
\label{fig:altdfa}
\end{figure}

\begin{table}[H]
\centering
\small
\setlength{\tabcolsep}{8pt}
\begin{tabular}{@{}crrr@{}}
\toprule
Bat
& \(\widehat{\alpha}\)
& 95\% regression interval
& \(R^2\)\\
\midrule
3 & 1.578 & [1.241, 1.915] & 0.987\\
4 & 1.742 & [1.337, 2.147] & 0.973\\
5 & 1.229 & [0.811, 1.648] & 0.943\\
\bottomrule
\end{tabular}
\caption{Estimated DFA slopes over the selected large-scale ranges,
with their ordinary-least-squares regression intervals and
coefficients of determination.}
\label{tab:altdfa}
\end{table}

The fitted DFA slopes exceed one for all three Altitude sequences,
although the regression interval for Bat~5 includes one. These
values describe scaling over the selected finite ranges and are
interpreted together with the rolling summaries, autocorrelation
functions, and stationarity diagnostics. They are not used as
estimates of the mifOU Hurst parameters or as a stand-alone
classification of asymptotic memory.

\subsubsection{Likelihood comparison of marginal Altitude models}
\label{subsub:altitude-model-comparison}

Each Altitude series is fitted with the same seven Gaussian
covariance families used for Longitude and Latitude by
\citet{RamirezEtAl2026}. This comparison is strictly univariate: no
cross-coordinate dependence parameter is estimated. Let \(A_b(t)\)
denote Altitude in meters and define
\[
Y_{b,i}
=
\frac{A_b(t_i)-A_b(0)}{1000},
\qquad t_i>0.
\]
Thus, the response used in every likelihood calculation is expressed
in kilometers. Within each bat, every candidate is evaluated on the
same response vector and observation times under the same zero-mean
convention. For candidate \(m\), with covariance matrix
\(\boldsymbol\Sigma_{b,m}(\theta_m)\) and \(k_m\) estimated
parameters, the Akaike information criterion \citep{Akaike1974} is
\[
\operatorname{AIC}_{b,m}
=
-2\ell_{b,m}(\widehat\theta_m)+2k_m,
\qquad
\widehat\theta_m
=
\arg\max_{\theta_m}\ell_{b,m}(\theta_m),
\]
where \(\ell_{b,m}\) is the Gaussian log-likelihood. AIC values are
compared only among candidates fitted to the same bat. For a
stationary candidate, subtracting the observed initial Altitude is
deterministic centering; its covariance is therefore evaluated as
\(R(t_i-t_j)\), rather than as the covariance of a stochastic
difference from \(A_b(0)\).

Table~\ref{tab:altmodelspec} defines the seven covariance families
and records the parameters counted in their AIC values.

{\scriptsize
\setlength{\tabcolsep}{3pt}
\begin{longtable}{
p{.11\textwidth}
p{.32\textwidth}
p{.18\textwidth}
p{.25\textwidth}}
\caption{Candidate families for the univariate Altitude comparison.
\(U\) denotes Tricomi's confluent hypergeometric function. For
\(\zeta\), the positive-\(\beta\) and \(\beta=0\) specifications are
evaluated separately, and the retained specification is counted with
its corresponding number of estimated parameters. The \(\zeta\),
ifOU, and iOU scale parameters are profiled when possible.}
\label{tab:altmodelspec}\\
\toprule
Model
& Covariance or construction
& Estimated
& Fixed/interpretation\\
\midrule
\(\zeta\)
&
\(2\sigma^2\!\int_0^{s\wedge t}\!
e^{-\beta u}
[g(s+t-2u)-g(s-u)-g(t-u)]\,du\),
\(g(x)=x\log x\)
&
\(\beta,\sigma^2\) if \(\beta>0\);
\(\sigma^2\) if \(\beta=0\)
&
nonstationary, not intrinsically stationary;
the boundary specification fixes \(\beta=0\)
\\
ifOU \(\mu_H\)
&
position obtained by integrating an fOU velocity
&
\(\beta,\sigma,H\)
&
nonstationary and not intrinsically stationary;
covariance grows as \(T^{2H-1}\) for \(H>1/2\) and approaches a
finite limit for \(H\leq1/2\)
\\
iOU \(\mu_{1/2}\)
&
same position model with Brownian driving noise
&
\(\beta,\sigma\)
&
\(H=1/2\)
\\
\(\sigma W_H\)
&
\(\frac{\sigma^2}{2}
(s^{2H}+t^{2H}-|t-s|^{2H})\)
&
\(H,\sigma\)
&
nonstationary with stationary increments
\\
Confluent \(A\)
&
\(\sigma^2\Gamma(\eta+\alpha)/\Gamma(\eta)\)
\newline
\(\times
U\!\left(
\alpha,1-\eta,
\eta\{|t-s|/\beta\}^2
\right)\)
&
\(\sigma^2,\eta,\alpha,\beta\)
&
stationary; local and tail parameters separated
\\
Stein \(S^{(2)}\)
&
\(\sigma^2(1+\lambda|t-s|)^{-1}\)
&
\(\sigma^2,\lambda\)
&
stationary; exponent fixed at \(2/2\)
\\
Stein \(S^{(3)}\)
&
\(\sigma^2(1+\lambda|t-s|)^{-3/2}\)
&
\(\sigma^2,\lambda\)
&
stationary; exponent fixed at \(3/2\)
\\
\bottomrule
\end{longtable}}

For \(\zeta\), the interior specification \(\beta>0\) and the
boundary specification \(\beta=0\) are evaluated separately. The
interior fit has two estimated parameters, \((\beta,\sigma^2)\), and
its AIC uses \(k=2\). At the boundary, \(\beta\) is fixed at zero,
only \(\sigma^2\) is estimated, and the AIC uses \(k=1\). The
lower-AIC specification is retained as the \(\zeta\) entry in the
seven-family comparison.

For ifOU and iOU, the search interval is
\(0.001<\beta\leq400\); an estimate of \(400\) denotes the
numerical bound rather than an interior optimum. For the Confluent
model, \(0.001<\alpha\leq100\) and
\(0.001<\beta\leq2000\). 
Table~\ref{tab:altitude-top3-supp} reports the three lowest-AIC
Altitude fits for each individual.

\begin{table}[H]
\centering
\scriptsize
\setlength{\tabcolsep}{4pt}
\renewcommand{\arraystretch}{1.08}
\begin{tabularx}{\textwidth}{
@{}cc
>{\raggedright\arraybackslash}p{.17\textwidth}
>{\raggedright\arraybackslash}X
rrr@{}}
\toprule
Bat & Rank & Model & MLE & AIC & \(\Delta\)AIC & \(w\)\\
\midrule
3 & 1 & fBM
& \((\widehat H,\widehat\sigma)=(0.45014,0.008295)\)
& -411.838 & 0.000 & 0.431\\
3 & 2 & iOU
& \((\widehat\beta,\widehat\sigma)=(400^*,3.3428)\)
& -411.233 & 0.605 & 0.318\\
3 & 3 & ifOU
& \((\widehat\beta,\widehat\sigma,\widehat H)=(400^*,3.3276,0.44914)\)
& -409.818 & 2.020 & 0.157\\
\midrule
4 & 1 & \(\zeta\)
& \((\widehat\beta,\widehat{\sigma}^2)=(0.005518,1.7032\times10^{-4})\)
& -1532.139 & 0.000 & 0.994\\
4 & 2 & ifOU
& \((\widehat\beta,\widehat\sigma,\widehat H)=(0.06778,0.04709,0.06148)\)
& -1521.443 & 10.695 & 0.00473\\
4 & 3 & fBM
& \((\widehat H,\widehat\sigma)=(0.96548,0.04102)\)
& -1518.242 & 13.897 & 0.00096\\
\midrule
5 & 1 & iOU
& \((\widehat\beta,\widehat\sigma)=(1.6401,0.03799)\)
& -1061.670 & 0.000 & 0.509\\
5 & 2 & ifOU
& \((\widehat\beta,\widehat\sigma,\widehat H)=(0.87739,0.03180,0.30353)\)
& -1060.851 & 0.819 & 0.338\\
5 & 3 & \(\zeta\)
& \((\widehat\beta,\widehat{\sigma}^2)=(0.012414,2.0344\times10^{-4})\)
& -1059.092 & 2.577 & 0.140\\
\bottomrule
\end{tabularx}
\caption{Three lowest-AIC univariate Altitude models for Bats~3--5.
Akaike weights are normalized over the seven retained family entries
for the corresponding individual. A star denotes the numerical
bound \(\beta_{\max}=400\).}
\label{tab:altitude-top3-supp}
\end{table}

The minimum-AIC Altitude family differs among individuals. For Bat~3,
fBM and iOU differ by \(0.605\) AIC units, and ifOU is \(2.020\)
units above fBM. For Bat~4, \(\zeta\) is separated from ifOU by
\(10.695\) AIC units. For Bat~5, iOU and ifOU differ by \(0.819\)
AIC units. The selected fBM model for Bat~3 has stationary
short-memory increments, with \(\widehat H=0.4501\). The selected
\(\zeta\) model for Bat~4 is neither stationary nor intrinsically
stationary and has logarithmic covariance growth. The selected iOU
model for Bat~5 corresponds to \(H=1/2\), for which the covariance
between separated increments decays exponentially. Thus, the fitted
marginal Altitude covariance family is not the same for the three
individuals.

\subsection{Movement directions and descriptive coordinate
associations}
\label{sub:movement-associations}

Figures~\ref{fig:angles12},~\ref{fig:angles34}, and~\ref{fig:angles5} summarize the directions of successive metric
displacements. Table~\ref{tab:desccorr} reports Pearson and Spearman
correlations between coordinate-specific displacement rates.

\begin{figure}[H]
\centering
\includegraphics[width=.82\textwidth]{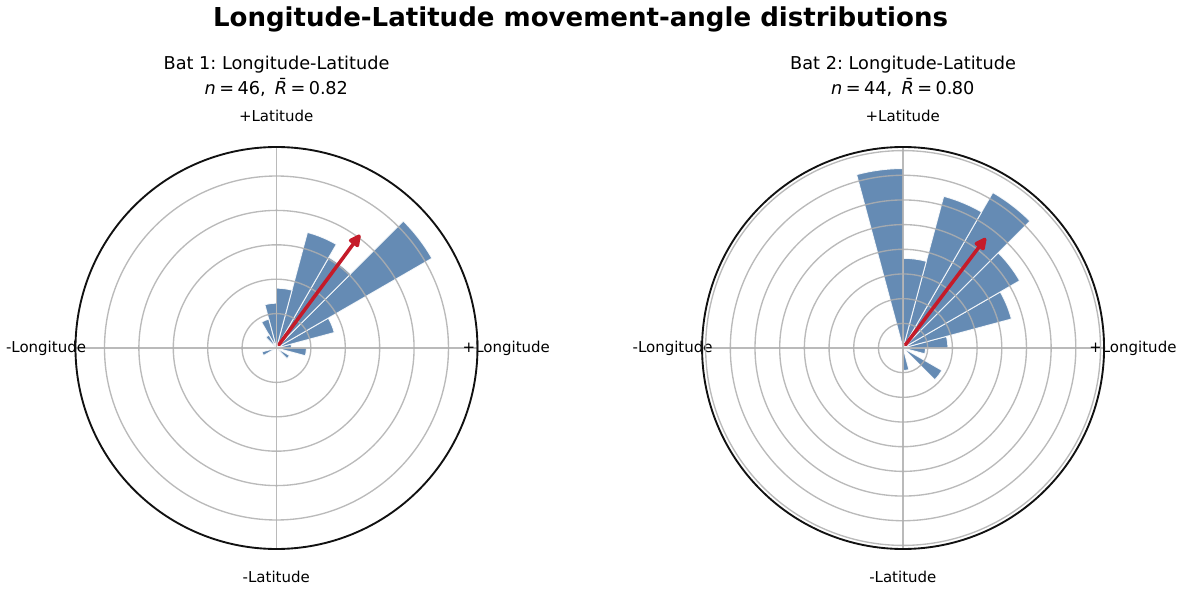}
\caption{Directions of successive Longitude--Latitude displacements
for Bats~1--2 after conversion of both horizontal coordinates to
kilometers. Bars are relative frequencies in 24 angular bins; the
arrow indicates the circular mean direction, and \(\bar R\) denotes
the mean resultant length.}
\label{fig:angles12}
\end{figure}

\begin{figure}[H]
\centering
\includegraphics[width=.95\textwidth]{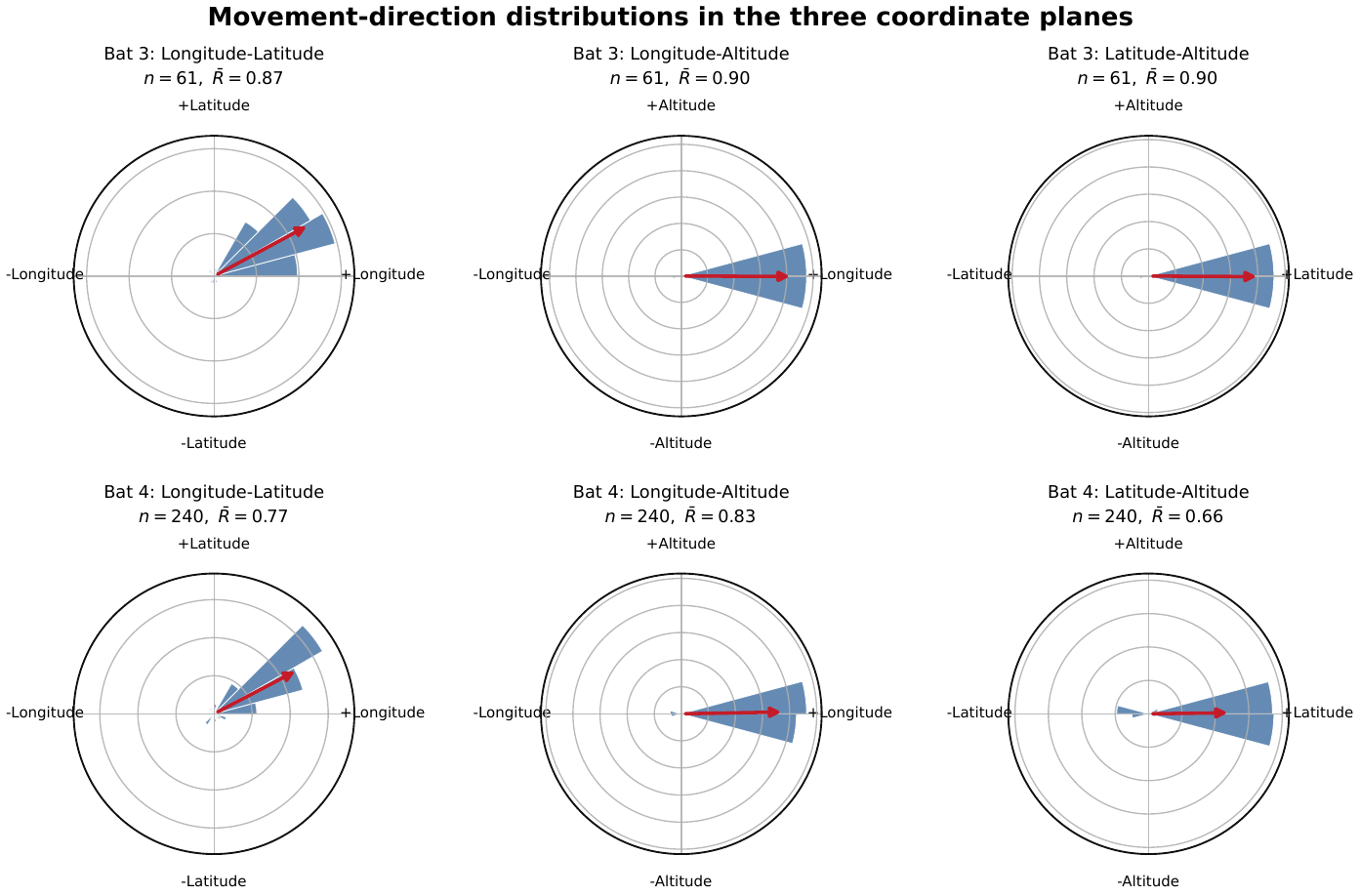}
\caption{Directions of successive metric displacements for Bats~3--4
in the three coordinate planes. All displacement components are
expressed in kilometers. Bars are relative frequencies in 24 angular
bins; the arrow indicates the circular mean direction, and
\(\bar R\) denotes the mean resultant length.}
\label{fig:angles34}
\end{figure}

\begin{figure}[H]
\centering
\includegraphics[width=.90\textwidth]{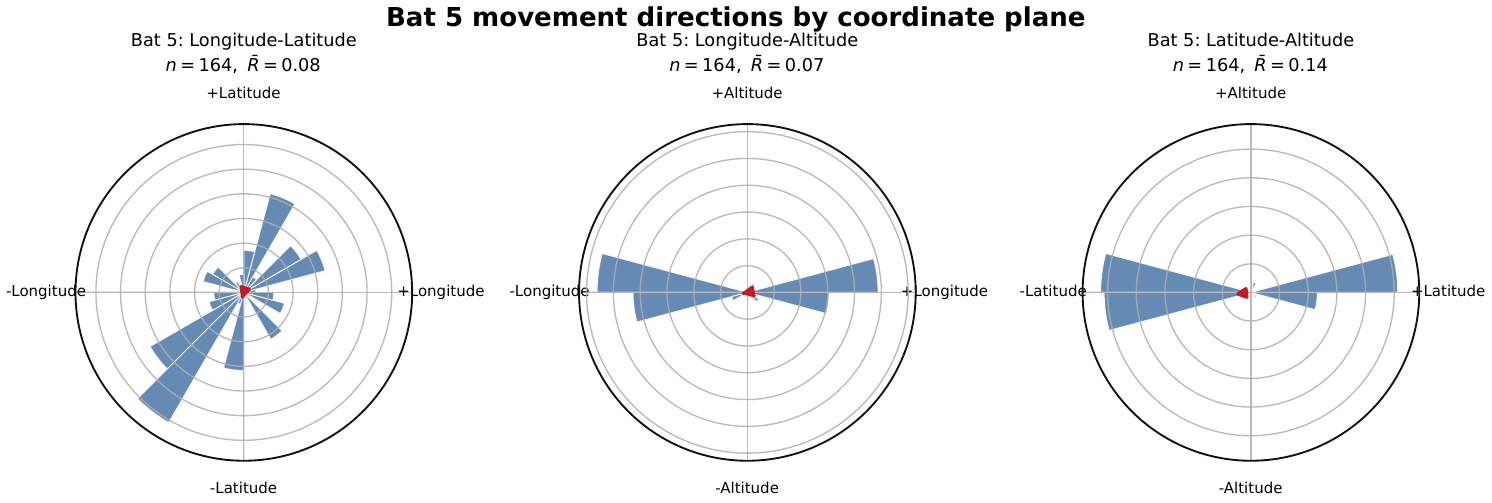}
\caption{Directions of successive metric displacements for Bat~5 in
the three coordinate planes. All displacement components are
expressed in kilometers. Bars are relative frequencies in 24 angular
bins; the arrow indicates the circular mean direction, and
\(\bar R\) denotes the mean resultant length.}
\label{fig:angles5}
\end{figure}

\begin{table}[H]
\centering
\footnotesize
\setlength{\tabcolsep}{4pt}
\begin{tabular}{@{}clrrr@{}}
\toprule
Bat
& Coordinate displacement-rate pair
& \(n_{\mathrm{rate}}\)
& Pearson \(r\) (\(p\))
& Spearman \(\rho_s\) (\(p\))\\
\midrule
1 & Longitude--Latitude
& 46 & -0.284 (0.0557) & -0.201 (0.180)\\
2 & Longitude--Latitude
& 44 & -0.381 (0.0107) & -0.420 (0.0045)\\
3 & Longitude--Latitude
& 61 & 0.580 (\(<10^{-6}\)) & 0.747 (\(<10^{-11}\))\\
3 & Longitude--Altitude
& 61 & -0.061 (0.640) & 0.016 (0.901)\\
3 & Latitude--Altitude
& 61 & 0.051 (0.695) & 0.199 (0.124)\\
4 & Longitude--Latitude
& 240 & 0.471 (\(<10^{-13}\)) & 0.444 (\(<10^{-12}\))\\
4 & Longitude--Altitude
& 240 & -0.063 (0.333) & -0.082 (0.205)\\
4 & Latitude--Altitude
& 240 & 0.092 (0.157) & 0.068 (0.296)\\
5 & Longitude--Latitude
& 164 & 0.317 (\(<10^{-4}\)) & 0.472 (\(<10^{-9}\))\\
5 & Longitude--Altitude
& 164 & 0.047 (0.547) & 0.098 (0.213)\\
5 & Latitude--Altitude
& 164 & 0.146 (0.0615) & 0.186 (0.0172)\\
\bottomrule
\end{tabular}
\caption{Pearson and Spearman correlations between coordinate
displacement rates expressed in kilometers per minute.
\(n_{\mathrm{rate}}\) is the number of successive rates entering
each calculation. Parentheses contain conventional, unadjusted
two-sided \(p\)-values, which account for neither serial dependence
between adjacent rates nor the multiple coordinate pairs examined.
These descriptive correlations are not estimates of the mifOU
cross-coordinate dependence parameters \(\rho_{i,j}\).}
\label{tab:desccorr}
\end{table}

For Bat~1, the Longitude--Latitude coefficients are negative, but
their unadjusted \(p\)-values exceed \(0.05\). For Bat~2, the Pearson
and Spearman coefficients are \(-0.381\) and \(-0.420\),
respectively, with unadjusted \(p\)-values below \(0.05\). For
Bats~3--5, both measures give positive Longitude--Latitude
associations, with the largest Pearson and Spearman coefficients for
Bat~3.

Associations involving Altitude are smaller. Their Pearson
correlations have absolute value at most \(0.146\), and none of the
corresponding Pearson tests has an unadjusted \(p\)-value below
\(0.05\). Among the Spearman correlations involving Altitude, only
the Latitude--Altitude pair for Bat~5 has \(p<0.05\), with
coefficient \(0.186\) and \(p=0.0172\).

In the Longitude--Latitude plane, the mean resultant lengths for
Bats~3--5 are \(0.875\), \(0.765\), and \(0.079\), respectively.
Successive horizontal displacements are therefore strongly
concentrated around a preferred direction for Bats~3--4 but not for
Bat~5. Directional concentration and correlation between
coordinate-specific displacement rates describe different
properties of the observed trajectories and are interpreted
separately.

\subsection{Complete-trajectory model comparison}\label{sub:complete-trajectory-supp}

{\small
The complete-model comparison is built from the seven univariate families in Table~1 of \citet{RamirezEtAl2026}. All independent products are retained, and the horizontal reference AIC values are used exactly as reported in that table. For Bats~3--5, the corresponding Altitude reference AIC values are those of Section~\ref{subsub:altitude-model-comparison} of this appendix. The selected coordinatewise families are fBM/fBM for Bats~1--2, \(\zeta\)/ifOU for Bat~3, fBM/\(\zeta\) for Bat~4, and \(\zeta\)/Confluent for Bat~5; the selected Altitude families for Bats~3--5 are fBM, \(\zeta\), and iOU, respectively.

For Bats~1--2, the \(7^2=49\) independent products are compared with correlated iOU and ifOU Longitude--Latitude models, giving 51 candidates. For Bats~3--5, the comparison contains all \(7^3=343\) independent products, all \(3\times2\times7=42\) models formed by one correlated iOU or ifOU coordinate pair and one independent singleton family, and the full three-dimensional iOU and ifOU models. This gives 387 candidates per three-dimensional trajectory. The other five univariate families enter a correlated candidate only as the independent singleton; no cross-covariance involving those families is imposed.

All marginal and dependence parameters are estimated jointly within each correlated model. Let \(C\) denote a correlated candidate and let \(I(C)\) denote the independently fitted product model with the same coordinate-specific marginal families as \(C\). At half-minute resolution, the coordinatewise reference AIC values define the marginal scale of the complete ranking. The contribution of cross-coordinate dependence is measured by the AIC difference between \(C\) and \(I(C)\), calculated with the common multivariate likelihood implementation. The calibrated score is
\[
\mathrm{AIC}_{\mathrm{cal}}(C)
=\sum_j\mathrm{AIC}^{\mathrm{ref}}_j(M_j)
+\left\{\mathrm{AIC}_{\mathrm g}(C)
-\mathrm{AIC}_{\mathrm g}\!\left(I(C)\right)\right\},
\]
where the sum is over the coordinates observed for that individual. The term in braces compares two separate maximum-likelihood fits under the same marginal-family combination, response vector, observation times, and likelihood implementation. It therefore does not hold the marginal parameters at the estimates from \(I(C)\). Absolute values of \(\mathrm{AIC}_{\mathrm g}\) are used only within a fixed marginal-family combination. Parameters reported for \(C\) are specific to that model and include every coordinate-specific \(\beta\), \(\sigma\), and \(H\), together with the active correlations.

The reference AIC values retain the parameter counts used in their respective univariate comparisons. In the common multivariate fits entering the term in braces, \(C\) and \(I(C)\) use the same nominal marginal parametrizations, so the shared marginal counts cancel from their AIC difference. Complete-model parameter counts include the nominal marginal dimensions and the active correlations; in these multivariate fits, \(\zeta\) is represented by \((\beta,\sigma^2)\), including when its optimum occurs at \(\widehat\beta=0\). The ifOU, iOU, and multivariate OU-position fits use \(0.001<\beta\leq400\). Newly fitted Confluent Altitude components and the Confluent fits at one- and two-minute resolution use \(0.001<\alpha\leq100\) and \(0.001<\beta\leq2000\); the half-minute horizontal Confluent reference fits are taken unchanged from Table~1 of \citet{RamirezEtAl2026} and are not reoptimized under these new bounds. For each independent product, \(\mathrm{AIC}_{\mathrm{cal}}\) is the sum of its coordinatewise reference AIC values. Akaike weights are normalized separately within each bat over the 51 candidates for Bats~1--2 and the 387 candidates for Bats~3--5. Table~\ref{tab:topthree-main} of the main text reports the complete model-specific estimates for the three leading half-minute models.
}

\subsubsection{Fractional temporal structure}

The complete ranking combines two distinct questions. The first is whether the temporal flexibility provided by estimated Hurst parameters is supported after the dependence structure has been optimized. The second is whether cross-coordinate correlation improves the fit after the Hurst parameters have been profiled. The restricted comparisons in this and the following subsection separate these questions.

For the first comparison, the likelihood is optimized over the admissible dependence structures within two pure OU-position families. In the mifOU family, every observed coordinate is modeled by ifOU and its Hurst parameter is estimated. In the miOU family, every Hurst parameter is fixed at \(H_i=1/2\). For Bats~1--2, the response is two-dimensional and the admissible structures are independence and a correlated Longitude--Latitude block. For Bats~3--5, the response is three-dimensional and the search includes independence, each correlated two-coordinate block with the remaining coordinate independent, and full three-dimensional dependence.

All candidates in this restricted comparison use the same response and the same multivariate likelihood implementation, so their direct global-fit AIC values are comparable without the calibration used to connect the complete ranking to the published coordinatewise AIC values. For a model with \(k\) free parameters, AICc is calculated as
\[
\mathrm{AICc}
=
\mathrm{AIC}
+
\frac{2k(k+1)}{m-k-1},
\qquad m=n-1,
\]
where \(m\) is the number of noninitial observation times. Each model is maximized separately over its complete parameter vector; in particular, a full three-dimensional fit is not evaluated at estimates inherited from an independent or reduced fit. Because the specification minimizing AIC within a temporal family can differ from the one minimizing AICc, Table~\ref{tab:familyaic} reports the criterion-specific best mifOU and miOU specifications.

\begin{landscape}
\begin{table}[p]
\centering\scriptsize
\setlength{\tabcolsep}{3.6pt}
\renewcommand{\arraystretch}{1.06}
\begin{tabularx}{\linewidth}{@{}cc>{\raggedright\arraybackslash}Xr>{\raggedright\arraybackslash}Xrr@{}}
\toprule
Bat & Response & Best mifOU specification & Score & Best miOU specification (\(H_i=1/2\)) & Score & Difference\\
\midrule
\multicolumn{7}{l}{\textit{Panel A: AIC}}\\
1 & Lon--Lat & ifOU(Lon)\(\oplus\)ifOU(Lat) & -527.999 & iOU(Lon,Lat) & -491.153 & 36.845\\
2 & Lon--Lat & ifOU(Lon)\(\oplus\)ifOU(Lat) & -551.618 & iOU(Lon,Lat) & -534.223 & 17.395\\
3 & Lon--Lat--Alt & ifOU(Lon,Lat,Alt) & -1584.285 & iOU(Lon,Lat,Alt) & -1565.470 & 18.815\\
4 & Lon--Lat--Alt & ifOU(Lon,Lat)\(\oplus\)ifOU(Alt) & -6407.044 & iOU(Lon,Lat)\(\oplus\)iOU(Alt) & -6286.792 & 120.252\\
5 & Lon--Lat--Alt & ifOU(Lon,Lat)\(\oplus\)ifOU(Alt) & -4296.970 & iOU(Lon)\(\oplus\)iOU(Lat)\(\oplus\)iOU(Alt) & -4188.101 & 108.869\\
\midrule
\multicolumn{7}{l}{\textit{Panel B: AICc with \(m=n-1\)}}\\
1 & Lon--Lat & ifOU(Lon)\(\oplus\)ifOU(Lat) & -525.845 & iOU(Lon,Lat) & -489.653 & 36.192\\
2 & Lon--Lat & ifOU(Lon)\(\oplus\)ifOU(Lat) & -549.348 & iOU(Lon,Lat) & -532.644 & 16.704\\
3 & Lon--Lat--Alt & ifOU(Lon,Lat)\(\oplus\)ifOU(Alt) & -1579.360 & iOU(Lon,Lat,Alt) & -1561.941 & 17.419\\
4 & Lon--Lat--Alt & ifOU(Lon,Lat)\(\oplus\)ifOU(Alt) & -6406.083 & iOU(Lon,Lat)\(\oplus\)iOU(Alt) & -6286.309 & 119.774\\
5 & Lon--Lat--Alt & ifOU(Lon,Lat)\(\oplus\)ifOU(Alt) & -4295.532 & iOU(Lon)\(\oplus\)iOU(Lat)\(\oplus\)iOU(Alt) & -4187.566 & 107.966\\
\bottomrule
\end{tabularx}
\caption{Criterion-specific comparison of the best pure mifOU and miOU specifications. The dependence structure is optimized separately within each temporal family. ``Difference'' is the miOU score minus the mifOU score, so positive values favor estimated Hurst parameters. Every entry is obtained from a separate global maximum-likelihood fit.}
\label{tab:familyaic}
\end{table}
\end{landscape}

Both criteria favor mifOU over the \(H_i=1/2\) miOU restriction for every trajectory. The smallest AICc difference is \(16.704\) units for Bat~2, and the differences exceed \(100\) units for Bats~4--5. Dependence is optimized independently within both temporal families.

For Bat~3, AIC selects the full three-dimensional ifOU model, whereas AICc selects a correlated ifOU Longitude--Latitude block with independent ifOU Altitude. Both criteria still favor estimated Hurst parameters over the miOU restriction; the information criteria alone do not determine the corresponding memory classification.

\subsubsection{Cross-coordinate dependence after profiling the Hurst parameters}

The second comparison asks whether correlation improves fit once temporal flexibility has been admitted. Here iOU is included as the \(H=1/2\) special case of the ifOU family rather than as a separate temporal family. Within each dependence class, an independent component may therefore select either an estimated \(H\) or the fixed value \(H=1/2\). For Bats~1--2, Table~\ref{tab:3daic} compares independent coordinates with a correlated Longitude--Latitude block. For Bats~3--5, it compares all coordinates independent, each of the three separately maximized two-dimensional blocks with the remaining coordinate independent, and a separately maximized fully correlated three-dimensional model.

The specification shown for each class is the one with the smallest ordinary AIC after a separate optimization over all parameters in that class. The same specification also minimizes ordinary AICc within every class in this data set. Thus, the comparison does not hold the marginal estimates fixed while adding a correlation parameter, and the full-model estimates are not restricted to the reduced-model estimates.

For Bats~1--2, adding a Longitude--Latitude correlation does not offset the extra dependence parameter. The AIC differences are \(1.151\) and \(0.436\), and the corresponding AICc differences are \(1.944\) and \(1.277\).
\begin{landscape}
\begin{table}[p]
\centering\scriptsize
\setlength{\tabcolsep}{3.2pt}
\renewcommand{\arraystretch}{1.04}
\begin{tabularx}{\linewidth}{@{}cc>{\raggedright\arraybackslash}p{.255\linewidth}crrrr>{\raggedright\arraybackslash}X@{}}
\toprule
Bat & Dependence class & Selected OU-position specification & \(k\) & AIC & \(\Delta\)AIC & AICc & \(\Delta\)AICc & Active correlations\\
\midrule
1 & independent & ifOU(Lon)\(\oplus\)ifOU(Lat) & 6 & \textbf{-527.999} & 0.000 & \textbf{-525.845} & 0.000 & zero (independent)\\
 & Lon--Lat correlated & ifOU(Lon,Lat) & 7 & -526.848 & 1.151 & -523.901 & 1.944 & \(\widehat\rho_{\rm Lon,Lat}=-0.147\)\\
\cmidrule(lr){1-9}
2 & independent & ifOU(Lon)\(\oplus\)ifOU(Lat) & 6 & \textbf{-551.618} & 0.000 & \textbf{-549.348} & 0.000 & zero (independent)\\
 & Lon--Lat correlated & ifOU(Lon,Lat) & 7 & -551.182 & 0.436 & -548.071 & 1.277 & \(\widehat\rho_{\rm Lon,Lat}=-0.211\)\\
\cmidrule(lr){1-9}
3 & independent & ifOU(Lon)\(\oplus\)ifOU(Lat)\(\oplus\)iOU(Alt) & 8 & -1557.730 & 27.445 & -1554.961 & 26.684 & zero (independent)\\
 & Lon--Lat correlated & ifOU(Lon,Lat)\(\oplus\)iOU(Alt) & 9 & \textbf{-1585.175} & 0.000 & \textbf{-1581.646} & 0.000 & \(\widehat\rho_{\rm Lon,Lat}=0.639\)\\
 & Lon--Alt correlated & ifOU(Lon,Alt)\(\oplus\)ifOU(Lat) & 10 & -1555.515 & 29.660 & -1551.115 & 30.531 & \(\widehat\rho_{\rm Lon,Alt}=0.110\)\\
 & Lat--Alt correlated & iOU(Lat,Alt)\(\oplus\)ifOU(Lon) & 8 & -1561.443 & 23.732 & -1558.674 & 22.972 & \(\widehat\rho_{\rm Lat,Alt}=0.459\)\\
 & full 3D & ifOU(Lon,Lat,Alt) & 12 & -1584.285 & 0.890 & -1577.785 & 3.860 & \((0.631,0.183,0.338)\) in Lon--Lat, Lon--Alt, Lat--Alt order\\
\cmidrule(lr){1-9}
4 & independent & ifOU(Lon)\(\oplus\)ifOU(Lat)\(\oplus\)ifOU(Alt) & 9 & -6315.345 & 91.699 & -6314.562 & 91.521 & zero (independent)\\
 & Lon--Lat correlated & ifOU(Lon,Lat)\(\oplus\)ifOU(Alt) & 10 & \textbf{-6407.044} & 0.000 & \textbf{-6406.083} & 0.000 & \(\widehat\rho_{\rm Lon,Lat}=0.579\)\\
 & Lon--Alt correlated & ifOU(Lon,Alt)\(\oplus\)ifOU(Lat) & 10 & -6313.855 & 93.189 & -6312.894 & 93.189 & \(\widehat\rho_{\rm Lon,Alt}=0.026\)\\
 & Lat--Alt correlated & ifOU(Lat,Alt)\(\oplus\)ifOU(Lon) & 10 & -6315.193 & 91.850 & -6314.233 & 91.850 & \(\widehat\rho_{\rm Lat,Alt}=0.040\)\\
 & full 3D & ifOU(Lon,Lat,Alt) & 12 & -6404.460 & 2.583 & -6403.086 & 2.997 & \((0.577,0.026,0.042)\) in Lon--Lat, Lon--Alt, Lat--Alt order\\
\cmidrule(lr){1-9}
5 & independent & ifOU(Lon)\(\oplus\)ifOU(Lat)\(\oplus\)iOU(Alt) & 8 & -4292.563 & 5.226 & -4291.634 & 4.986 & zero (independent)\\
 & Lon--Lat correlated & ifOU(Lon,Lat)\(\oplus\)iOU(Alt) & 9 & \textbf{-4297.789} & 0.000 & \textbf{-4296.620} & 0.000 & \(\widehat\rho_{\rm Lon,Lat}=-0.194\)\\
 & Lon--Alt correlated & ifOU(Lon,Alt)\(\oplus\)ifOU(Lat) & 10 & -4289.739 & 8.050 & -4288.301 & 8.319 & \(\widehat\rho_{\rm Lon,Alt}=-0.001\)\\
 & Lat--Alt correlated & ifOU(Lat,Alt)\(\oplus\)ifOU(Lon) & 10 & -4289.785 & 8.004 & -4288.348 & 8.273 & \(\widehat\rho_{\rm Lat,Alt}=0.014\)\\
 & full 3D & ifOU(Lon,Lat,Alt) & 12 & -4292.965 & 4.825 & -4290.898 & 5.722 & \((-0.195,-0.001,0.011)\) in Lon--Lat, Lon--Alt, Lat--Alt order\\
\bottomrule
\end{tabularx}
\caption{Dependence-structure comparison within the OU-position family after profiling the Hurst parameters, with iOU included as the \(H=1/2\) special case. AIC and AICc are ordinary, uncalibrated scores from the common multivariate likelihood. Each row is a separate global fit and reports the lowest-AIC temporal specification within that dependence class. For every two-dimensional block, the remaining observed coordinate is retained as an independent component.}
\label{tab:3daic}
\end{table}
\end{landscape}

For Bat~3, the Longitude--Latitude block improves ordinary AIC by \(27.445\) and ordinary AICc by \(26.684\) relative to independence. The Longitude--Altitude and Latitude--Altitude blocks have \(\Delta\mathrm{AIC}=29.660\) and \(23.732\), respectively. The full model has \(\Delta\mathrm{AIC}=0.890\) and \(\Delta\mathrm{AICc}=3.860\) relative to the reduced model, with correlations \((0.631,0.183,0.338)\).

Bat~4 provides the clearest evidence for correlation. The Longitude--Latitude block improves AIC by \(91.699\) and AICc by \(91.521\) over independence. The Longitude--Altitude and Latitude--Altitude blocks are \(93.189\) and \(91.850\) AIC units above it. The full model is \(2.583\) AIC and \(2.997\) AICc units above it, and its correlations involving Altitude are \(0.026\) and \(0.042\).

For Bat~5, the restricted Longitude--Latitude model has the smallest AIC and AICc. The independent, Longitude--Altitude, Latitude--Altitude, and full models lie \(5.226\), \(8.050\), \(8.004\), and \(4.825\) AIC units above it. This restricted conclusion is weaker than for Bats~3--4, and the complete seven-family ranking selects an independent marginal product for Bat~5.

Together, Tables~\ref{tab:familyaic} and~\ref{tab:3daic} establish the two conclusions separately. Estimated Hurst parameters improve penalized fit relative to the miOU restriction for all five trajectories after optimizing dependence. Within this restricted OU-position comparison, cross-coordinate correlation is not selected for Bats~1--2 but is supported primarily in the Longitude--Latitude pair for Bats~3--5. In the complete seven-family ranking, the more flexible independent marginal combination remains selected for Bat~5.

\subsubsection{Components of the half-minute calibrated AIC}

Table~\ref{tab:aic-components} gives the numerical components of the calibrated AIC for the three leading half-minute models. For a correlated candidate, the global-refit increment is the difference between separate maximum-likelihood fits of that model and its independent counterpart with the same coordinate-specific families. This difference is added to the corresponding reference marginal AIC sum. A negative increment means that the correlated candidate improves the common-fit AIC relative to its independently fitted counterpart.

\begin{landscape}
\begin{table}[p]
\centering\tiny
\setlength{\tabcolsep}{3pt}
\renewcommand{\arraystretch}{1.03}
\begin{tabularx}{\linewidth}{@{}cc>{\raggedright\arraybackslash}X>{\raggedright\arraybackslash}p{.12\linewidth}rrr@{}}
\toprule
Bat & Rank & Model \(C\) & Class & Marginal AIC sum & Global-refit increment & \(\mathrm{AIC}_{\mathrm{cal}}(C)\)\\
\midrule
1 & 1 & fBM(Lon)\(\oplus\)fBM(Lat) & independent & -532.067 & 0.000 & -532.067\\
 & 2 & ifOU(Lon)\(\oplus\)fBM(Lat) & independent & -530.042 & 0.000 & -530.042\\
 & 3 & fBM(Lon)\(\oplus\)ifOU(Lat) & independent & -530.024 & 0.000 & -530.024\\
\addlinespace
2 & 1 & fBM(Lon)\(\oplus\)fBM(Lat) & independent & -555.663 & 0.000 & -555.663\\
 & 2 & fBM(Lon)\(\oplus\)ifOU(Lat) & independent & -553.641 & 0.000 & -553.641\\
 & 3 & ifOU(Lon)\(\oplus\)fBM(Lat) & independent & -553.640 & 0.000 & -553.640\\
\addlinespace
3 & 1 & ifOU(Lon,Lat)\(\oplus\)fBM(Alt) & Lon--Lat block & -1558.336 & -27.445 & -1585.780\\
 & 2 & ifOU(Lon,Lat)\(\oplus\)iOU(Alt) & Lon--Lat block & -1557.730 & -27.445 & -1585.175\\
 & 3 & ifOU(Lon,Lat,Alt) & full 3D & -1556.316 & -27.970 & -1584.285\\
\addlinespace
4 & 1 & ifOU(Lon,Lat)\(\oplus\)\(\zeta\)(Alt) & Lon--Lat block & -6326.040 & -91.699 & -6417.739\\
 & 2 & ifOU(Lon,Lat)\(\oplus\)ifOU(Alt) & Lon--Lat block & -6315.345 & -91.699 & -6407.044\\
 & 3 & ifOU(Lon,Lat,Alt) & full 3D & -6315.345 & -89.116 & -6404.460\\
\addlinespace
5 & 1 & \(\zeta\)(Lon)\(\oplus\)Confluent(Lat)\(\oplus\)iOU(Alt) & independent & -4299.775 & 0.000 & -4299.775\\
 & 2 & \(\zeta\)(Lon)\(\oplus\)fBM(Lat)\(\oplus\)iOU(Alt) & independent & -4299.147 & 0.000 & -4299.147\\
 & 3 & \(\zeta\)(Lon)\(\oplus\)Confluent(Lat)\(\oplus\)ifOU(Alt) & independent & -4298.956 & 0.000 & -4298.956\\
\bottomrule
\end{tabularx}
\caption{Components of the half-minute calibrated AIC\@. Independent rows have zero global-refit increment. For a correlated row, the increment is the difference between separate global fits of that model and its independent counterpart with the same marginal families.}
\label{tab:aic-components}
\end{table}
\end{landscape}

The exhaustive ranking selects independent products for Bats~1--2 and~5 and a correlated Longitude--Latitude block for Bats~3--4. For Bat~5, the best correlated candidate ranks 8, \(1.986\) calibrated AIC units above the independent minimum.

For Bat~4, the reduced and full ifOU models have global-refit increments of \(-91.699\) and \(-89.116\), respectively. Their estimate vectors differ because the full model is fitted jointly rather than evaluated at the reduced-model estimates.

\subsubsection{Sensitivity to one- and two-minute aggregation}

The half-minute analysis remains the primary analysis. For sensitivity, observations were aggregated into nonoverlapping one- and two-minute bins, and the complete candidate set was refitted at each resolution, including the independent benchmarks, the correlated two-coordinate blocks with an independent singleton, and the full three-dimensional models. At one minute, the numbers of retained locations for Bats~1--5 are \(47,45,46,144,\) and \(104\); at two minutes they are \(44,39,31,81,\) and \(59\). Every comparison is within bat and resolution. Because all coarser-resolution candidates use the same likelihood implementation, Table~\ref{tab:resolution-top3} reports ordinary AIC values.

\begin{landscape}
\begin{table}[p]
\centering\tiny
\setlength{\tabcolsep}{3pt}
\renewcommand{\arraystretch}{1.00}
\begin{tabularx}{\linewidth}{@{}cccc>{\raggedright\arraybackslash}Xrrrr>{\raggedright\arraybackslash}p{.17\linewidth}@{}}
\toprule
Resolution & Bat & Rank & \(k\) & Candidate & AIC & \(\Delta\)AIC & Weight & Active correlation(s)\\
\midrule
1 min & 1 & 1 & 4 & fBM(Lon)\(\oplus\)fBM(Lat) & -521.805 & 0.000 & 0.331 & none\\
 &  & 2 & 6 & fBM(Lon)\(\oplus\)Confluent(Lat) & -521.324 & 0.481 & 0.260 & none\\
 &  & 3 & 5 & ifOU(Lon)\(\oplus\)fBM(Lat) & -519.799 & 2.007 & 0.121 & none\\
\addlinespace[1pt]
 & 2 & 1 & 4 & fBM(Lon)\(\oplus\)fBM(Lat) & -555.663 & 0.000 & 0.496 & none\\
 &  & 2 & 5 & fBM(Lon)\(\oplus\)ifOU(Lat) & -553.661 & 2.002 & 0.182 & none\\
 &  & 3 & 5 & ifOU(Lon)\(\oplus\)fBM(Lat) & -553.660 & 2.003 & 0.182 & none\\
\addlinespace[1pt]
 & 3 & 1 & 9 & ifOU(Lon,Lat)\(\oplus\)fBM(Alt) & -1062.327 & 0.000 & 0.275 & \(\widehat\rho_{\rm Lon,Lat}=0.673\)\\
 &  & 2 & 9 & ifOU(Lon,Lat)\(\oplus\)iOU(Alt) & -1062.196 & 0.130 & 0.257 & \(\widehat\rho_{\rm Lon,Lat}=0.673\)\\
 &  & 3 & 12 & ifOU(Lon,Lat,Alt) & -1061.088 & 1.238 & 0.148 & \((0.675,0.198,0.305)\) in Lon--Lat, Lon--Alt, Lat--Alt order\\
\addlinespace[1pt]
 & 4 & 1 & 10 & ifOU(Lon,Lat)\(\oplus\)ifOU(Alt) & -3345.461 & 0.000 & 0.312 & \(\widehat\rho_{\rm Lon,Lat}=0.496\)\\
 &  & 2 & 9 & ifOU(Lon,Lat)\(\oplus\)iOU(Alt) & -3345.408 & 0.053 & 0.304 & \(\widehat\rho_{\rm Lon,Lat}=0.496\)\\
 &  & 3 & 9 & ifOU(Lon,Lat)\(\oplus\)\(\zeta\)(Alt) & -3344.763 & 0.698 & 0.220 & \(\widehat\rho_{\rm Lon,Lat}=0.496\)\\
\addlinespace[1pt]
 & 5 & 1 & 6 & \(\zeta\)(Lon)\(\oplus\)\(\zeta\)(Lat)\(\oplus\)iOU(Alt) & -2421.712 & 0.000 & 0.388 & none\\
 &  & 2 & 7 & \(\zeta\)(Lon)\(\oplus\)\(\zeta\)(Lat)\(\oplus\)ifOU(Alt) & -2421.230 & 0.482 & 0.305 & none\\
 &  & 3 & 9 & ifOU(Lon,Alt)\(\oplus\)\(\zeta\)(Lat) & -2417.798 & 3.913 & 0.055 & \(\widehat\rho_{\rm Lon,Alt}=0.155\)\\
\midrule
2 min & 1 & 1 & 6 & fBM(Lon)\(\oplus\)Confluent(Lat) & -495.050 & 0.000 & 0.563 & none\\
 &  & 2 & 7 & ifOU(Lon)\(\oplus\)Confluent(Lat) & -493.050 & 2.000 & 0.207 & none\\
 &  & 3 & 4 & fBM(Lon)\(\oplus\)fBM(Lat) & -491.895 & 3.155 & 0.116 & none\\
\addlinespace[1pt]
 & 2 & 1 & 4 & fBM(Lon)\(\oplus\)fBM(Lat) & -481.557 & 0.000 & 0.473 & none\\
 &  & 2 & 5 & fBM(Lon)\(\oplus\)ifOU(Lat) & -479.556 & 2.002 & 0.174 & none\\
 &  & 3 & 5 & ifOU(Lon)\(\oplus\)fBM(Lat) & -479.542 & 2.015 & 0.173 & none\\
\addlinespace[1pt]
 & 3 & 1 & 9 & ifOU(Lon,Lat)\(\oplus\)fBM(Alt) & -647.661 & 0.000 & 0.298 & \(\widehat\rho_{\rm Lon,Lat}=0.653\)\\
 &  & 2 & 9 & ifOU(Lon,Lat)\(\oplus\)iOU(Alt) & -647.616 & 0.045 & 0.291 & \(\widehat\rho_{\rm Lon,Lat}=0.653\)\\
 &  & 3 & 10 & ifOU(Lon,Lat)\(\oplus\)ifOU(Alt) & -645.660 & 2.001 & 0.109 & \(\widehat\rho_{\rm Lon,Lat}=0.653\)\\
\addlinespace[1pt]
 & 4 & 1 & 9 & ifOU(Lon,Lat)\(\oplus\)iOU(Alt) & -1604.901 & 0.000 & 0.547 & \(\widehat\rho_{\rm Lon,Lat}=0.414\)\\
 &  & 2 & 10 & ifOU(Lon,Lat)\(\oplus\)ifOU(Alt) & -1603.032 & 1.869 & 0.215 & \(\widehat\rho_{\rm Lon,Lat}=0.414\)\\
 &  & 3 & 12 & ifOU(Lon,Lat,Alt) & -1602.291 & 2.610 & 0.148 & \((0.413,0.161,0.211)\) in Lon--Lat, Lon--Alt, Lat--Alt order\\
\addlinespace[1pt]
 & 5 & 1 & 6 & \(\zeta\)(Lon)\(\oplus\)\(\zeta\)(Lat)\(\oplus\)iOU(Alt) & -1155.391 & 0.000 & 0.242 & none\\
 &  & 2 & 7 & \(\zeta\)(Lon)\(\oplus\)\(\zeta\)(Lat)\(\oplus\)ifOU(Alt) & -1154.601 & 0.790 & 0.163 & none\\
 &  & 3 & 9 & ifOU(Lon,Alt)\(\oplus\)\(\zeta\)(Lat) & -1153.631 & 1.760 & 0.101 & \(\widehat\rho_{\rm Lon,Alt}=0.248\)\\
\bottomrule
\end{tabularx}
\caption{Three lowest ordinary AIC values after separate global optimization at one- and two-minute resolutions. Weights are normalized over 51 candidates for Bats~1--2 and 387 candidates for Bats~3--5.}
\label{tab:resolution-top3}
\end{table}
\end{landscape}

Under ordinary AIC, Bats~1--2 retain independent horizontal coordinates at both coarser resolutions. Bat~1 selects fBM for both coordinates at one minute and fBM Longitude with Confluent Latitude at two minutes; Bat~2 retains fBM for both coordinates. Bat~3 retains the correlated Longitude--Latitude block with independent fBM Altitude; its iOU Altitude counterpart is only \(0.130\) and \(0.045\) AIC units higher at one and two minutes. Bat~4 retains the horizontal block while the selected Altitude family changes from \(\zeta\) at half a minute to ifOU at one minute and iOU at two minutes. Bat~5 remains independent, with \(\zeta\) selected for both horizontal coordinates at one and two minutes. The selected dependence pattern is stable under temporal aggregation, although several marginal choices change.

\subsubsection{Penalty sensitivity across resolutions}

For the half-minute comparison, define the implied fit term
\[
-2\ell_{\mathrm{cal}}
=
\mathrm{AIC}_{\mathrm{cal}}-2k
\]
and the derived penalty-sensitivity scores
\[
\mathrm{AICc}_{\mathrm{cal}}(m)
=
\mathrm{AIC}_{\mathrm{cal}}
+
\frac{2k(k+1)}{m-k-1},
\qquad
\mathrm{BIC}_{\mathrm{cal}}(m)
=
-2\ell_{\mathrm{cal}}+k\log m.
\]
These half-minute quantities change only the complexity penalty attached to the calibrated fit term; they are not presented as small-sample criteria newly derived for irregular multivariate Gaussian trajectories. For the direct one- and two-minute fits, ordinary AICc and BIC are computed from the common likelihood implementation. In every case, \(m=n-1\) counts retained transitions once rather than multiplying the sample size by the number of simultaneous coordinates. Table~\ref{tab:criterion-sensitivity-supp} reports the selected model under each penalty.

\begin{landscape}
\begin{table}[p]
\centering\scriptsize
\setlength{\tabcolsep}{4pt}
\renewcommand{\arraystretch}{1.04}
\begin{tabularx}{\linewidth}{@{}cc>{\raggedright\arraybackslash}X>{\raggedright\arraybackslash}X>{\raggedright\arraybackslash}X@{}}
\toprule
Resolution & Bat & AIC winner & AICc winner & BIC winner\\
\midrule
0.5 min & 1 & A & A & A\\
 & 2 & A & A & A\\
 & 3 & B & B & B\\
 & 4 & C & C & C\\
 & 5 & D & D & E\\
\midrule
1 min & 1 & A & A & A\\
 & 2 & A & A & A\\
 & 3 & B & B & B\\
 & 4 & F & G & G\\
 & 5 & H & H & H\\
\midrule
2 min & 1 & I & I & A\\
 & 2 & A & A & A\\
 & 3 & B & B & B\\
 & 4 & G & G & G\\
 & 5 & H & H & H\\
\bottomrule
\end{tabularx}
\caption{Criterion winners by temporal resolution. At half-minute resolution, AIC, AICc, and BIC denote the calibrated penalty-sensitivity scores; at one and two minutes they denote ordinary criteria from the common likelihood implementation. A: fBM(Lon)\(\oplus\)fBM(Lat); B: ifOU(Lon,Lat)\(\oplus\)fBM(Alt); C: ifOU(Lon,Lat)\(\oplus\)\(\zeta\)(Alt); D: \(\zeta\)(Lon)\(\oplus\)Confluent(Lat)\(\oplus\)iOU(Alt); E: \(\zeta\)(Lon)\(\oplus\)fBM(Lat)\(\oplus\)iOU(Alt); F: ifOU(Lon,Lat)\(\oplus\)ifOU(Alt); G: ifOU(Lon,Lat)\(\oplus\)iOU(Alt); H: \(\zeta\)(Lon)\(\oplus\)\(\zeta\)(Lat)\(\oplus\)iOU(Alt); I: fBM(Lon)\(\oplus\)Confluent(Lat).}
\label{tab:criterion-sensitivity-supp}
\end{table}
\end{landscape}

At half-minute resolution, the penalties agree for Bats~1--4, while Bat~5 changes from Confluent to fBM Latitude under BIC\@. At one minute, Bat~4 changes from ifOU Altitude under AIC to iOU Altitude under AICc and BIC; their AIC separation is only \(0.053\). At two minutes, Bat~1 changes from Confluent to fBM Latitude under BIC\@. These changes quantify complexity-penalty sensitivity without changing the selected cross-coordinate dependence pattern.

\end{document}